\theoremstyle{plain}
\pgfplotsset{compat=1.14}
\newcommand*{\mailto}[1]{\href{mailto:#1}{\nolinkurl{#1}}}
\DeclareMathOperator{\arcosh}{arcosh}
\newcommand\restr[2]{{
  \left.\kern-\nulldelimiterspace 
  #1 
  \vphantom{\big|} 
  \right|_{#2} 
  }}
\DeclarePairedDelimiter{\floor}{\lfloor}{\rfloor}
\newtheorem{theorem}{Theorem}[section]
\newtheorem{definition}[theorem]{Definition}
\newtheorem{assumption}[theorem]{Assumption}
\newtheorem{lemma}[theorem]{Lemma}
\newtheorem{corollary}[theorem]{Corollary}
\newtheorem{proposition}[theorem]{Proposition}
\newtheorem{remark}[theorem]{Remark}
\newcommand{\R}{{\mathbb R}}
\newcommand{\Z}{{\mathbb Z}}
\newcommand{\C}{{\mathbb C}}
\newcommand{\nn}{\nonumber}
\newcommand{\be}{\begin{equation}}
\newcommand{\ee}{\end{equation}}
\newcommand{\bea}{\begin{eqnarray}}
\newcommand{\eea}{\end{eqnarray}}
\newcommand{\btheo}{\begin{theorem}}
\newcommand{\etheo}{\end{theorem}}
\newcommand{\ol}{\overline}
\newcommand{\ti}{\tilde}
\newcommand{\hb}{\hbar}
\newcommand{\im}{\Im}
\newcommand{\asympt}{\mathcal{O}}
\newcommand{\var}{\mathcal{V}}
\newcommand{\Ai}{\operatorname{Ai}}
\newcommand{\Bi}{\operatorname{Bi}}
\newcommand{\W}{\mathcal{W}}
\newcommand{\esf}{\mathsf{E}}
\newcommand{\msf}{\mathsf{M}}
\newcommand{\nsf}{\mathsf{N}}
\newcommand{\xsf}{\mathsf{X}}
\newcommand{\rsf}{\boldsymbol\rho}
\newcommand{\musf}{\boldsymbol\mu}
\newcommand{\eps}{\varepsilon}
\newcommand{\vphi}{\varphi}
\newcommand{\s}{\sigma}
\newcommand{\lam}{\lambda}
\newcommand{\g}{\gamma}
\newcommand{\G}{\Gamma}
\newcommand{\om}{\omega}
\newcommand{\Om}{\Omega}
\newcommand{\z}{\zeta}
\newcommand{\m}{\mu}
\newcommand{\al}{\alpha}
\newcommand{\ta}{\tau}
\newcommand{\ps}{\psi}
\newcommand{\8}{\theta}
\newcommand{\thv}{\vartheta}
\newcommand{\ro}{\rho}
\newcommand{\fisf}{\mathsf{\Phi}}
\newcommand{\de}{\delta}
\newcommand{\La}{\Lambda}
\numberwithin{equation}{section}
\begin{document}

\title[SEMICLASSICAL WKB PROBLEM FOR THE
NON-SELF-ADJOINT DIRAC OPERATOR]
{SEMICLASSICAL WKB PROBLEM FOR THE
NON-SELF-ADJOINT DIRAC OPERATOR WITH 
A DECAYING POTENTIAL }

\author[N. Hatzizisis]{Nicholas Hatzizisis $^{\dag}$}
\address{$^{\dag}$ Mathematics Building, University of Crete \\
700 13 Voutes, Greece}
\email{\mailto{nhatzitz@gmail.com}}
\urladdr{\url{http://www.nikoshatzizisis.wordpress.com/home/}}

\author[S. Kamvissis]{Spyridon Kamvissis $^{\ddag}$}
\address{$^{\ddag}$ Mathematics Building, University of Crete \\
700 13 Voutes, Greece}
\email{\mailto{spyros@tem.uoc.gr}}
\urladdr{\url{http://www.tem.uoc.gr/~spyros/}}

\thanks{}

\keywords{}

\subjclass[2000]{}

\bigskip

\begin{abstract}
In this paper we examine the semiclassical behavior of the scattering 
data of a 
non-self-adjoint Dirac operator with a fairly smooth -but not necessarily 
analytic- potential decaying at infinity. In particular, using ideas and 
methods going back to Langer and Olver, we provide a rigorous semiclassical 
analysis of the scattering coefficients, the Bohr-Sommerfeld condition for 
the location of the eigenvalues and their corresponding norming constants. 
Our analysis is motivated by the potential applications to the focusing 
cubic NLS equation, in view of the well-known fact discovered by Zakharov 
and Shabat that the spectral analysis of the Dirac operator is the basis 
of the solution of the NLS equation via inverse scattering theory. 
This paper complements and extends a previous work of Fujii\'e and the 
second author, which considered a more restricted problem for a strictly 
analytic potential.
\end{abstract}

\maketitle

\section{Introduction}

Consider the initial value problem (IVP) of the \textit{one-dimensional 
focusing nonlinear Schr\"odinger equation} (focusing NLS) for the complex 
field $u(x,t)$, i.e.
\be\label{ivp-nls}
\begin{cases}
i\hb\partial_t u+\frac{\hb^2}{2}\partial_x^2 u+|u|^2u=0,
\quad (x,t)\in\R\times\R\\\
u(x,0)=A(x),
\quad x\in\R
\end{cases}
\ee
for a real valued function $A$ and
a fixed positive number $\hb$.

\textit{Zakharov} and \textit{Shabat} in \cite{z+s_1972} have proved 
back in 1972 that the focusing
NLS equation is integrable via the \textit{Inverse Scattering Transform} 
(IST). A crucial step of the method is the analysis of the following 
\textit{Zakharov-Shabat (or Dirac) eigenvalue problem}
\be\label{zs-scattering}
\hb
\begin{bmatrix}
v_{1}'(x,\lam,\hb)\\
v_{2}'(x,\lam,\hb)
\end{bmatrix}=
\begin{bmatrix}
-i\lam & A(x) \\
-A(x) & i\lam
\end{bmatrix}
\begin{bmatrix}
v_{1}(x,\lam,\hb)\\
v_{2}(x,\lam,\hb)
\end{bmatrix}
\ee 
where $\lam\in\C$ is a “spectral parameter"; here prime denotes differentiation 
with respect to $x$.

Now let us suppose that $\hb$ is small compared to the $x,t$
we are interested in. The question raised is then: what is the behavior of solutions 
of the IVP (\ref{ivp-nls}) as $\hb\downarrow0$? 
The rigorous analysis of this problem was initiated in \cite{kmm}.
Because of the work of Zakharov and Shabat, 
the first step in the study of this IVP in the 
\textit{semiclassical limit}  $\hb\downarrow0$ has to be the \textit{asymptotic 
spectral analysis} of the \textit{scattering problem} (\ref{zs-scattering}) as 
$\hb\downarrow0$, keeping the function $A$ fixed.

The eigenvalue (EV) problem (\ref{zs-scattering}) is not and cannot 
be written as an EV
problem for a self-adjoint operator. What we  study here is a \textit{semiclassical
WKB problem} (or \textit{LG problem}) for the corresponding \textit{non-self-adjoint 
Dirac operator}
with \textit{potential} $A$. 

The question of the \textit{semiclassical approximation of the scattering data} has a deep significance
in view of the \textit{instability of the NLS problem} which appears in many levels.
In fact even away from the semiclassical regime, the  focusing
 NLS is the main model for the so-called
\say{\textit{modulational instability}} (as in \cite{bf}), 
although for positive fixed $\hb$ the initial value problem is well-posed.

Semiclassically the  instabilities become more pronounced.
One way to see this, is related to the underlying ellipticity
of the formal semiclassical limit.
To be more specific, consider the well-known \textit{Madelung transformation} 
\begin{equation}\nn
\begin{cases}
\rsf = |u|^2\\
\musf = \hb \im\,(u^* u_x)
\end{cases}
\end{equation}
where $u^*$ denotes the complex conjugate of $u$.
Then the IVP (\ref{ivp-nls}) becomes
\begin{equation}\nn
\begin{cases}
\rsf_t  +\musf_x = 0\\
\musf_t + \Big(\frac{\musf^2}{\rsf} + \frac{\rsf^2}{2}\Big)_x = 

\frac{\hb^2}{4} \partial_x[\rsf (\log \rsf)_{xx}]
\end{cases}
\end{equation}
with initial data $\rsf(x,0) =  |u|^2(x,0)=A^2(x)$ and $\musf(x,0)=0$.

The \textit{formal limit} as $\hb\downarrow0$ is
\begin{equation}\nn
\begin{cases}
\rsf_t  +\musf_x = 0\\
\musf_t + \Big(\frac{\musf^2}{\rsf} + \frac{\rsf^2}{2}\Big)_x = 0
\end{cases}
\end{equation}
with initial data $\rsf(x,0) =  |u|^2(x,0)=A^2(x)$ and $\musf(x,0)=0$.
This is an IVP for an \textit{elliptic system} of equations and so one expects that small 
perturbations of the initial data (independent of $\hb$) can lead to large changes in 
the solution, at any given time.

Instabilities appear also at different stages of the analysis:
the spectral analysis of the related non-self-adjoint Dirac operator,
the related \textit{equilibrium measure problem} (see \cite{kr}),  
the related \textit{Whitham equations} (cf. \cite{kmm}) which are also elliptic, 
the possibility of the appearance of \textit{rogue waves} (see \cite{bt})
and even the \textit{numerical studies} of the problem (as in \cite{mk}).

The semiclassical approximation of the scattering data results in
small changes of the initial data; changes $that ~depend~on~\hb$.
It is a priori unclear whether these small changes  can have a significant effect in the semiclassical 
asymptotics of the solution of the IVP (\ref{ivp-nls}) as $\hb\downarrow0$.
Our ultimate aim is to provide a proof that they do not.

Our work complements the
paper \cite{fujii+kamvi} of \textit{S. Fujii\'e} and the second author where
the potential is considered to be a \textit{real analytic bell-shaped function}
and in which the so-called \textit{exact WKB method} (cf. \cite{ecalle1984}, 
\cite{fln}, \cite{gg} and \cite{voros1983}) is employed.
In this work, we instead suppose that the bell-shaped potential function $A$ has 
\textit{only some prescribed smoothness} which we specify explicitly in \S\ref{assume}.
Our methods are necessarily different since the  exact WKB method requires analyticity.
Our ideas are rather influenced by the papers \cite{yafa2011} and \cite{yafa2018} of 
\textit{D. R. Yafaev} where an analogous problem is treated for the 
self-adjoint Schr\"odinger operator, which in turn rely on  
works \cite{olver1975} and \cite{olver1997} of \textit{F. W. J. Olver}
\footnote{Olver's work draws upon the studies of 
\textit{N. D. Kazarinoff, R. E. Langer} and \textit{R.W. McKelvey}
(see the references in \cite{olver1975}).}.
More precisely, Yafaev uses results mainly from \cite{olver1997} while we rely 
heavily on \cite{olver1975} as well. 

The present paper is arranged as follows. In section \S\ref{assume} we
state all the necessary assumptions on the potential function $A$ so that
Olver's work can be applied in our case. In section \S\ref{from-schrodi-to-dirac} 
we introduce a simple transformation that maps the Dirac problem to an equivalent Schr\"odinger
problem. Sections \S\ref{liouville-transform} and  \S\ref{continuity} show how the 
\textit{Liouville transformation} changes our Schr\"odinger equation
into one containing an \textit{error  term}  which is a continuous
function on the \textit{Liouville plane}.
By controlling this error term
in section \S\ref{approximate-solutions}  we obtain approximate 
solutions expressed with the help of \textit{Parabolic Cylinder Functions} (PCFs) 
in a \textit{Liouville variable} $\z\geq0$. 

In section \S\ref{ex} we illustrate the previously mentioned 
results for the special case where the potential function is 
$x\mapsto\frac{1}{1+x^2}$. Then in \S\ref{asympt-behave-sols} we
find the asymptotic behavior of the approximants introduced in 
\S\ref{approximate-solutions}.
In \S\ref{connection-formulas} we present some \textit{connection formulas}
that relate the approximate solutions for $\z\geq0$ to the ones for $\z\leq0$.
The significance of this connection becomes clear in \S\ref{quanta-evs}
where we find \textit{Bohr-Sommerfeld quantization conditions} for
the EVs of our problem, uniformly away from zero.

Next in \S\ref{near-zero-evs} we study the EVs that lie  closer to zero and
we are able to arrive at uniform  bounds all the way to zero. 
We only do this here for two specific (but quite inclusive) families 
of  functions $A$. It may be that more general conditions can alternatively
be posed instead; conditions
that would ensure the same results for a wide class of data.
But we are not able to do this at this point. Our situation is 
somewhat comparable to 
\cite{fujii+kamvi} where extra (general but complicated) conditions had to be 
added, in order to assure a good behavior of the EVs near $0$. 
Here however, the formula 
(\ref{psi-h}) we have for the function $\psi$,
makes it very easy to check if indeed the behavior of the EVs 
near zero, is good enough for any family of potentials $A$ defined 
by explicitly prescribed asymptotics at infinity. 

Section \S\ref{scattering} is concerned with the amplitudes of 
the transmission and reflection coefficients (both away and close to zero).
Unlike \cite{fujii+kamvi}, we do not need any extra conditions on $A$ to 
control the reflection coefficient (cf. \S\ref{scattering}) near $0$ 
(even though our estimates are somewhat weaker, they are still good enough 
for applications). Finally, we make some concluding remarks in section 
\S \ref{conclusion}.

For the sake of the reader, as the approximate solutions to our problems 
involve \textit{Airy} and Parabolic Cylinder Functions,
we present all the necessary results concerning these functions
in sections \ref{airy_functions} and \ref{parabolic-cylinder-functions} 
of the appendix. There, the reader can also find a section
(section \ref{exist-proof}) on a \textit{theorem concerning integral equations} 
which is the primary tool in the proof of the main Theorem \ref{main-thm}.

Notationwise, a bar over a letter (or number)
does \textit{not} denote complex conjugation. For complex conjugation 
we have reserved the superscript \say{$*$}; i.e. $z^*$ (and not $\bar{z}$) 
is the complex conjugate of $z$. The letter $C$ denotes generically 
a positive constant (appearing in estimates) and $\mathbb{R}_+$, 
$\mathbb{R}_-$ represent the sets of positive and negative numbers 
respectively. Also, we denote the \textit{Wronskian} 
of two functions $f$, $g$ by $\W[f,g]$. Furthermore, we write $f\sim g$
when $\tfrac{f}{g}$ tends to $1$. Finally, the notation $f^2(x)$ denotes 
the square of the value of the function $f$ at $x$. Hence, the symbols 
$f^2(x)$ and $f(x)^2$ are used interchangeably and are \textit{not} to 
be confused with the composition $f\circ f$ of $f$ with itself.

\section{The Potential}
\label{assume}

In this section we state precise assumptions on
the potential function $A$ which are sufficient to ensure that all the 
techniques and methods
developed in the following sections can go through easily. 
In short, we
consider bell-shaped functions with some smoothness.
To be more precise, we assume the following. 
\begin{assumption}
\label{primary-assume}
The function $A$ satisfies
\begin{itemize}
\item
$A(x)>0$ for $x\in\mathbb{R}$
\item
$A(-x)=A(x)$ for $x\in\mathbb{R}$
\item
$A$ is in $C^4(\R)$ and of class $C^5$ in a neighborhood of $0$
\item
$xA'(x)<0$ for $x\in\mathbb{R}\setminus\{0\}$
\item
$A''(0)<0$; we set $0<A(0)=:A_{max}$
\item
there exists $\tau>0$ so that
\begin{align*}
A(x)=\mathcal{O}\Big(\tfrac{1}{|x|^{1+\tau}}\Big)\quad\text{as}\quad
x\to\pm\infty
\\
A'(x)=\mathcal{O}\Big(\tfrac{1}{|x|^{2+\tau}}\Big)\quad\text{as}\quad
x\to\pm\infty
\\
A''(x)=\mathcal{O}\Big(\tfrac{1}{|x|^{3+\tau}}\Big)\quad\text{as}\quad
x\to\pm\infty.
\end{align*}
\end{itemize}
\end{assumption}

\begin{figure}[H]
\centering
\includegraphics[scale=0.35]{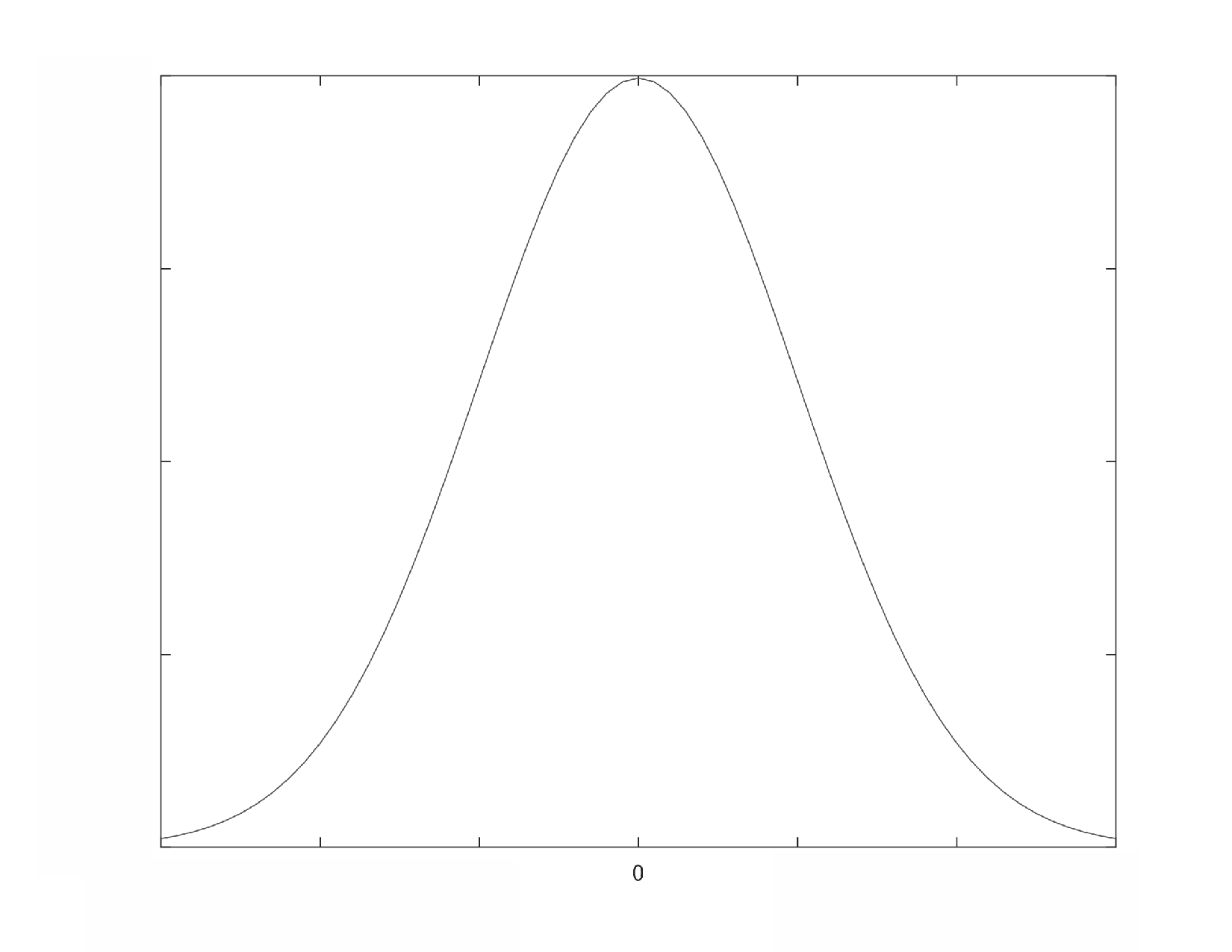}
\caption{A bell-shaped function.}
\end{figure}

Now let $\mu\in(0,A_{max}]\subset\mathbb{R}_+$. Observe that
\begin{itemize}
\item
for $\mu\in(0,A_{max})$ the equation $A(x)=\mu$ has exactly two solutions 
$x_{\pm}$ which of course depend on $\m$ and by the symmetry of $A$ satisfy 
$x_{\mp}=-x_{\pm}$. Furthermore, $A(x)>\mu$ for $x\in(x_{-},x_{+})$ 
and $A(x)<\mu$ for $x\in(-\infty,x_{-})\cup(x_{+},+\infty)$.
\item
when $\m=A_{max}$ the two points $x_\pm$ coalesce into one double at $x=0$.
\end{itemize}

We believe that neither the evenness assumption, nor the single local maximum 
assumption are strictly necessary. If evenness is not imposed, we have what 
\textit{Klaus} \& \textit{Shaw} call a “single lobe” potential 
(see \cite{klaus+shaw2002} and \cite{klaus+shaw2003}). Essentially our 
discussion in 
this article goes through mostly unaltered, since the results of Klaus \& Shaw 
mentioned below (in \S\ref{from-schrodi-to-dirac}) are still valid. 
One thing that changes is that the \textit{norming constants} 
are no more real, but we still have uniform estimates for them
(see Corollary \ref{norm-const}). If the single local maximum assumption 
is dropped we no more expect to have imaginary EVs. But we do expect to have 
EVs accumulate along curves as in sections \S\ref{quanta-evs}, 
\S\ref{near-zero-evs} and Bohr-Sommerfeld conditions to appear. 
A forthcoming paper will hopefully show how to handle more general cases.

\section{From Dirac to Schr\"odinger}
\label{from-schrodi-to-dirac}

As stated in the introduction, in this paper we examine the scattering
data for a Dirac operator. We start with an investigation of the EVs. 
Specifically, we study the EV problem
\be\label{ev-problem}
\mathfrak{D}_{\hbar}\mathbf{u}=\lambda\mathbf{u}
\ee
where $\mathfrak{D}_{\hbar}$ is the Dirac operator
\be\label{dirac}
\mathfrak{D}_{\hbar}:=
\begin{bmatrix}
i\hbar\partial_{x} & -iA \\
-iA & -i\hbar\partial_{x}
\end{bmatrix}
\ee
with $0<\hb\ll1$ a small parameter (\textit{Planck}), $A$ as in 
\S\ref{assume} and $\mathbf{u}=[u_{1}\hspace{3pt}u_{2}]^T$
is a function from $\mathbb{R}$ to $\mathbb{C}^2$. As usual, 
$\lambda\in\C$ plays the role of the spectral parameter. 

Let us make explicit what we mean when we discuss about the EVs of
the operator in (\ref{dirac}). We have the following definition. 
\begin{definition}
A number $\lambda\in\mathbb{C}$ is an \textbf{eigenvalue} of the operator 
$\mathfrak{D}_{\hbar}$ if equation (\ref{ev-problem}) has a 
non-trivial solution $\mathbf{u}\in L^{2}(\mathbb{R};\mathbb{C}^{2})$; 
that is
\begin{equation*}
0
<
\int_{-\infty}^{+\infty}\Big[|u_{1}(x)|^{2}+|u_{2}(x)|^{2}\Big]dx
<
+\infty.
\end{equation*}
\end{definition} 

The spectral characteristics of an operator like $\mathfrak{D}_{\hbar}$
having a potential $A$ satisfying the assumptions of \S\ref{assume} have 
been established in \cite{klaus+shaw2002} and \cite{klaus+shaw2003} 
by M. Klaus and J. K. Shaw. More precicely we know that
\begin{itemize}
\item
the continuous spectrum of $\mathfrak{D}_{\hbar}$ is the whole real 
line $\mathbb{R}$ and
\item
the EVs are simple, purely imaginary and symmetric with respect to 
the real axis; their imaginary part lying in $[-A_{max},A_{max}]$
\end{itemize}

The spectral facts above suggest writing $\lambda=i\mu$ for 
$0<\mu\leq A_{max}$ (due to symmetry). Hence, (\ref{ev-problem}) 
is written as
\be\label{z+sh}
\hbar
\begin{bmatrix}
u_{1}'(x,\mu,\hb)\\
u_{2}'(x,\mu,\hb)
\end{bmatrix}=
\begin{bmatrix}
\mu & A(x) \\
-A(x) & -\mu
\end{bmatrix}
\begin{bmatrix}
u_{1}(x,\mu,\hb)\\
u_{2}(x,\mu,\hb)
\end{bmatrix}
\ee 
where the prime denotes differentiation with respect to $x$.

Under the change of variables (cf. equation (4) in \cite{pdmiller2001})
\be\label{change-of-var-dirac-to-schrodi}
y_{\pm}=
\frac{u_{2}\pm u_{1}}{\sqrt{A\mp\mu}}
\ee
system (\ref{z+sh}) is equivalent to the following two independent 
equations
\be\label{schrodi_initial-version}
y_{\pm}''=
\bigg\{
\hbar^{-2}[\mu^2-A^2(x)]+
\frac{3}{4}\Big[\frac{A'(x)}{A(x)\mp\mu}\Big]^2-
\frac{1}{2}\frac{A''(x)}{A(x)\mp\mu}
\bigg\}
y_{\pm}
\ee
where we dropped the dependence of $y_{\pm}$ on $(x,\mu,\hb)$
for notational simplicity. Again, prime denotes differentiation 
with respect to $x$.

In (\ref{schrodi_initial-version}), we will only consider the 
equation with the lower index because  the term 
\be\nn
\frac{3}{4}\Big[\frac{A'(x)}{A(x)+\mu}\Big]^2-
\frac{1}{2}\frac{A''(x)}{A(x)+\mu}
\ee
has no singularities and  thus work with the equation
\be\label{schrodi_final-version}
\frac{d^2y}{dx^2}=
\bigg\{
\hbar^{-2}[\mu^2-A^2(x)]+
\frac{3}{4}\Big[\frac{A'(x)}{A(x)+\mu}\Big]^2-
\frac{1}{2}\frac{A''(x)}{A(x)+\mu}
\bigg\}
y.
\ee
Observe that the change of variables (\ref{change-of-var-dirac-to-schrodi}) 
with the \say{minus choice} does not alter the discrete spectrum.
Hence we are led to the following important fact.
\begin{proposition}
Finding the discrete spectrum of $\mathfrak{D}_{\hbar}$ in (\ref{dirac})
is equivalent to finding the values of  $\mu\in(0,A_{max}]$ for which
(\ref{schrodi_final-version}) has an $L^{2}(\mathbb{R})$ solution.
\end{proposition}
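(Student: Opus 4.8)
The plan is to show both inclusions of the equivalence by tracking what the substitution \eqref{change-of-var-dirac-to-schrodi} does to square-integrability. Recall that $\mu$ lies in the discrete spectrum of $\mathfrak{D}_{\hbar}$ precisely when \eqref{z+sh} has a nontrivial $L^2(\R;\C^2)$ solution $(u_1,u_2)$; I want to match this with the existence of an $L^2(\R)$ solution $y$ of \eqref{schrodi_final-version}. Since $0<\mu\le A_{max}$ and $A$ is continuous with $A(0)=A_{max}$, the weight $A(x)+\mu$ is bounded below by $\mu>0$ and above by $A_{max}+\mu$ on all of $\R$, so it is comparable to a positive constant; consequently $y:=y_-=(u_2-u_1)/\sqrt{A+\mu}$ satisfies $|y|\asymp |u_2-u_1|$ pointwise, and the algebraic computation already recorded in passing from \eqref{z+sh} to \eqref{schrodi_initial-version} shows $y$ solves \eqref{schrodi_final-version}.

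First I would do the forward direction. Suppose $\mu$ is an eigenvalue of $\mathfrak{D}_{\hbar}$, so \eqref{z+sh} has a solution with $u_1,u_2\in L^2(\R)$. Then $u_2-u_1\in L^2(\R)$, and since $1/\sqrt{A+\mu}$ is bounded, $y=(u_2-u_1)/\sqrt{A+\mu}\in L^2(\R)$; it solves \eqref{schrodi_final-version} by the derivation above. It remains to rule out $y\equiv 0$: if $u_2\equiv u_1$ then \eqref{z+sh} forces, by adding the two scalar equations, $\hbar u_1' = (\mu-A)u_1$ and $\hbar u_1' = -(A+\mu)u_1$ simultaneously, hence $2\mu\,u_1\equiv 0$, so $u_1\equiv u_2\equiv 0$, contradicting nontriviality. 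Thus $y$ is a nontrivial $L^2$ solution of \eqref{schrodi_final-version}.

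Conversely, suppose \eqref{schrodi_final-version} has a nontrivial $y\in L^2(\R)$. I would reverse the substitution: set $u_2-u_1=\sqrt{A+\mu}\,y$ and recover $u_2+u_1$ from the first-order system (equivalently, use the companion variable $y_+$), producing a solution $(u_1,u_2)$ of \eqref{z+sh}. Because $\sqrt{A+\mu}$ is bounded, $u_2-u_1\in L^2$; one then checks that $u_1+u_2$ is controlled as well — here the cleanest route is to note that \eqref{z+sh} expresses $u_1+u_2$ algebraically once $u_2-u_1$ and its derivative are known (from $\hbar(u_1+u_2)' = \mu(u_1-u_2)+A(u_1+u_2) \dots$, i.e. solving the linear system for the components), and to invoke the decay $A(x)=\asympt(|x|^{-1-\ta})$ together with the fact that an $L^2$ solution of a Schr\"odinger equation whose potential $\to \hbar^{-2}\mu^2>0$ at infinity decays exponentially, so there is no loss of integrability. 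Nontriviality transfers back by the same Wronskian/degeneracy argument as above.

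The main obstacle I anticipate is the converse direction — specifically, making rigorous that the reconstructed $(u_1,u_2)$ is genuinely in $L^2(\R;\C^2)$ and not merely that one linear combination is. The subtlety is that $u_1,u_2$ individually involve $u_1+u_2$, which is the ``other'' Liouville variable $y_+$ (solving the equation with the upper index), and one must argue that it cannot blow up. The assumptions of \S\ref{assume} — in particular $A,A',A''$ decaying like negative powers of $|x|$ and $\mu>0$ — guarantee that the effective potential in \eqref{schrodi_final-version} tends to the positive constant $\hbar^{-2}\mu^2$, so by standard ODE asymptotics (Levinson-type theorem) the subdominant solution decays exponentially and the dominant one grows exponentially; an $L^2$ solution must be (a multiple of) the recessive one, and then its partner $u_1+u_2$ is likewise recessive. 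I expect this is the only place where the decay hypotheses are actually used, everything else being the elementary observation that $A+\mu$ is bounded away from $0$ and $\infty$.
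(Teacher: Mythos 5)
Your proposal is correct and takes essentially the same route as the paper, which simply observes that the minus-choice substitution $y_-=(u_2-u_1)/\sqrt{A+\mu}$ does not alter the discrete spectrum; you supply exactly the details that observation needs, namely that $A+\mu$ is bounded between $\mu$ and $A_{max}+\mu$, that $u_1+u_2$ is recovered algebraically from $y$ and $y'$ via $\hbar(u_2-u_1)'=-(A+\mu)(u_1+u_2)$, and that an $L^2$ solution of (\ref{schrodi_final-version}) is recessive (exponentially decaying together with its derivative) because the potential tends to $\hbar^{-2}\mu^2>0$ with integrable remainder. One harmless slip: if $u_1\equiv u_2$, the first scalar equation of (\ref{z+sh}) gives $\hbar u_1'=(\mu+A)u_1$, not $(\mu-A)u_1$, but combined with $\hbar u_1'=-(A+\mu)u_1$ it still forces $u_1\equiv u_2\equiv0$, so your nontriviality argument stands.
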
 
 
Now, let us choose any $A_0$ such that $0<A_0<A_{max}$.
In equation (\ref{schrodi_final-version}) $x$ runs on the whole real 
line $\R$ and $\mu$ will play the role of a spectral parameter 
in $[A_{0},A_{max}]\subset\R_+$. For $\mu\in[A_{0},A_{max})$ the function 
$x\mapsto\mu^2-A^2(x)$ is non-vanishing on $\R$ except for two distinct 
simple zeros at $x=x_{-}$ and $x=x_{+}$ with $x_{-}<x_{+}$. In the
critical case $\mu=A_{max}$ the function $x\mapsto A_{max}^2-A^2(x)$ has
a single double zero at $x=0$. Both $x_{-}$, $x_{+}$ are continuous 
functions of the parameter $\mu$ and tend to zero as 
$\mu\uparrow A_{max}$.

\begin{figure}[H]
\centering
\begin{tikzpicture}

\begin{axis}[
    legend pos = north west,
    axis lines = none,
    xlabel = ,
    ylabel = {},
]
\addplot [
    domain=-4:4, 
    samples=100, 
    color=red,
]
{1/(1+x^2)};
\addlegendentry{$A(x)$}
\addplot [
    domain=-3:3, 
    samples=100, 
    color=blue,
    ]
    {1/2};

\addplot [
    domain=-3:3, 
    samples=100, 
    color=blue,
    ]
    {1/6};
    
\addplot [
    domain=-1:2, 
    samples=100, 
    color=blue,
    ]
    {1};    

\draw[scale=0.5,domain=-3:3,smooth,variable=\y,green]  plot ({6},{\y});

\draw[scale=0.5,domain=-0.05:0.25,dashed,variable=\y]  plot ({8.5},{\y});

\node[circle,inner sep=2pt,fill=black] at (1,.5) {};

\draw (-5,0) -- (5,0);

\draw (0,-1) -- (0,1.5);

\end{axis}

\draw (4.15,-0.5) node {$x=a$};

\draw (5.1,-0.2) node {$a_0$};

\draw (6,3) node {$y=\m$};

\draw (6,5.1) node {$A_{max}$};

\draw (6,1.1) node {$A_0$};

\end{tikzpicture}
\caption{The relationship between parameters $\m$ and $a$.}
\end{figure}
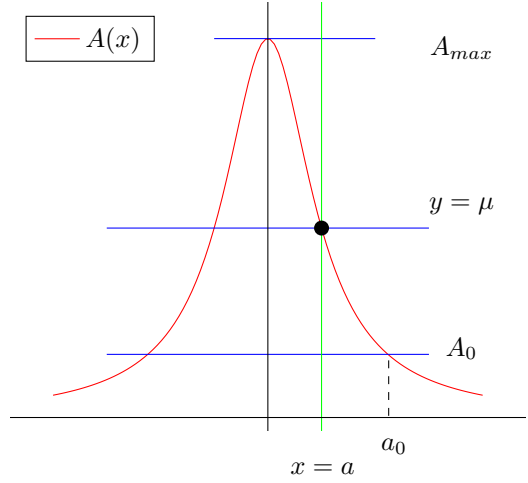

We introduce a change of variables for the spectral parameter $\mu$ in 
order to rely on the results from \cite{olver1975}. For this, 
we first define the function $B$ to be the restriction of $A$ on 
$[0,+\infty)$, i.e. $B=\restr{A}{[0,+\infty)}$ and note that, 
by the assumptions on $A$, the function $B$ is invertible. We set
\be\nn
a=x_{+}
\ee
so
\be\nn
A(x_{+})=\mu\Leftrightarrow
B(a)=\mu\Leftrightarrow
a=B^{-1}(\m).
\ee
Since $B$ is a decreasing function, so is $B^{-1}$ and we get
\be\nn
a\in B^{-1}\Big([A_{0},A_{max}]\Big)=[0,B^{-1}(A_{0})]=:[0,a_{0}].
\ee
Thus, the zeros of $x\mapsto A^{2}(a)-A^2(x)$ are located at
$x_{\pm}=\pm a$. Furthermore, the critical value of $a$ is now zero and
$a$ ranges over the compact interval $[0,a_{0}]$ (we should add that $A_{0}$ 
and consequently $a_{0}$ may be allowed to depend on $\hb$ as is the
case in \S\ref{near-zero-evs}).

With this new parameter, equation (\ref{schrodi_final-version}) 
is replaced by 
\be\label{final-schrodi}
\frac{d^2y}{dx^2}=[\hbar^{-2}f(x,a)+g(x,a)]y
\ee
where $f$ and $g$ satisfy
\be\label{v-new}
f(x,a)=A^2(a)-A^2(x)
\ee
and
\be\label{f-new}
g(x,a)=
\frac{3}{4}\Big[\frac{A'(x)}{A(x)+A(a)}\Big]^2-
\frac{1}{2}\frac{A''(x)}{A(x)+A(a)}.
\ee

We close this section with an important definition concerning
the zeros of $f$ in (\ref{v-new}).
\begin{definition}
The zeros (with respect to $x$) of the function 
\be\nn
f(x,a)=A^2(a)-A^2(x)
\ee
are called \textbf{turning ponts} (or \textbf{transition points}) of the
equation (\ref{final-schrodi}).
\end{definition}
Hence our equation facilitates two turning points at $x=\pm a$.

\section{The Liouville Transformation}
\label{liouville-transform}

In this section we introduce new variables $Y$ and $\z$ according to the 
Liouville transform
\be\nn
Y=\dot{x}^{-\frac{1}{2}}y
\ee
where the dot signifies differentiation with respect to $\z$. Equation 
(\ref{final-schrodi}) becomes
\be\label{schrodi-liouville-initial-form}
\frac{d^2Y}{d\z^2}=
\Big[\hbar^{-2}\dot{x}^2f(x,a)+\dot{x}^2g(x,a)+
\dot{x}^{\frac{1}{2}}\frac{d^2}{d\z^2}(\dot{x}^{-\frac{1}{2}})\Big]Y.
\ee
In our case, $f(\cdot,a)$ is negative in $(-a,a)$ and positive in
$(-\infty,-a)\cup(a,+\infty)$. Hence we prescribe
\be\label{our-case}
\dot{x}^2f(x,a)=\z^2-\alpha^2
\ee
where $\alpha\geq0$ is chosen in such a way that $x=-a$ corresponds to
$\z=-\alpha$ and $x=a$ to $\z=\alpha$ accordingly. Indeed, after integration,
(\ref{our-case}) yields
\be\label{our-case-integral-form}
\int_{-a}^{x}[-f(t,a)]^{\frac{1}{2}}dt=
\int_{-\alpha}^{\z}(\alpha^2-\tau^2)^{\frac{1}{2}}d\tau
\ee
provided that $-a\leq x\leq a$. Notice that by taking these integration 
limits, $-a$ corresponds to $-\alpha$. For the remaining correspondence, we 
require
\be\nn
\int_{-a}^{a}[-f(t,a)]^{\frac{1}{2}}dt=
\int_{-\alpha}^{\alpha}(\alpha^2-\tau^2)^{\frac{1}{2}}d\tau
\ee
and hence
\be\label{alpha}
\alpha^2=\frac{2}{\pi}\int_{-a}^{a}[-f(t,a)]^{\frac{1}{2}}dt.
\ee
For every fixed value of $\hb$, relation (\ref{alpha})  defines $\al$ as a 
continuous and increasing function of $a$ which vanishes as $a\downarrow0$
and equals $\al_0$ when $a=a_0$;  so $\al\in[0,\al_0]$.

Next, from (\ref{our-case-integral-form}) we find
\be\label{zeta-center}
\int_{-a}^{x}[-f(t,a)]^{\frac{1}{2}}dt=
\frac{1}{2}\al ^2\arccos\Big(-\frac{\z}{\al}\Big)+
\frac{1}{2}\z\big(\al ^2-\z ^2\big)^{\frac{1}{2}}
\quad\text{for}\quad
-a\leq x\leq a
\ee
with  the principal value choice for the inverse cosine taking values in 
$[0,\pi]$. For the remaining $x$-intervals, we integrate
(\ref{our-case}) to obtain
\be\label{zeta-left}
\int_{x}^{-a}[f(t,a)]^{\frac{1}{2}}dt=-
\frac{1}{2}\al ^2\arcosh\Big(-\frac{\z}{\al}\Big)-
\frac{1}{2}\z\big(\z ^2-\al ^2\big)^{\frac{1}{2}}
\quad\text{for}\quad
x\leq -a
\ee
and
\be\label{zeta-right}
\int_{a}^{x}[f(t,a)]^{\frac{1}{2}}dt=-
\frac{1}{2}\al ^2\arcosh\Big(\frac{\z}{\al}\Big)+
\frac{1}{2}\z\big(\z ^2-\al ^2\big)^{\frac{1}{2}}
\quad\text{for}\quad
x\geq a
\ee
with $\arcosh(x)=\ln\big(x+\sqrt{x^2-1}\big)$ for $x\geq1$.

Equations (\ref{zeta-center}), (\ref{zeta-left}) and (\ref{zeta-right}) 
show that $\z$ is a continuous and increasing function of $x$ in $\R$. 
Moreover, this shows that there is a one-to-one correspondence between 
these two variables. Finally, we substitute (\ref{our-case}) in 
(\ref{schrodi-liouville-initial-form}) and obtain
\be\label{schrodi-liouville-final-form}
\frac{d^2Y}{d\z^2}=
\big[\hb^{-2}(\z^2-\alpha^2)+\ps(\z,\al)\big]Y
\ee
where the \textit{error term} $\ps(\z,\al)$ is
\be\label{psi}
\ps(\z,\al)=
\dot{x}^2g(x,a)+
\dot{x}^{\frac{1}{2}}\frac{d^2}{d\z^2}(\dot{x}^{-\frac{1}{2}})
\ee
or equivalently
\begin{multline}\label{psi-equivalent}
\ps(\z,\al)=
\frac{1}{4}\frac{3\z^2 +2\al^2}{(\z^2 -\al^2)^2}
+\frac{1}{16}\frac{\z^2 -\al^2}{f^3(x,a)}
\Big\{4f(x,a)f''(x,a)-5[f'(x,a)]^2\Big\}\\
+(\z^2 -\al^2)\frac{g(x,a)}{f(x,a)}
\end{multline}
where prime denotes differentiation with respect to $x$.

In the critical case in which the two (simple) turning points coalesce into
one double point, we get a limit of the above transformation with $a=0$. So
\be\label{zeta-0-left}
\int_{x}^{0}[f(t,0)]^{\frac{1}{2}}dt=\frac{1}{2}\z^2
\quad\text{for}\quad
x\leq 0
\ee
\be\label{zeta-0-right}
\int_{0}^{x}[f(t,0)]^{\frac{1}{2}}dt=\frac{1}{2}\z^2
\quad\text{for}\quad
x\geq 0
\ee
and equations (\ref{schrodi-liouville-final-form}), (\ref{psi}) and
(\ref{psi-equivalent}) apply with $a=\al=0$.

\section{Two useful lemmas}
\label{continuity}

In this section we prove two helpful assertions that will be facilitated
in the following sections. First, that the error function $\ps(\z,\al)$ 
resulting from the Liouville transformation of \S \ref{liouville-transform}, 
is continuous in $\al$ and $\z$; a fact that will be used in 
\S \ref{approximate-solutions} to prove the existence of approximate 
solutions of equation (\ref{schrodi-liouville-final-form}). 
Secondly, we give asymptotics of $x$ for big values of $\z$.

The first lemma that concerns the error term in 
(\ref{schrodi-liouville-final-form}) is the following.
\begin{lemma}
\label{junk-continuity}
The function $\psi(\z,\alpha)$ in equation (\ref{schrodi-liouville-final-form}) 
as defined in (\ref{psi}) is continuous in $\z$ and $\alpha$ in the region
$(-\infty,+\infty)\times[0,\alpha_0]$ of the $(\z,\alpha)$-plane.
\end{lemma}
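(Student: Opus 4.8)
The plan is to show continuity of $\psi(\z,\al)$ by working from the expression \eqref{psi-equivalent} rather than directly from \eqref{psi}, since \eqref{psi-equivalent} isolates the potentially problematic factors. The only places where continuity could fail are (a) at the turning points, i.e.\ where $\z^2-\al^2=0$ and simultaneously $f(x,a)=0$ (so $x=\pm a$), and (b) in the degenerate limit $\al\downarrow 0$ (equivalently $a\downarrow 0$), where the two turning points coalesce. Away from these loci, $f(x,a)\neq 0$ and $\z^2\neq\al^2$, and $\psi$ is manifestly continuous because $f$, $f'$, $f''$, $g$ are continuous in $(x,a)$ (using that $A\in C^4$, hence $f(\cdot,a)=A^2(a)-A^2(\cdot)$ is $C^4$ in $x$ and jointly continuous in $(x,a)$), $x$ is a continuous function of $(\z,\al)$ by the analysis of \S\ref{liouville-transform}, and the remaining algebraic factors are continuous where their denominators do not vanish. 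So the whole content is a local analysis near $\z=\pm\al$.

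First I would fix attention on $x$ near $a$ (the point $x=-a$ is symmetric). Since $x\mapsto a$ is a simple zero of $f(\cdot,a)$ with $f_x(a,a)=-2A(a)A'(a)>0$, Taylor-expand $f(x,a)=f_x(a,a)(x-a)+\tfrac12 f_{xx}(a,a)(x-a)^2+o((x-a)^2)$. From the defining relation \eqref{our-case}, $\dot{x}^2 f(x,a)=\z^2-\al^2$, and since $\z=\al$ corresponds to $x=a$ with $\dot x>0$ there, one gets $\dot x(a)^2 = \lim (\z^2-\al^2)/f(x,a)$, a finite nonzero limit; more precisely a standard Liouville–transform computation (differentiating \eqref{zeta-center}/\eqref{zeta-right}) gives $\z-\al \sim c(x-a)$ with $c=\dot x(a)^{-1}>0$, and the higher regularity of $A$ lets one carry this expansion one order further. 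The strategy is then to substitute these expansions into the three terms of \eqref{psi-equivalent} and check that all the apparent singularities cancel: in the first term $\tfrac14(3\z^2+2\al^2)/(\z^2-\al^2)^2$ the factor $(\z^2-\al^2)^{-2}\sim (2\al)^{-2}c^{-4}(x-a)^{-4}$ blows up, but it is exactly matched against the $f^{-3}$ in the second term via $f(x,a)^3\sim f_x(a,a)^3(x-a)^3$ together with the $(\z^2-\al^2)$ numerator there, and against the $(\z^2-\al^2)g/f$ in the third; the known fact that \eqref{schrodi-liouville-final-form} has a genuinely \emph{continuous} coefficient (this is precisely the Langer–Olver device — the Liouville transform to a parabolic-cylinder comparison equation is designed so that the error term is bounded at the turning points) tells us the net result is finite, and the point here is to verify the cancellation explicitly using the $C^4$ hypothesis. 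I expect this cancellation bookkeeping near $\z=\pm\al$ to be the main obstacle: one must keep enough terms in each expansion and use that $A\in C^4$ (and $C^5$ near $0$) precisely so that $f(\cdot,a)\in C^4$ and the error term's expansion closes without a leftover singular term.

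Finally I would treat the coalescing limit $\al\downarrow 0$. Here one uses \eqref{zeta-0-left}–\eqref{zeta-0-right}: at $a=0$ the relation becomes $\int_0^x [f(t,0)]^{1/2}\,dt=\tfrac12\z^2$ for $x\ge 0$, so $\z$ behaves like $\z\sim (2\int_0^x\sqrt{f(t,0)}\,dt)^{1/2}$ and, since $f(x,0)=A^2(0)-A^2(x)\sim -A(0)A''(0)x^2$ near $0$ with $A''(0)<0$, one has $\z\sim \lam x$ for an explicit constant $\lam>0$ near $x=0$; this matches smoothly onto the $\al>0$ formulas as $a\downarrow0$, $\al\downarrow0$. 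Plugging into \eqref{psi-equivalent} with $\al=0$ gives $\psi(\z,0)=\tfrac34\z^{-2}+\cdots$ near $\z=0$, and one checks the $\tfrac34\z^{-2}$ is cancelled by the $f^{-3}$ term exactly as in the generic turning-point case (indeed it is the same computation with $\al=0$). Joint continuity at $(\z,\al)=(0,0)$ then follows from the uniformity of these expansions in $a\in[0,a_0]$, which in turn rests on the continuous dependence of $x_\pm=\pm a$, of $\al$ (by \eqref{alpha}), and of all derivatives of $f$ on the compact parameter range — so I would close the argument by invoking a compactness/uniform-convergence statement on $[0,a_0]$ to upgrade the pointwise cancellations to genuine joint continuity on $(-\infty,+\infty)\times[0,\al_0]$.
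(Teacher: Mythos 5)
There is a genuine gap. Your plan is a direct verification of the cancellation of the apparent singularities of $\psi$ at $\z=\pm\al$, but the crucial step is only announced, not carried out: you write that the cancellation "is the main obstacle" and that the net finiteness follows from "the known fact" that the Liouville--Olver transformation produces a continuous error term. That appeal is circular --- the continuity of the error term is exactly the statement of the lemma --- so as written the proposal assumes its conclusion at the decisive point. Moreover, the hardest part of the lemma is not the generic simple turning point but joint continuity at $(\z,\al)=(0,0)$, where the two turning points coalesce. Your expansions near $x=\pm a$ have coefficients that degenerate as $a\downarrow0$ (e.g.\ the factor $(2\al)^{-2}$ in your own estimate of $(\z^2-\al^2)^{-2}$), so a pointwise cancellation for each fixed $a>0$, followed by the critical case $a=0$ treated separately, plus a vague "compactness/uniform-convergence statement on $[0,a_0]$", does not give continuity in the two variables at the corner; one needs a two-variable expansion that is uniform as $a$ and $x$ tend to $0$ simultaneously, and nothing in the proposal supplies it.

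This is precisely where the paper takes a different, and much shorter, route: it factors $f(x,a)=(x^2-a^2)p(x,a)$, checks that $p$, $\partial_x p$, $\partial_x^2 p$ and $g$ are jointly continuous on $\R\times[0,a_0]$, that $p>0$, that $|\partial_x^3 p|$ is bounded near $(x,a)=(0,0)$ (this is where the $C^5$ regularity of $A$ near $0$ enters --- in your sketch the $C^4$/$C^5$ hypotheses are invoked only loosely), and that $f$ is non-increasing in $a$ on $[-a,a]$; it then cites Lemma I of Olver \cite{olver1975}, which is exactly the uniform two-turning-point statement that handles the coalescing limit. If you want to keep your direct-computation approach, you must either perform the cancellation bookkeeping with expansions uniform in $(x,a)$ near $(0,0)$ (effectively reproving Olver's lemma), or verify Olver's hypotheses and cite his result as the paper does.
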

\begin{proof}
First we introduce an auxilliary function $p$. We define it by
\be\label{p}
f(x,a)=(x^2-a^2)p(x,a)
\ee
where 
\be\nn
p(\pm a,a)=\mp\frac{A(a)A'(\pm a)}{a}>0
\quad\text{for}\quad 
a\in(0,a_{0}] 
\ee
and
\be\nn
p(0,0)=-A_{max}A''(0)>0.
\ee
The functions $f$, $g$ and $p$ defined by (\ref{v-new}), (\ref{f-new}) and 
(\ref{p}) respectively satisfy the following properties
\begin{itemize}
\item[(i)]
$p, \frac{\partial p}{\partial x}, \frac{\partial^2 p}{\partial x^2}$ and
$g$ are continuous functions of $x$ and $a$ (this means in $x$ and $a$
simultaneously and not separately) in the region $\R\times[0,a_{0}]$ 
\item[(ii)]
$p$ is positive throughout the same region
\item[(iii)]
$|\frac{\partial^3 p}{\partial x^3}|$ is bounded in a neighborhood of the point
$(x,a)=(0,0)$ in the same region and
\item[(iv)]
$f$ is a non-increasing function of $a$ when $x\in[-a,a]$ and $a\in[0,a_{0}]$.
\end{itemize}
Indeed, (i) and (iii) follow from (\ref{p}) and the fact that $A$ is 
in $C^4$ and of class $C^5$ 
in some neighborhood of $0$ (see \S\ref{assume}). For (ii) recall the sign of 
$f$ using (\ref{v-new}). Finally (iv) is a consequence of the monotonicity of 
$A$ in $[0,+\infty)$ (again cf. \S\ref{assume}). By Lemma I  in Olver's paper 
\cite{olver1975}, the function 
$\ps$ defined by (\ref{psi}) (or (\ref{psi-equivalent})) is continuous in 
the corresponding region of the $(\z,\al)$-plane.
\end{proof}

Finally, recall (\ref{zeta-right}). It shows that $x\uparrow+\infty$ as 
$\z\uparrow+\infty$. The lemma below deals with the asymptotic behavior 
of $x$ as $\z\uparrow+\infty$. 
\begin{lemma}
\label{x-at-big-zeta}
Considering $x$ as a function of $\z$ we see that
\be
\label{x-asymptotics}
x=\frac{\z^2}{2A(a)}\Big[1+\asympt\Big(\tfrac{\log\z}{\z^2}\Big)\Big]
\quad
\text{as}
\hspace{7pt}
\z\uparrow+\infty
\ee
uniformly with respect to $a\in[0,a_0]$.
\end{lemma}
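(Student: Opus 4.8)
The plan is to extract the asymptotics of $x$ from the explicit relation \eqref{zeta-right}, which for $x\geq a$ reads
\be\nn
\int_{a}^{x}[f(t,a)]^{\frac{1}{2}}dt=-
\tfrac{1}{2}\al ^2\arcosh\Big(\tfrac{\z}{\al}\Big)+
\tfrac{1}{2}\z\big(\z ^2-\al ^2\big)^{\frac{1}{2}}.
\ee
First I would analyze the right-hand side as $\z\to+\infty$: since $\arcosh(\z/\al)=\log(2\z/\al)+\asympt(\al^2/\z^2)$ and $\z(\z^2-\al^2)^{1/2}=\z^2-\tfrac12\al^2+\asympt(\al^4/\z^2)$, the right-hand side equals $\tfrac12\z^2+\asympt(\log\z)$, uniformly in $a\in[0,a_0]$ (here one uses that $\al$ ranges over the compact interval $[0,\al_0]$, so all the $\al$-dependent remainders are genuinely uniform). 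Thus it suffices to understand the left-hand side, i.e. to invert $x\mapsto I(x,a):=\int_a^x[f(t,a)]^{1/2}\,dt$ and show $I^{-1}\big(\tfrac12\z^2+\asympt(\log\z)\big)=\tfrac{\z^2}{2A(a)}\big[1+\asympt(\tfrac{\log\z}{\z^2})\big]$.

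Next I would compute the leading behavior of $I(x,a)$ as $x\to+\infty$. Recall $f(t,a)=A^2(a)-A^2(t)$, and by the decay assumptions of \S\ref{assume} we have $A(t)=\asympt(t^{-1-\tau})$, so $[f(t,a)]^{1/2}=A(a)\big(1-A^2(t)/A^2(a)\big)^{1/2}=A(a)-\tfrac{A(t)^2}{2A(a)}+\asympt(t^{-2(1+\tau)})$ for $t$ large. Integrating from $a$ to $x$ gives
\be\nn
I(x,a)=A(a)\,x - A(a)\,a - \frac{1}{2A(a)}\int_a^{x}A^2(t)\,dt + \asympt(1),
\ee
and since $\int_a^\infty A^2(t)\,dt<\infty$ (again by the decay of $A$, uniformly in $a\in[0,a_0]$), the last three terms are $\asympt(1)$ uniformly in $a$. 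Hence $I(x,a)=A(a)\,x+\asympt(1)$ as $x\to+\infty$, uniformly in $a$. Equivalently, $x=\dfrac{I(x,a)+\asympt(1)}{A(a)}$.

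Now I would combine the two halves. Writing $I(x,a)=\tfrac12\z^2+\asympt(\log\z)$ from the right-hand side computation, we get $x=\dfrac{\tfrac12\z^2+\asympt(\log\z)}{A(a)}=\dfrac{\z^2}{2A(a)}\Big[1+\asympt\Big(\dfrac{\log\z}{\z^2}\Big)\Big]$, which is exactly \eqref{x-asymptotics}. The one point requiring a little care — and the main obstacle — is the uniformity in $a$, and in particular the behavior near the critical value $a=0$, where $A(a)\to A_{max}>0$ (so $A(a)$ is bounded below on $[0,a_0]$, which is fine) but where the turning points coalesce; here one should check that the error estimates for $[f(t,a)]^{1/2}$ near $t=a$ remain controlled. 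In fact the contribution to $I(x,a)$ from any bounded $t$-region is itself bounded uniformly in $a$ (the integrand is continuous on the compact set and vanishes like $(t-a)^{1/2}$ at the lower endpoint, which is integrable), so only the tail $t\to+\infty$ matters for the growth, and there the decay hypotheses give the uniform $\asympt(1)$ bound cleanly. A subordinate point is to confirm that the implied constant in $\asympt(\log\z/\z^2)$ does not blow up as $a\downarrow 0$; since $A(a)\geq A(a_0)>0$ throughout, dividing by $A(a)$ is harmless, and the estimate goes through. Comparing with the critical-case formula \eqref{zeta-0-right}, where $A(a)=A_{max}$, shows the result is consistent at $a=0$.
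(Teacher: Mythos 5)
Your proposal is correct and follows essentially the same route as the paper's proof: the paper integrates the Liouville relation $\dot{x}^2 f(x,a)=\z^2-\al^2$ from the point where $\z=0$, replacing $[A^2(a)-A^2(x)]^{1/2}$ by $A(a)$ for large $x$, which is exactly the computation you perform starting from the integrated formula (\ref{zeta-right}) together with the expansion $\tfrac12\z(\z^2-\al^2)^{1/2}-\tfrac12\al^2\arcosh(\z/\al)=\tfrac12\z^2+\asympt(\log\z)$. Your treatment of the $\asympt(1)$ tail error via $\int^{\infty}A^2(t)\,dt<\infty$ and of uniformity in $a\in[0,a_0]$ (with $A(a)\geq A_0>0$ and the degenerate case $\al=0$ handled by (\ref{zeta-0-right})) is, if anything, slightly more explicit than the paper's argument.
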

\begin{proof}
By (\ref{v-new}) and (\ref{our-case}) we have
\be\nn
\Big(\frac{dx}{d\z}\Big)^2[A^2(a)-A^2(x)]=
\begin{cases}
\alpha^2-\z^2,\quad0\leq\z\leq\al\\
\z^2-\alpha^2,\quad\z>\al
\end{cases}
\ee
Choosing $x_0$ to satisfy $\z(x_0)=0$, we have
\be\nn
\int_{x_0}^{x}dt=
\frac{1}{A(a)}
\bigg(
\int_{0}^{\alpha}\sqrt{\alpha^2-\eta^2}d\eta+
\int_{\alpha}^{\z}\sqrt{\eta^2-\alpha^2}d\eta
\bigg)
\quad
\text{as}\hspace{7pt}
\z\uparrow+\infty.
\ee
We obtain
\be\nn
x-x_0=
\frac{\pi\alpha^2}{4A(a)}+
\frac{\z^2}{2A(a)}\Big[1+\asympt\Big(\tfrac{\log\z}{\z^2}\Big)\Big]
\quad
\text{as}\hspace{7pt}
\z\uparrow+\infty
\ee
from which the desired result follows.
\end{proof}

\section{Approximate Solutions}
\label{approximate-solutions}

In this section we exploit the tools assembled in the previous sections.
Here, we state a theorem concerning approximate solutions of equation (\ref{schrodi-liouville-final-form}), i.e.
\be
\label{liouville-final-form}
\frac{d^2Y}{d\z^2}=
\big[\hb^{-2}(\z^2-\alpha^2)+\ps(\z,\al)\big]Y
\ee
where the error term $\ps(\z,\al)$ is
\begin{multline}\label{psi-final-form}
\ps(\z,\al)=
\frac{1}{4}\frac{3\z^2 +2\al^2}{(\z^2 -\al^2)^2}
+\frac{1}{16}\frac{\z^2 -\al^2}{f^3(x,a)}
\Big\{4f(x,a)f''(x,a)-5[f'(x,a)]^2\Big\}\\
+(\z^2 -\al^2)\frac{g(x,a)}{f(x,a)}.
\end{multline}
To this goal, we need a way to assess the error. So we introduce an 
\textit{error-control function} $H$ along with a 
\textit{balancing function} $\Om$
\footnote{For $\Om$ we can actually choose any continuous function of 
the real variable $x$ which is positive (except possibly at $x=0$) and 
satisfies the asymptotics
$\Om(x)=\asympt(|x|^{\frac{1}{3}})
\quad\text{as}\quad
x\to\pm\infty$.} 
.
\begin{definition}
Define the function $\Om$ by
\be\label{omega-asymptotics}
\Om(x)=1+|x|^{\frac{1}{3}}.
\ee 
As an \textbf{error-control function} $H(\z,\al,\hb)$ of equation 
(\ref{liouville-final-form}) we consider any primitive of
the function
\be\nn
\frac{\ps(\z,\al)}{\Om(\z\sqrt{2\hb^{-1}})}.
\ee
\end{definition} 

Furthermore, we need the notion of the \textit{variation} of 
the error-control function $H$. We have
\begin{definition}
Take $\z_1<\z_2$ for $\z_1\in[0,+\infty)$ and 
$\z_2\in(0,+\infty)\cup\{+\infty\}$. The \textbf{variation} $\var_{\z_1,\z_2}[H]$ 
in the interval $(\z_1,\z_2)$ of the error-control function $H$ 
of equation (\ref{liouville-final-form}) is defined by
\be\nn
\var_{\z_1,\z_2}[H](\al,\hb)=
\int_{\z_1}^{\z_2}\frac{|\ps(t,\al)|}{\Om(t\sqrt{2\hb^{-1}})}dt.
\ee
\end{definition}

Before stating the main theorem, we also need to define an
auxiliary function that shows up in the error estimates of the
approximate solutions. So for any $b\leq0$ we set
\be\label{lambda-function}
l(b):=\sup_{x\in(0,+\infty)}
\bigg\{\Om(x)\frac{\msf(x,b)^2}{\G(\tfrac{1}{2}-b)}\bigg\}
\ee
where $\msf$ is a function defined in terms of Parabolic Cylinder 
Functions in section 
\ref{parabolic-cylinder-functions} of the appendix and $\Gamma$ denotes the
\textit{Gamma function}.
We note that the above supremum is finite for each value of $b$. This fact 
is a consequence of (\ref{omega-asymptotics}) and the first relation in
(\ref{M,N-asymptotics}). Furthermore, because the relations 
(\ref{M,N-asymptotics}) hold uniformly in compact intervals of the parameter 
$b$, the function $l$ is continuous.

We are now ready for the  main theorem of this section.
\btheo\label{main-thm}
For each value of $\hb$, the equation (\ref{liouville-final-form})
has in the region $[0,+\infty)\times[0,\al_{0}]$ of the $(\z,\al)$-plane   
two solutions $Y_1$ and $Y_2$ satisfying
\bea
\label{y1-approx}
Y_1(\z,\al,\hb)=U(\z\sqrt{2\hb^{-1}},-\tfrac{1}{2}\hb^{-1}\al^2)+
\eps_1 (\z,\al,\hb)\\
\label{y2-approx}
Y_2(\z,\al,\hb)=\ol U(\z\sqrt{2\hb^{-1}},-\tfrac{1}{2}\hb^{-1}\al^2)+
\eps_2 (\z,\al,\hb)
\eea
where $U$, $\ol U$ are the PCFs defined in appendix 
\ref{parabolic-cylinder-functions}. These two solutions $Y_1$, $Y_2$ 
are continuous and have continuous first and second partial 
$\z$-derivatives. The errors $\eps_1$, $\eps_2$ in the relations 
above satisfy the estimates
\begin{multline}\label{junk1}
\frac{|\eps_1 (\z,\al,\hb)|}{\msf(\z\sqrt{2\hb^{-1}},-\tfrac{1}{2}\hb^{-1}\al^2)},
\frac{\Big|\frac{\partial \eps_1 }{\partial\z}(\z,\al,\hb)\Big|}{\sqrt{2\hb^{-1}}
\nsf(\z\sqrt{2\hb^{-1}},-\tfrac{1}{2}\hb^{-1}\al^2)}\\
\leq
\frac{1}{\esf(\z\sqrt{2\hb^{-1}},-\tfrac{1}{2}\hb^{-1}\al^2)}
\Big(\exp\big\{\tfrac{1}{2}(\pi\hb)^{\frac{1}{2}}l(-\tfrac{1}{2}\hb^{-1}\al^2)
\mathcal{V}_{\z,+\infty}[H](\al,\hb)\big\}-1\Big)
\end{multline}
and
\begin{multline}\label{junk2}
\frac{|\eps_2 (\z,\al,\hb)|}{\msf(\z\sqrt{2\hb^{-1}},-\tfrac{1}{2}\hb^{-1}\al^2)},
\frac{\Big|\frac{\partial \eps_2 }{\partial\z}(\z,\al,\hb)\Big|}{\sqrt{2\hb^{-1}}
\nsf(\z\sqrt{2\hb^{-1}},-\tfrac{1}{2}\hb^{-1}\al^2)}\\
\leq
\esf(\z\sqrt{2\hb^{-1}},-\tfrac{1}{2}\hb^{-1}\al^2)
\Big(\exp\big\{\tfrac{1}{2}(\pi\hb)^{\frac{1}{2}}l(-\tfrac{1}{2}\hb^{-1}\al^2)
\mathcal{V}_{0,\z}[H](\al,\hb)\big\}-1\Big)
\end{multline}
\etheo
\begin{proof}
By Theorem \ref{thm-on-exist-int-eq} (cf. Theorem I in \cite{olver1975}),
it suffices to prove that
\begin{itemize}
\item
the function $\ps$ is continuous in the region 
$[0,+\infty)\times[0,\al_{0}]$ and
\item
the integral
\be
\label{variation-total}
\var_{0,+\infty}[H](\al,\hb)=
\int_{0}^{+\infty}\frac{|\ps(t,\al)|}{\Om(t\sqrt{2\hb^{-1}})}dt
\ee 
converges  uniformly in $\al$.
\end{itemize}
The first assertion  has already been 
proven in Lemma \ref{junk-continuity}. For the second, we argue as follows.
Using (\ref{v-new}), (\ref{f-new}) and (\ref{psi-equivalent}) we find
\begin{align*}
\frac{\ps(\z,\al)}{\z^{1/3}}
& =
\frac{1}{4}\frac{3\z^2 +2\al^2}{\z^{1/3}(\z^2 -\al^2)^2}\\
& +
\frac{1}{16}\frac{\z^2 -\al^2}{\z^{1/3}[A^2(a)-A^2(x)]^3}\cdot\\
& \hspace{1.2cm}
\big\{
-8A^2(a)[A'(x)^2+A(x)A''(x)]
-12A^2(x)A'(x)^2
+8A^3(x)A''(x)
\big\}\\
& +
\frac{\z^2 -\al^2}{\z^{1/3}[A^2(a)-A^2(x)]}
\Bigg\{\frac{3}{4}\Big[\frac{A'(x)}{A(x)+A(a)}\Big]^2-
\frac{1}{2}\frac{A''(x)}{A(x)+A(a)}\Bigg\}.
\end{align*}
This in addition to the asymptotics for $A$ in \S\ref{assume} 
and (\ref{x-asymptotics}), implies that 
$|\ps(\z,\al)|/\z^{\frac{1}{3}}$ is integrable
at $\z=+\infty$ and hence the variation (\ref{variation-total}) 
is finite; in fact uniformly bounded in  $\al$.
\end{proof}

\section{An Example}
\label{ex}

In this section we illustrate the theory developed so far to the
special case of the potential $A(x)=\tfrac{1}{1+x^2}$, $x\in\R$. 
First, observe that this particular potential $A$ satisfies the 
assumptions of \S\ref{assume}; indeed 
\begin{itemize}
\item
it is always positive, even and smooth,
\item
it is increasing in $(-\infty,0]$ and decreasing in $[0,+\infty)$,
\item
it has a maximum at $x=0$, namely $A_{max}=A(0)=1$,
\item
$\|A\|_{L^1(\R)}=\pi$ and
\item
if $\m\in(0,1)$ the equation $A(x)=\m$ gives the two zeros 
$x_{\pm}=\pm\sqrt{\m^{-1}-1}$ while for $\m=1$ we get a double solution
$x=0$.
\end{itemize}

When $\m\in[A_0,1]$ for $A_0>0$ ($\m=1$ corresponding to the critical case), 
the parameter 
$a=x_{+}=\sqrt{\m^{-1}-1}$ ranges over $[0,a_0]$ where $a_0=\sqrt{A_0^{-1}-1}$ 
(the criticality now being $a=0$). The equation in question is
\be\label{example-equation}
\frac{d^2y}{dx^2}=[\hbar^{-2}f(x,a)+g(x,a)]y
\ee
where $f$ and $g$ satisfy
\be\label{example-f}
f(x,a)=A^2(a)-A^2(x)=\frac{(x^2-a^2)(x^2+a^2+2)}{[(1+a^2)(1+x^2)]^2}
\ee
and
\begin{align}\label{example-g}
\nn
g(x,a) & =
\frac{3}{4}\Big[\frac{A'(x)}{A(x)+A(a)}\Big]^2-
\frac{1}{2}\frac{A''(x)}{A(x)+A(a)}\\
& =
\frac{(1+a^2)(-3x^4-2x^2+a^2+2)}{[(1+x^2)(x^2+a^2+2)]^2}.
\end{align}
The function $p$ that satisfies $f(x,a)=(x^2-a^2)p(x,a)$ is
\be\label{example-p}
p(x,a)=\frac{x^2+a^2+2}{[(1+a^2)(1+x^2)]^2}.
\ee

For the non-critical case, applying the Liouville transform 
\be\nn
Y=\dot{x}^{-\frac{1}{2}}y,\quad\dot{x}^2f(x,a)=\z^2-\alpha^2,
\ee
where $\al\in(0,\al_0]$ in which
$\al_0>0$ satisfies
\begin{align*}
\al_0^2 & = 
\frac{2}{\pi}\int_{-a_0}^{a_0}[-f(t,a_0)]^{\frac{1}{2}}dt\\
& =
\frac{4}{\pi(1+a_0^2)}\int_{0}^{a_0}
\frac{\sqrt{(a_0^2-t^2)(t^2+a_0^2+2)}}{1+t^2}dt,
\end{align*}
(cf. (\ref{alpha})) we get
\begin{multline*}
\frac{1}{1+a^2}\int_{x}^{-a}\frac{\sqrt{(t^2-a^2)(t^2+a^2+2)}}{1+t^2}dt=\\-
\frac{1}{2}\al ^2\arcosh\Big(-\frac{\z}{\al}\Big)-
\frac{1}{2}\z\big(\z ^2-\al ^2\big)^{\frac{1}{2}}
\quad\text{for}\quad
x\leq -a
\end{multline*}
(cf. (\ref{zeta-left})) and 
\begin{multline*}
\frac{1}{1+a^2}\int_{-a}^{x}\frac{\sqrt{(a^2-t^2)(t^2+a^2+2)}}{1+t^2}dt=\\
\frac{1}{2}\al ^2\arccos\Big(-\frac{\z}{\al}\Big)+
\frac{1}{2}\z\big(\al ^2-\z ^2\big)^{\frac{1}{2}}
\quad\text{for}\quad
-a\leq x\leq a
\end{multline*}
(cf. (\ref{zeta-center})),
in which the inverse cosine takes its principal value (i.e. the value in $[0,\pi]$) 
and
\begin{multline*}
\frac{1}{1+a^2}\int_{a}^{x}\frac{\sqrt{(t^2-a^2)(t^2+a^2+2)}}{1+t^2}dt=\\-
\frac{1}{2}\al ^2\arcosh\Big(\frac{\z}{\al}\Big)+
\frac{1}{2}\z\big(\z ^2-\al ^2\big)^{\frac{1}{2}}
\quad\text{for}\quad
x\geq a
\end{multline*}
(cf. (\ref{zeta-right})). Additionally, equation (\ref{example-equation}) 
is transformed to
\be\nn
\frac{d^2Y}{d\z^2}=
\big[\hb^{-2}(\z^2-\alpha^2)+\ps(\z,\al)\big]Y
\ee
where
\begin{multline}
\label{psi-example}
\ps(\z,\al)=
\frac{1}{4}\frac{3\z^2 +2\al^2}{(\z^2 -\al^2)^2}
-(1+a^2)^4(\z^2 -\al^2)\frac{5x^6+9x^4+3x^2+a^4+2a^2}{[(x^2-a^2)(x^2+a^2+2)]^3}\\
+(1+a^2)^3(\z^2 -\al^2)\frac{-3x^4-2x^2+a^2+2}{(x^2-a^2)(x^2+a^2+2)^3}.
\end{multline}

In the critical case ($a=\al=0$) we have
\be\nn
\int_{0}^{x}\frac{t\sqrt{2+t^2}}{1+t^2}dt=\frac{1}{2}\z^2
\quad\text{for}\quad
x\in\R
\ee
(cf. (\ref{zeta-0-left}) and (\ref{zeta-0-right})) and
\be\label{psi-example-critical}
\ps(\z,0)=
\frac{3}{4}\frac{1}{\z^2}-\z^2\frac{3x^6+7x^4+7x^2+3}{x^4(x^2+2)^3}.
\ee
We  note that all the integrals above can be explicitly evaluated in 
terms of  elliptic integrals.

From  (\ref{example-p}) we have
\begin{align*}
\frac{\partial p}{\partial x}(x,a) & =
\frac{2x(-x^2-2a^2-3)}{(1+a^2)^2(1+x^2)^3}
\\
\frac{\partial^2 p}{\partial x^2}(x,a) & =
\frac{2(3x^4+10a^2x^2+12x^2-2a^2-3)}{(1+a^2)^2(1+x^2)^4}
\\
\frac{\partial^3 p}{\partial x^3}(x,a) & =
\frac{24x(-x^4-5a^2x^2-5x^2+3a^2+4)}{(1+a^2)^2(1+x^2)^5}
\end{align*}
and from (\ref{example-f}) we have
\be\nn
\frac{\partial f}{\partial a}(x,a)=-\frac{4a}{(1+a^2)^3}.
\ee
Hence, these last observations about $f$ and $p$ along with 
(\ref{example-g}) clearly show that 
\begin{itemize}
\item[(i)]
$p, \frac{\partial p}{\partial x}, \frac{\partial^2 p}{\partial x^2}$ and
$g$ are continuous functions in the region $\R\times[0,a_0]$
\item[(ii)]
$p$ is positive in $\R\times[0,a_0]$
\item[(iii)]
$|\frac{\partial^3 p}{\partial x^3}|(0,0)=0$ and
\item[(iv)]
$\frac{\partial f}{\partial a}<0$ when $x\in[-a,a]$ and $a\in(0,a_0]$.
\end{itemize}
As argued in \S\ref{continuity}, these four properties imply that  $\ps$ 
is continuous in the region $\R\times[0,\al_0]$ of the $(\z,\al)$-plane.

Now, the variation integral
\be\nn
\int_0^{+\infty}\frac{|\ps(t,\al)|}{\Om(t\sqrt{2\hb^{-1}})}dt
\ee
where $\ps$ is given by (\ref{psi-example}) or (\ref{psi-example-critical}) 
and $\Om(x)=1+|x|^{\frac{1}{3}}$, converges uniformly for $\al\in[0,\al_0]$ 
as $\hb\downarrow0$. So we can obtain the two specific approximate solutions 
guaranteed by Theorem \ref{main-thm}.

\section{Asymptotic Behavior of Solutions}
\label{asympt-behave-sols}

In order to deduce the asymptotic behavior of the solutions 
$Y_1(\z,\al,\hb), Y_2(\z,\al,\hb)$ when $\hb\downarrow0$, we need to 
determine the asymptotic form as $\hb\downarrow0$ of the error bounds
(\ref{junk1}), (\ref{junk2}) examining closely 
$l(-\tfrac{1}{2}\hb^{-1}\al^2)$ and $\var_{0,+\infty}[H](\al,\hb)$
for $\alpha\in[0,\al_0]$.

We start 
\footnote{
The subsequent analysis follows the idea found in \S 6.2 of \cite{olver1975}.
}
by investigating $l(b)$ as in (\ref{lambda-function}) for 
$b\downarrow-\infty$. Take $\nu\geq1$ to be a large positive number 
and set $b=-\frac{1}{2}\nu^2$ and $x=\nu y\sqrt{2}$. Then by (\ref{m-definition}), 
(\ref{u-asymptotics}) and (\ref{ubar-asymptotics}) the quantity
\be\label{m2-dia-gamma}
\frac{\msf(\nu y\sqrt{2},-\frac{1}{2}\nu^2)^2}{\G(\tfrac{1}{2}+\frac{1}{2}\nu^2)}
\ee
is equal to
\be\nn
\sqrt{\frac{16\pi\nu^{-\frac{2}{3}}\eta}{y^2-1}}\cdot
\begin{cases}
\Big[\Ai^2(\nu^{\frac{4}{3}}\eta)+\Bi^2(\nu^{\frac{4}{3}}\eta)+
E^2(\nu^{\frac{4}{3}}\eta)M^2(\nu^{\frac{4}{3}}\eta)\asympt(\nu^{-2})\Big],
0\leq y\leq\tfrac{\ro(-\frac{1}{2}\nu^2)}{\nu\sqrt{2}}\\
\\
\Big[\Ai(\nu^{\frac{4}{3}}\eta)\Bi(\nu^{\frac{4}{3}}\eta)+
M^2(\nu^{\frac{4}{3}}\eta)\asympt(\nu^{-2})\Big],
y\geq\tfrac{\ro(-\frac{1}{2}\nu^2)}{\nu\sqrt{2}}
\end{cases}
\ee
where in each case, the estimate $\asympt(\nu^{-2})$ is uniform with 
respect to $y$. Using (\ref{largest-root-pcf-asympt}) we see that  
\be\nn
\frac{\ro(-\frac{1}{2}\nu^2)}{\nu\sqrt{2}}=
1+2^{-\frac{1}{3}}c_*\nu^{-\frac{4}{3}}+\asympt(\nu^{-\frac{8}{3}})
\quad\text{as}\quad\nu\to+\infty
\ee
and consequently $\nu^{\frac{4}{3}}\eta=c_*+\asympt(\nu^{-\frac{4}{3}})$.
But $E$ is bounded in $[0,c_*+\asympt(\nu^{-\frac{4}{3}})]$. Hence we may 
write (\ref{m2-dia-gamma}) as
\be\nn
\frac{4\sqrt{\pi}}{\nu^{\frac{1}{3}}}
\Big(\frac{\eta}{y^2-1}\Big)^{\frac{1}{2}}\cdot
\begin{cases}
\Big[\Ai^2(\nu^{\frac{4}{3}}\eta)+\Bi^2(\nu^{\frac{4}{3}}\eta)+
M^2(\nu^{\frac{4}{3}}\eta)\asympt(\nu^{-2})\Big],
\hspace{3pt} 0\leq y\leq\frac{\ro(-\frac{1}{2}\nu^2)}{\nu\sqrt{2}}\\
\\
\Big[\Ai(\nu^{\frac{4}{3}}\eta)\Bi(\nu^{\frac{4}{3}}\eta)+
M^2(\nu^{\frac{4}{3}}\eta)\asympt(\nu^{-2})\Big],
\hspace{3pt} y\geq\frac{\ro(-\frac{1}{2}\nu^2)}{\nu\sqrt{2}}
\end{cases}
\ee
where the $\asympt$-terms are again uniform in $y$.

Next, we employ the asymptotic approximations for the functions $\Ai, \Bi$
and $M$ (cf. section \ref{airy_functions} in appendix) so that for $y\geq1$ we
obtain 
\be\label{m2/g-1}
\frac{\msf(\nu y\sqrt{2},-\frac{1}{2}\nu^2)^2}{\G(\tfrac{1}{2}+\frac{1}{2}\nu^2)}\leq
\frac{C}{\nu^{\frac{1}{3}}}
\Big(\frac{\eta}{y^2-1}\Big)^{\frac{1}{2}}
\frac{1}{1+\nu^{\frac{2}{3}}\eta^{\frac{1}{2}}}
\ee
where $C$ denotes a positive constant, used generically in what follows.
By (\ref{eta-definition}) we have 
$\eta\sim(\tfrac{3}{4})^{\frac{2}{3}}y^{\frac{4}{3}}$ as $y\to+\infty$, whence
for $y\geq0$ the estimate
\be\label{m2/g-2}
\Big(\frac{\eta}{y^2-1}\Big)^{\frac{1}{2}}\leq
\frac{C}{1+\eta^{\frac{1}{4}}}.
\ee
Also, from  (\ref{omega-asymptotics}) we have
\be\label{m2/g-3}
\Om(\nu y\sqrt{2})\leq 
C(1+\nu^{\frac{1}{3}}y^{\frac{1}{3}})\leq
C\nu^{\frac{1}{3}}(1+\eta^{\frac{1}{4}}).
\ee
Finally, combining (\ref{m2/g-1}), (\ref{m2/g-2}) and (\ref{m2/g-3}) we get
\be\nn
\Om(\nu y\sqrt{2})
\frac{\msf(\nu y\sqrt{2},-\frac{1}{2}\nu^2)^2}{\G(\tfrac{1}{2}+\frac{1}{2}\nu^2)}
\leq
C
\frac{1}{1+\nu^{\frac{2}{3}}\eta^{\frac{1}{2}}}\leq
C
\ee
implying that 
\be\label{lambda-asympt}
l(-\tfrac{1}{2}\hb^{-1}\al^2)=\asympt(1)
\quad\text{as}\quad
\hb\downarrow0.
\ee

Next, we examine $\var_{0,+\infty}[H](\al,\hb)$. In the proof
of Theorem \ref{main-thm} we showed that $|\ps(\z,\al)|/\z^{\frac{1}{3}}$ 
is integrable at $\z=+\infty$ uniformly with respect to $\al$.
Thus
\begin{align}\label{v-operator-asympt}
\var_{0,+\infty}[H](\al,\hb)\leq
C\int_{0}^{1} & \frac{dt}{1+(t\sqrt{2\hb^{-1}})^{\frac{1}{3}}}\\ \nn
& +
\Big(\frac{\hb}{2}\Big)^{\frac{1}{6}}
\int_{1}^{+\infty}\frac{|\ps(t,\al)|}{t^{\frac{1}{3}}}dt=
\asympt(\hb^{1/6})
\quad\text{as}\quad
\hb\downarrow0.
\end{align}

The last two relations applied to (\ref{junk1}) and (\ref{junk2}) supply us 
with the required results
\footnote{
Observe that since $\ps(\z,\al)$ is integrable at $\z=+\infty$
the same results could be achieved by demanding $\Om(x)=1$ for all $x$.
We chose to present the general case since it is more broadly applicable.}
\begin{align}
\label{e1-asympt}
\eps_1(\z,\al,\hb) & =
\frac{\msf(\z\sqrt{2\hb^{-1}},-\tfrac{1}{2}\hb^{-1}\al^2)}
{\esf(\z\sqrt{2\hb^{-1}},-\tfrac{1}{2}\hb^{-1}\al^2)}
\asympt(\hb^{\frac{2}{3}})
\\
\nn
\eps_2(\z,\al,\hb) & =
\esf(\z\sqrt{2\hb^{-1}},-\tfrac{1}{2}\hb^{-1}\al^2)
\msf(\z\sqrt{2\hb^{-1}},-\tfrac{1}{2}\hb^{-1}\al^2)
\asympt(\hb^{\frac{2}{3}})
\\
\nn
\frac{\partial \eps_1 }{\partial\z}(\z,\al,\hb) & =
\frac{\nsf(\z\sqrt{2\hb^{-1}},-\tfrac{1}{2}\hb^{-1}\al^2)}
{\esf(\z\sqrt{2\hb^{-1}},-\tfrac{1}{2}\hb^{-1}\al^2)}
\asympt(\hb^{\frac{1}{6}})
\\
\nn
\frac{\partial \eps_2 }{\partial\z}(\z,\al,\hb) & =
\esf(\z\sqrt{2\hb^{-1}},-\tfrac{1}{2}\hb^{-1}\al^2)
\nsf(\z\sqrt{2\hb^{-1}},-\tfrac{1}{2}\hb^{-1}\al^2)
\asympt(\hb^{\frac{1}{6}})
\end{align}
as $\hb\downarrow0$ uniformly for $\z\geq0$ and $\al\in(0,\al_0]$.

\begin{remark}
The  special case $\al=0$ (i.e. when equation 
(\ref{schrodi-liouville-final-form}) has a double turning point at $\z=0$)
satisfies the same estimates. Indeed, as in the proof of Theorem 
\ref{main-thm}, $|\ps(\z,0)|/\z^{\frac{1}{3}}$ is integrable at 
$\z=+\infty$. Furthermore $l(0)$ is independent of $\hb$ and from 
the formula $\Om(x)=1+|x|^{\frac{1}{3}}$ we see that the estimates 
above remain unchanged.
\end{remark}

\section{Connection Formulae}
\label{connection-formulas}

The results obtained so far are somewhat inadequate
because Theorem \ref{main-thm} defines the character of solutions of 
equation (\ref{schrodi-liouville-final-form}) only for non-negative values of 
$\z$. Indeed, we are incapable of constructing error bounds
like those ones in (\ref{junk1}) and (\ref{junk2}) for negative
$\z$, a drawback pertinent to the nature of parabolic cylinder functions
(cf. Miller's \cite{miller1950}).

Consider $Y_1$ for example
\footnote{
Similar thinking can be argued for $Y_2$ as well.
}
. 
As $\hb\downarrow0$ in a continuous manner, the 
asymptotic behavior of its approximant 
$U(\z\sqrt{2\hb^{-1}},-\tfrac{1}{2}\hb^{-1}\al^2)$ 
at $\z=-\infty$, changes abruptly as $\hb^{-1}\al^2$ goes through odd positive
integers (cf. \textit{exceptional values} in appendix 
\ref{parabolic-cylinder-functions}). 
$Y_1$ on the other hand is not expected
to exhibit the same change at exactly the same values of $\hb^{-1}\al^2$.

But we can determine the 
asymptotic behavior of $Y_1, Y_2$ for small $\hb>0$ and $\z<0$ by establishing 
appropriate \textit{connection formulae}.  Since 
$|\ps(\z,\al)|/|\z|^{\frac{1}{3}}$ is integrable at $\z=\pm\infty$ uniformly
with respect to $\al$, we can replace $\z$ by $-\z$ and appeal to Theorem
\ref{main-thm} to ensure two more solutions $Y_3, Y_4$ of equation 
(\ref{schrodi-liouville-final-form}) satisfying
\begin{align}
\label{y3-approx}
Y_3(\z,\al,\hb) & =
U(-\z\sqrt{2\hb^{-1}},-\tfrac{1}{2}\hb^{-1}\al^2)+
\frac{\msf(-\z\sqrt{2\hb^{-1}},-\tfrac{1}{2}\hb^{-1}\al^2)}
{\esf(-\z\sqrt{2\hb^{-1}},-\tfrac{1}{2}\hb^{-1}\al^2)}
\asympt(\hb^{\frac{2}{3}})\\
\nn
Y_4(\z,\al,\hb) & =
\ol U(-\z\sqrt{2\hb^{-1}},-\tfrac{1}{2}\hb^{-1}\al^2)+\\
\label{y4-approx}
&\hspace{2cm}\esf(-\z\sqrt{2\hb^{-1}},-\tfrac{1}{2}\hb^{-1}\al^2)
\msf(-\z\sqrt{2\hb^{-1}},-\tfrac{1}{2}\hb^{-1}\al^2)
\asympt(\hb^{\frac{2}{3}})
\end{align}
as $\hb\downarrow0$ uniformly for $\z\leq0$ and $\al\in[0,\al_0]$.
We express $Y_1, Y_2$ in terms of $Y_3, Y_4$ and write
\begin{align}\label{sigma1}
Y_1(\z,\al,\hb) & =
\s_1^1Y_3(\z,\al,\hb)+\s_1^2Y_4(\z,\al,\hb)\\
\label{sigma2}
Y_2(\z,\al,\hb) & =
\s_2^1Y_3(\z,\al,\hb)+\s_2^2Y_4(\z,\al,\hb).
\end{align}
The connection will become clear once  we  find approximations
for the coefficients $\s_i^j$, $i,j=1,2$ in the linear relations
(\ref{sigma1}) and (\ref{sigma2}).

Evaluating at $\z=0$ both equations (\ref{sigma1}) and (\ref{sigma2})
and their derivatives, after algebraic manipulations we obtain
\be\label{sigma}
\s_i^j=(-1)^{j+1}
\frac
{\W[Y_i(\cdot,\al,\hb),Y_{5-j}(\cdot,\al,\hb)](0)}
{\W[Y_3(\cdot,\al,\hb),Y_4(\cdot,\al,\hb)](0)}
\quad\text{for}\quad
i,j=1,2.
\ee
But using the results and properties of Parabolic Cylinder Functions and their
auxiliary functions from section \ref{parabolic-cylinder-functions} in the 
appendix, we find
\begin{align*}
Y_1(0,\al,\hb)&=
\msf(0)[\sin\vphi+\asympt(\hb^{\frac{2}{3}})]\\
Y_2(0,\al,\hb)&=
\msf(0)[\cos\vphi+\asympt(\hb^{\frac{2}{3}})]\\
Y_3(0,\al,\hb)&=
\msf(0)[\sin\vphi+\asympt(\hb^{\frac{2}{3}})]\\
Y_4(0,\al,\hb)&=
\msf(0)[\cos\vphi+\asympt(\hb^{\frac{2}{3}})]\\
\frac{\partial Y_1}{\partial\z}(0,\al,\hb)&=
-\sqrt{2\hb^{-1}}\nsf(0)[\cos\vphi+\asympt(\hb^{\frac{2}{3}})]\\
\frac{\partial Y_2}{\partial\z}(0,\al,\hb)&=
\sqrt{2\hb^{-1}}\nsf(0)[\sin\vphi+\asympt(\hb^{\frac{2}{3}})]\\
\frac{\partial Y_3}{\partial\z}(0,\al,\hb)&=
\sqrt{2\hb^{-1}}\nsf(0)[\cos\vphi+\asympt(\hb^{\frac{2}{3}})]\\
\frac{\partial Y_4}{\partial\z}(0,\al,\hb)&=
-\sqrt{2\hb^{-1}}\nsf(0)[\sin\vphi+\asympt(\hb^{\frac{2}{3}})]
\end{align*}
as $\hb\downarrow0$ where $\vphi=(1+\hb^{-1}\al^2)\frac{\pi}{4}$. Finally 
substituting these estimates in (\ref{sigma}) we obtain
\begin{align}
\label{sigma11}
\sigma_1^1&=\sin(\tfrac{1}{2}\pi\hb^{-1}\al^2)+\asympt(\hb^{\frac{2}{3}})\\
\label{sigma12}
\sigma_1^2&=\cos(\tfrac{1}{2}\pi\hb^{-1}\al^2)+\asympt(\hb^{\frac{2}{3}})\\
\nn
\sigma_2^1&=\cos(\tfrac{1}{2}\pi\hb^{-1}\al^2)+\asympt(\hb^{\frac{2}{3}})\\
\nn
\sigma_2^2&=-\sin(\tfrac{1}{2}\pi\hb^{-1}\al^2)+\asympt(\hb^{\frac{2}{3}})
\end{align}
as $\hb\downarrow0$ uniformly for $\al\in[0,\al_0]$.

\section{A Quantization Condition for Eigenvalues}
\label{quanta-evs}

In this section, we will derive information about the EVs of 
(\ref{ev-problem}) by assembling the results of the previous paragraphs. 
This process will be facilitated by the equivalent equation 
(\ref{schrodi-liouville-final-form}) where EVs appear for those values
of $\al$ for which there exists a solution that is decaying at both 
$\z=-\infty$ and $\z=+\infty$ of the real line. In the end, this approach 
will help us establish a quantization condition for the EVs of 
the Dirac operator $\mathfrak{D}_{\hbar}$ and their corresponding 
norming constants. We have the following theorem.
\begin{theorem}\label{BoSo}
Suppose that $\lam=i\mu\in\{i\kappa\mid\kappa\in[A_0,A_{max}]\}$ is an 
EV of the operator $\mathfrak{D}_{\hbar}$ (see (\ref{dirac})) and 
consider the $a>0$ such that $\m=A(a)$. There exists a non-negative 
integer $n$ (depending both on $\m$ and $\hb$) for which the 
Bohr-Sommerfeld quantization condition is satisfied, i.e.
\begin{equation}\label{BS-condition}
\displaystyle\int_{-a}^{a}
\big[A^{2}(x)-\mu^2\big]^{1/2}dx=\pi\Big(n+\tfrac{1}{2}\Big)\hbar+
\mathcal{O}(\hbar^{\frac{5}{3}})
\quad\text{as}\quad\hbar\downarrow0.
\end{equation}
Conversely, for every non-negative integer $n$ such that 
$\pi(n+\frac{1}{2})\hbar \in[0,\tfrac{\pi}{2}\al_0^2]$
there exists a unique eigenvalue $\lambda =i\mu$ 
of $\mathfrak{D}_{\hbar}$
and consequently an $a>0$ with
$\mu=A(a)$ (where both $\mu$ and $a$ depend on $n,\hb$) 
satisying
\begin{equation}\nn
\Bigg|
\displaystyle\int_{-a}^{a}
\big[A^{2}(x)-\mu^2\big]^{1/2}dx
-\pi\Big(n+\frac{1}{2}\Big)\hb\Bigg|\leq C\hbar^{\frac{5}{3}}
\end{equation}
with a constant $C$ depending neither on $n$ nor on $\hbar$.
\end{theorem}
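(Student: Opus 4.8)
The plan is to translate the eigenvalue condition into a statement about the connection coefficients computed in \S\ref{connection-formulas}, and then solve the resulting approximate equation. An eigenvalue corresponds to a value of $\al$ for which equation (\ref{schrodi-liouville-final-form}) admits a solution that is recessive (exponentially decaying) at both $\z=+\infty$ and $\z=-\infty$. First I would recall from appendix \ref{parabolic-cylinder-functions} which PCF combinations are recessive at $\pm\infty$: the function $U(\z\sqrt{2\hb^{-1}},-\tfrac12\hb^{-1}\al^2)$ governing $Y_1$ is recessive as $\z\to+\infty$, while the solution recessive as $\z\to-\infty$ is (up to normalization) the one built from $Y_3$, i.e.\ $U(-\z\sqrt{2\hb^{-1}},-\tfrac12\hb^{-1}\al^2)$. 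So there is an $L^2(\R)$ solution precisely when the recessive-at-$+\infty$ solution $Y_1$ is, up to a scalar, equal to the recessive-at-$-\infty$ solution; by (\ref{sigma1}) this forces the coefficient of $Y_4$ to vanish, i.e.\ $\s_1^2=0$ (one must check $Y_4$ is dominant at $-\infty$, so its presence spoils decay, while $Y_3$ is recessive there). Using (\ref{sigma12}), $\s_1^2=\cos(\tfrac12\pi\hb^{-1}\al^2)+\asympt(\hb^{2/3})$, the eigenvalue condition becomes
\be\nn
\cos\big(\tfrac12\pi\hb^{-1}\al^2\big)=\asympt(\hb^{2/3})
\quad\text{as}\quad\hb\downarrow0 .
\ee

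Next I would convert this into the Bohr--Sommerfeld form. The zeros of $\cos$ are at odd multiples of $\tfrac\pi2$, so the condition says $\tfrac12\pi\hb^{-1}\al^2=\tfrac\pi2+n\pi+\asympt(\hb^{2/3})$ for some non-negative integer $n$, i.e.\ $\tfrac\pi2\hb^{-1}\al^2=\pi(n+\tfrac12)+\asympt(\hb^{2/3})$, hence
\be\nn
\tfrac{\pi}{2}\al^2=\pi\big(n+\tfrac12\big)\hb+\asympt(\hb^{5/3}).
\ee
Now I invoke the defining relation (\ref{alpha}) for $\al$, namely $\al^2=\tfrac{2}{\pi}\int_{-a}^{a}[-f(t,a)]^{1/2}dt$ with $f(x,a)=A^2(a)-A^2(x)$ and $A(a)=\mu$, so $[-f(t,a)]^{1/2}=[A^2(t)-\mu^2]^{1/2}$ on $[-a,a]$. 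Thus $\tfrac\pi2\al^2=\int_{-a}^{a}[A^2(x)-\mu^2]^{1/2}dx$, and substituting gives exactly (\ref{BS-condition}). For the forward direction this is essentially complete once the sign-of-error bookkeeping and the Lipschitz continuity of the map $a\mapsto\al$ are recorded; I would also note that the hypothesis $\mu\in[A_0,A_{max}]$ keeps $\al$ in a compact interval $[\al_1,\al_0]$ bounded away from $0$, which is what makes the $\asympt(\hb^{2/3})$ in (\ref{sigma12}) uniform (the estimates of \S\ref{asympt-behave-sols} are uniform in $\al\in(0,\al_0]$, but one should be a little careful near $\al=0$, which is precisely why that range is excluded here and treated separately in \S\ref{near-zero-evs}).

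For the converse, given $n$ with $\pi(n+\tfrac12)\hb\in[0,\tfrac\pi2\al_0^2]$, I would set up the scalar equation $G(\al,\hb):=\tfrac\pi2\al^2-\pi(n+\tfrac12)\hb - R(\al,\hb)=0$, where $R$ is the $\asympt(\hb^{2/3})$-remainder in the eigenvalue condition rewritten in terms of $\al^2$. Since the leading part $\tfrac\pi2\al^2$ is strictly increasing in $\al$ with derivative bounded below on the relevant range, and $R$ is small, a monotonicity/continuity argument (or an application of the intermediate value theorem together with a quantitative lower bound on $\partial_\al$ of the leading term) produces a unique solution $\al=\al(n,\hb)$ with $|\tfrac\pi2\al^2-\pi(n+\tfrac12)\hb|\le C\hb^{5/3}$, $C$ independent of $n$ and $\hb$; translating through (\ref{alpha}) back to $a$ and $\mu=A(a)$ gives the claimed estimate, and by construction $\s_1^2=0$ up to the allowed error so $\lambda=i\mu$ is an eigenvalue. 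The main obstacle I anticipate is not the algebra but making the logical equivalence \textquotedblleft $L^2$ solution exists $\iff$ $\s_1^2$ vanishes (to the stated order)\textquotedblright\ fully rigorous: one must control the approximate solutions $Y_1,\dots,Y_4$ and their Wronskians \emph{uniformly down to $\z=0$} where the connection is read off, verify that the error terms $\eps_i$ do not destroy the recessive/dominant dichotomy at $\pm\infty$, and confirm that $\s_1^2$ being $\asympt(\hb^{2/3})$-small (rather than exactly zero) still pins down the eigenvalue to within $\asympt(\hb^{5/3})$ via the non-degeneracy of $\cos'$ at its zeros. The behaviour of $U$ at its \emph{exceptional values} (where $\hb^{-1}\al^2$ is an odd integer), flagged in \S\ref{connection-formulas}, is exactly the delicate locus, and I would handle it by noting that the connection coefficients (\ref{sigma11})--(\ref{sigma12}) were derived uniformly through those values by evaluating at $\z=0$ rather than at $\mp\infty$, so the argument goes through there as well.
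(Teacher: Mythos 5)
Your forward direction reproduces the paper's argument exactly (an $L^2$ solution forces $\s_1^2=0$, then (\ref{sigma12}) gives $\cos(\tfrac12\pi\hb^{-1}\al^2)=\asympt(\hb^{2/3})$, and (\ref{alpha}) converts $\tfrac\pi2\al^2$ into the action integral in (\ref{BS-condition})), and your plan for the converse — solve the exact vanishing condition written as ``monotone leading term plus small remainder'' near $\pi(n+\tfrac12)\hb$ — is also the paper's plan, carried out there in the variables $\xsf=\fisf(a)$, $s=\hb^{-1}\xsf$. Two corrections are needed, one of them substantive.

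The genuine gap is uniqueness. Sup-norm smallness of your remainder $R(\al,\hb)$ together with the lower bound on $\partial_\al(\tfrac\pi2\al^2)$ does \emph{not} exclude several zeros of $G$ in the window $|\tfrac\pi2\al^2-\pi(n+\tfrac12)\hb|\le C\hb^{5/3}$: the remainder inherited from the $\asympt(\hb^{2/3})$ error in $\s_1^2$ is only controlled pointwise, and a priori it could oscillate in $\al$ on a scale much finer than $\hb^{5/3}$, producing extra crossings. The paper needs, and supplies, a derivative bound to rule this out: writing the eigenvalue equation as $\chi(\hb s,\hb)+\cos s=0$, it assumes two roots obeying the estimate, applies Rolle's theorem to get $\sin\tilde s=\hb\,\partial_\xsf\chi(\hb\tilde s,\hb)$ at an intermediate point, and then shows $\partial_\xsf\chi=\asympt(1)$ (equivalently $\partial_a\sigma=-\hb^{-1}\fisf'(a)\sin[\hb^{-1}\fisf(a)]+\asympt(1)$), so the right-hand side is $\asympt(\hb)$ while the left-hand side tends to $(-1)^n$ — a contradiction. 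Some control of the $\al$-derivative of the error term is indispensable and is missing from your sketch; your existence step, by contrast, is fine provided you make explicit that the intermediate value theorem is applied to the \emph{exact} equation $\s_1^2(\al,\hb)=0$, so that $\s_1^2$ vanishes exactly — your closing phrase ``$\s_1^2=0$ up to the allowed error'' would not by itself produce an eigenvalue. Finally, a misreading worth fixing: the restriction $\mu\in[A_0,A_{max}]$ does \emph{not} keep $\al$ away from $0$; since $a=B^{-1}(\mu)$ is decreasing, $\mu\uparrow A_{max}$ gives $a\downarrow0$ and $\al\downarrow0$ (the coalescing turning points), and the estimates (\ref{sigma11})--(\ref{sigma12}) are stated uniformly for $\al\in[0,\al_0]$ precisely to cover that case; what is postponed to \S\ref{near-zero-evs} is $\mu$ near $0$, where $a$ and $\al_0$ become $\hb$-dependent and large. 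This does not break your argument (the uniformity you need is available anyway), but the justification as written is based on the wrong end of the parameter range.
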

\begin{proof}
For the first part of the theorem, we observe the following.
By referring to the asymptotic form of $Y_1(\z,\al,\hb)$ as $\z\to+\infty$
and the asymptotics for $Y_3(\z,\al,\hb)$ and $Y_4(\z,\al,\hb)$
as $\z\to-\infty$ (see (\ref{y1-approx}), (\ref{e1-asympt}), 
(\ref{y3-approx}) and (\ref{y4-approx})), equation (\ref{sigma1})
implies that in the presense of an EV, the coefficient $\s_1^2$ has to
be zero. Accordingly, by (\ref{sigma12}) we have
\be\nn
\cos(\tfrac{1}{2}\pi\hb^{-1}\al^2)=\asympt(\hb^{\frac{2}{3}})
\quad\text{as}\quad\hb\downarrow0
\ee
or equivalently, there is a non-negative integer such that 
\be\label{alpha-squared}
\al^2=(2n+1)\hb+\asympt(\hb^{\frac{5}{3}})
\quad\text{as}\quad\hb\downarrow0.
\ee
In view of (\ref{alpha}), this is exactly what we wanted.

To the converse now
\footnote{Here we follow Yafaev's idea found in \S 4 of \cite{yafa2011}.}.
Let us first deal with the existence. Define the map
$\fisf:[0,a_{0}]\rightarrow\R$ by
\be\label{fi-def}
\fisf(a):=\frac{\pi}{2}\al^2(a)=\int_{-a}^{a}[-f(t,a)]^{\frac{1}{2}}dt
\ee
(cf. (\ref{alpha}) and/or the LHS of (\ref{BS-condition})).
Fix a non-negative integer $n$ 
such that $\pi(n+1/2)\hbar$ belongs to a neighborhood of 
$\fisf(\ti a)$ where $A(\ti a)=\ti\m$ and $\al(\ti a)=\ti\al$.
From $(\ref{U-asympt})$ we know that the functions
$Y_1$, $Y_3$ belong to $L^{2}(\mathbb{R}_{+})$ and 
$L^{2}(\mathbb{R}_{-})$ respectively. Define the function
\be\label{definition-sigma}
\sigma(a,\hb):=\sigma_1^2(\al(a),\hb).
\ee
It is enough to show that $\sigma$ vanishes for some $a_{n}(\hb)$ 
satisfying
\be\nn
\Big|
\fisf\big(a_{n}(\hb)\big)
-\pi\Big(n+\frac{1}{2}\Big)\hbar\Big|\leq C\hbar^{\frac{5}{3}}.
\ee
and set
\be\nn
\al(a_{n}(\hb))=\al_{n}(\hb).
\ee
Using (\ref{alpha}) and Leibniz's rule we have
\begin{equation}\label{fi-prime}
\frac{d\fisf}{da}(a)=
-A(a)A'(a)
\int_{-a}^{a}\big[A^{2}(t)-A^{2}(a)\big]^{-1/2}dt
>0.
\end{equation}

This result tells us that $\fisf$ maps a neighborhood 
$(a_{1},a_{2})$ of $\ti a$ in a one-to-one way onto the neighborhood
$(\fisf(a_{1}),\fisf(a_{2}))$ of $\fisf(\ti a)$. Let 
$\xsf=\fisf(a)$, $a\in[0,a_{0}]$, $\ti\xsf=\fisf(\ti a)$ and set
\begin{equation*}
\chi(\xsf,\hbar):=
\sigma\big(\fisf^{-1}(\xsf),\hb\big)-\cos(\hb^{-1}\xsf)
,\quad\xsf\in\fisf\big([0,a_{0}]\big).
\end{equation*} 
By definition (\ref{definition-sigma}) of $\sigma$ and 
(\ref{sigma12}) we have 
\be\nn
|\chi(\xsf,\hb)|\leq C\hb^{\frac{2}{3}}
\ee 
for $\xsf$ in a neighborhood of $\ti\xsf$, where once more the
constant $C$ is independent of $\hbar$ and $\xsf$. 

With the above definitions, our equation now reads
\begin{align*}
0 & = \sigma(a,\hb)\\
  & = \chi(\xsf,\hb)+\cos(\hb^{-1}\xsf).
\end{align*}
So this equation has to have a solution $\xsf_{n}(\hbar)$
satisfying the estimate
\begin{equation*}
\Big|\xsf_{n}(\hbar)-\pi\Big(n+\frac{1}{2}\Big)\hbar\Big|\leq C\hbar^{\frac{5}{3}}.
\end{equation*} 
A change of variables $s=\hbar^{-1}\xsf$ transforms our 
problem to the equivalent assertion that equation
\begin{equation}\label{s-eq}
\chi(\hbar s,\hbar)+\cos s=0
\end{equation}
has to have a solution with respect to $s$, namely $s_{n}(\hbar)$,
such that
\begin{equation}\label{s-estim}
\Big|s_{n}(\hbar)-\pi\Big(n+\frac{1}{2}\Big)\Big|\leq C\hbar^{\frac{2}{3}}.
\end{equation}
But this is true because 
\begin{equation*}
\chi(\hbar s,\hbar)=
\mathcal{O}(\hbar^{\frac{2}{3}})
\quad\text{as}\quad\hbar\downarrow0.
\end{equation*}

To complete the proof of the theorem, we need uniqueness 
as well. Once again fix $n\in\mathbb{Z}$. We have just proved that
for this $n$, equation (\ref{s-eq}) has a solution obeying (\ref{s-estim}).
We shall employ reductio ad absurdum. Suppose, on the contrary,
that there are $s_{1}, s_{2}$ - with $s_{1}<s_{2}$ - satisfying
(\ref{s-estim}) so that the function
\begin{equation*}
g(s):=\chi(\hbar s,\hbar)+\cos s
,\quad s\in[s_{1},s_{2}]
\end{equation*}
is zero; $g(s_{1})=g(s_{2})=0$. Furthermore, $g$ is continuous in 
$[s_{1},s_{2}]$ and differentiable in $(s_{1},s_{2})$ with
\begin{equation*}
g'(s)=\hbar\frac{\partial\chi}{\partial\xsf}
(\hbar s,\hbar)-\sin s,\quad s\in(s_{1},s_{2}).
\end{equation*}
By Rolle's theorem there is $\tilde{s}\in(s_{1},s_{2})$
such that
\begin{align*}
0 & = g'(\tilde{s})\\
  & = \hbar\frac{\partial\chi}{\partial\xsf}
(\hbar\tilde{s},\hbar)-\sin\tilde{s}.
\end{align*}

Recapping, we have found
\begin{itemize}
\item
$\tilde{s}\in(s_{1},s_{2})$ which says that
$\tilde{s}$ satisfies (\ref{s-estim}) too; namely 
\begin{equation}\label{s-tilde}
\tilde{s}=\pi\Big(n+\frac{1}{2}\Big)+\mathcal{O}(\hbar^{\frac{2}{3}})
\quad\text{as}\quad\hb\downarrow0
\end{equation}
\item
$\tilde{s}$ is a root of the equation
\begin{equation}\label{false}
\sin s=\hbar\frac{\partial\chi}{\partial\xsf}
(\hbar\tilde{s},\hbar).
\end{equation}
\end{itemize} 
Using (\ref{s-tilde}), the left-hand side of (\ref{false}) is seen
to be $(-1)^{n}$ as $\hbar\downarrow0$. Now, using (\ref{sigma12})  
observe that
\begin{equation*}
\frac{\partial\sigma}{\partial a}(a,\hbar)=
-\hbar^{-1}\fisf'(a)\sin
\Big[\hbar^{-1}\fisf(a)\Big]+
\mathcal{O}(1)
\quad\text{as}\quad\hbar\downarrow0
\end{equation*}
which eventually leads to
\begin{equation*}
\frac{\partial\chi}{\partial\xsf}(\xsf,\hbar)=\mathcal{O}(1)
\quad\text{as}\quad\hbar\downarrow0.
\end{equation*}
Hence the right-hand side of (\ref{false})
is $\mathcal{O}(\hbar)$ as $\hbar\downarrow0$ which is
a contradiction. Thus, there is only one such eigenvalue.
\end{proof}

\begin{remark}
A result like equation (\ref{BS-condition}) can also be found in 
\cite{yafa2011} for the Schr\"odinger operator, with the slightly 
better asymptotic estimate of order $\hb^2$. Although the result 
we provide here is only $\asympt(\hb^\frac{5}{3})$, it has the 
additional advantage of holding for the critical case of a double 
turning point as well.
\end{remark}

The following corollary is a straightforward application of the  
Theorem \ref{BoSo} giving the number of EVs
of the Dirac operator $\mathfrak{D}_{\hbar}$ in a fixed
(independent of $\hb$) interval not containing 0, on the imaginary axis.
\begin{corollary}\label{ev-count-yafa}
Consider an interval $(\m_1,\m_2)\subset[A_{0},A_{max}]$ and take 
$a_j$, $j=1,2$ such that $A(a_j)=\m_j$ for $j=1,2$. Then the total number
$\mathcal{N}_{\hbar}$ of eigenvalues $\lambda=i\mu$ of the Dirac operator 
$\mathfrak{D}_{\hbar}$ lying in the set 
$\{i\mu\mid \mu\in(\m_1,\m_2)\}\subset\mathbb{C}$
is equal to
\begin{equation}\label{EV-count}
\mathcal{N}_{\hbar}=\pi^{-1}\big[\fisf(a_1)-\fisf(a_2)\big]
\hbar^{-1}+R(\hbar)
\end{equation}
where $|R(\hbar)|\leq1$ for sufficiently small $\hbar$. 
\end{corollary}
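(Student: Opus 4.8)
The plan is to deduce Corollary~\ref{ev-count-yafa} directly from the Bohr--Sommerfeld quantization condition of Theorem~\ref{BoSo}. The key point is that Theorem~\ref{BoSo}, together with the monotonicity \eqref{fi-prime} of $\fisf$, sets up an (approximate) bijection between eigenvalues $\lam=i\m$ with $\m\in(\m_1,\m_2)$ and integers $n$ for which $\pi(n+\tfrac12)\hb$ lies in the interval $\big(\fisf(a_2),\fisf(a_1)\big)$ --- note the order is reversed because $\fisf$ is increasing in $a$ while $a=B^{-1}(\m)$ is decreasing in $\m$, so $\m\in(\m_1,\m_2)$ corresponds to $a\in(a_2,a_1)$ and hence $\fisf(a)\in(\fisf(a_2),\fisf(a_1))$. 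So $\mathcal{N}_\hb$ should be, up to a bounded error, the number of integers $n\geq0$ with $\fisf(a_2)<\pi(n+\tfrac12)\hb<\fisf(a_1)$, and counting lattice points in an interval of length $\fisf(a_1)-\fisf(a_2)$ in the rescaled variable gives the stated main term $\pi^{-1}[\fisf(a_1)-\fisf(a_2)]\hb^{-1}$ with remainder bounded by $1$.

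More concretely, first I would make the correspondence precise. By the first part of Theorem~\ref{BoSo}, any EV $i\m$ with $\m\in(\m_1,\m_2)$ yields, via $a=B^{-1}(\m)\in(a_2,a_1)$, an integer $n$ with $|\fisf(a)-\pi(n+\tfrac12)\hb|\leq C\hb^{5/3}$; conversely, the second part says every integer $n$ with $\pi(n+\tfrac12)\hb$ in the relevant range produces a unique such EV. I would then argue that distinct EVs give distinct integers (and vice versa) for $\hb$ small: since $\fisf$ is strictly increasing and $C^1$ with $\fisf'$ bounded away from $0$ and $\infty$ on $[a_2,a_1]$ (by \eqref{fi-prime} and compactness, using $A'(a)<0$ there), two consecutive admissible values $\pi(n+\tfrac12)\hb$ and $\pi(n+\tfrac32)\hb$ differ by $\pi\hb$, which exceeds twice the error $C\hb^{5/3}$ once $\hb$ is small, so the map $i\m\mapsto n$ is injective; uniqueness in Theorem~\ref{BoSo} gives injectivity the other way. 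Hence $\mathcal{N}_\hb$ equals the number of integers $n\geq0$ lying in a subinterval of $\big[\hb^{-1}\pi^{-1}\fisf(a_2)-\tfrac12-\varepsilon_\hb,\ \hb^{-1}\pi^{-1}\fisf(a_1)-\tfrac12+\varepsilon_\hb\big]$ with $\varepsilon_\hb=O(\hb^{2/3})\to0$.

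The final step is the elementary lattice-point count: the number of integers in any real interval $[u,v]$ is $v-u+R$ with $|R|\leq1$. Applying this with the endpoints above, the $\varepsilon_\hb$-perturbations and the $-\tfrac12$ shifts contribute only to a term absorbed into a remainder bounded by $1$ for $\hb$ sufficiently small (shrinking the $\varepsilon_\hb$ wiggle so that the total deviation of the endpoints from the ideal interval $[\hb^{-1}\pi^{-1}\fisf(a_2),\hb^{-1}\pi^{-1}\fisf(a_1)]$ stays below what the $|R|\leq1$ slack can absorb), yielding
\begin{equation}\nn
\mathcal{N}_\hb=\pi^{-1}\big[\fisf(a_1)-\fisf(a_2)\big]\hb^{-1}+R(\hb),\qquad |R(\hb)|\leq1,
\end{equation}
which is \eqref{EV-count}. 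I expect the only genuinely delicate point to be the bookkeeping at the two endpoints of $(\m_1,\m_2)$: one must check that an EV whose quantization integer sits within $O(\hb^{5/3})$ of the boundary value $\pi(n+\tfrac12)\hb=\fisf(a_j)$ cannot be miscounted in a way that breaks the $|R(\hb)|\leq1$ bound --- but since at most one integer $n$ can be that close to each of the two endpoints, this introduces an ambiguity of at most $1$ total, exactly what the statement allows. Everything else is a routine consequence of Theorem~\ref{BoSo} and the strict monotonicity \eqref{fi-prime}.
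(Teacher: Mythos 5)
Your proposal is correct and follows essentially the same route as the paper: both deduce the count from Theorem \ref{BoSo} by identifying each eigenvalue with a unique quantization point $\pi(n+\tfrac{1}{2})\hbar$ lying in $\big(\fisf(a_2),\fisf(a_1)\big)$ (using that the $O(\hbar^{5/3})$ neighborhoods are disjoint once $\hbar$ is small, since their spacing is $\pi\hbar$) and then counting these points, which gives the main term with remainder bounded by $1$. Your extra bookkeeping on injectivity and the endpoint ambiguity is just a more explicit rendering of what the paper leaves implicit.
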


\begin{figure}[H]
\centering
\begin{tikzpicture}

\draw (6.5,3) node {$\m$-space};
\draw (4,2) -- (9.5,2);
\draw [(-), thick, black] (5,2) -- (8,2)
node[anchor=south west] {$\m_2$};
\draw [(-), thick, black] (8,2) -- (5,2)
node[anchor=north east] {$\m_1$};

\node[circle,inner sep=2pt,fill=black,label=below:
{$\m$}] at (6,2) {};

\draw (0,0) -- (6,0);
\draw [(-), ultra thick, olive] (0.5,0) -- (5.5,0)
node[anchor=south west] {$a_1$};
\draw [(-), ultra thick, olive] (5.5,0) -- (0.5,0)
node[anchor=north east] {$a_2$};

\node[circle,inner sep=2pt,fill=black,label=below:
{$a$}] at (4.5,0) {};

\fill[fill=blue, opacity=0.2](4,0.2)--(4.75,0.2)--(4.75,-0.2)--
(4,-0.2)--(4,0.2);

\draw (2,1) node {$a$-space};

\draw (10,1) node {$\xsf$-space};

\draw (7,0) -- (13,0);
\draw [(-), ultra thick, red] (8,0) -- (12,0)
node[anchor=south west] {$\fisf(a_1)$};
\draw [(-), ultra thick, red] (12,0) -- (8,0)
node[anchor=north east] {$\fisf(a_2)$};

\node[circle,inner sep=2pt,fill=black,label=below:
{$\pi(n+\tfrac{1}{2})\hbar$}] at (11,0) {};

\draw[thick, ->] (3.65,-1) arc (235:305:5);

\draw (8.5,-2) node {$\fisf$};
\end{tikzpicture}
\caption{Counting eigenvalues using the Bohr-Sommerfeld condition.}
\end{figure}
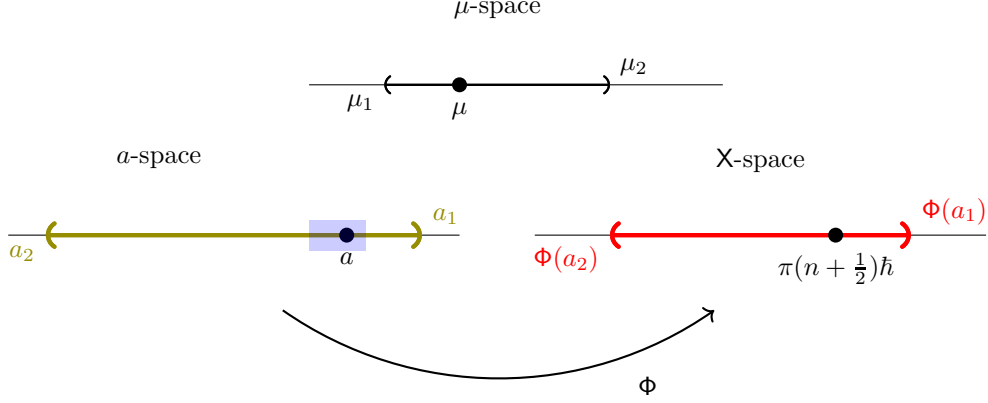

\begin{proof}
Observe that
\be\nn
\mathcal{N}_{\hbar} =
\#\big\{a\in(a_1,a_2)\mid
\text{equation (\ref{schrodi_final-version}) with}\hspace{.15cm}
\m=A(a)\hspace{.15cm}
\text{has an}\hspace{.15cm}
L^2(\R)\hspace{.15cm}\text{solution}\big\}
\ee
By Theorem \ref{BoSo}, there is only one \say{$a$-eigenvalue}  
in a neighborhood of length 
$C\hbar^{\frac{5}{3}}$ of every point $\fisf^{-1}\big(\pi(n+1/2)\hbar\big)$. 
For sufficiently small
$\hbar$ these neighborhoods are mutually disjoint. But this means that
the number $\mathcal{N}_{\hbar}$ is equal to the number of the points
$\pi(n+1/2)\hbar$ that lie in the interval 
$\big(\fisf(a_2),\fisf(a_1)\big)$, i.e.
\be\nn
\mathcal{N}_{\hbar} =
\#\big\{n\in\Z\big| 
\pi(n+1/2)\hbar\in\big(\fisf(a_2),\fisf(a_1)\big)\big\}
\ee
for sufficiently small $\hbar$. And this number is exactly 
$\pi^{-1}\big[\fisf(a_1)-\fisf(a_2)\big]
\hbar^{-1}+R(\hbar)$ with $|R(\hbar)|\leq1$.
\end{proof}

\begin{remark}
(Weyl's formula)
Using the definition (\ref{fi-def}), we can write $\fisf$ in a different way.
Indeed, we have
\begin{align*}
\fisf(a) & =
\int_{-a}^{a}\Big[A^{2}(x)-\mu^{2}\Big]^{1/2}dx=
\frac{1}{2}\int_{-a}^{a}2\Big[A^{2}(x)-\mu^{2}\Big]^{1/2}dx=\\
& = \frac{1}{2}\iint_{A^{2}(x)-k^{2}\geq\m^{2}}dkdx
\end{align*}
With the help of this last equality, the difference
$\fisf(a_1)-\fisf(a_2)$ in (\ref{EV-count}) can be equivalently written as:
\begin{align*}
\fisf(a_1)-\fisf(a_2) & =
\frac{1}{2}\iint_{A^{2}(x)-k^{2}\geq\m_{1}^2}dkdx-
\frac{1}{2}\iint_{A^{2}(x)-k^{2}\geq\m_{2}^2}dkdx\\
& = \frac{1}{2}
    \iint_{\m_1^{2}\leq A^{2}(x)-k^{2}\leq\m_2^{2}}dkdx\\
& = \frac{1}{2}\cdot
    Area\Big(\{(x,k)\in\mathbb{R}^{2}\big|\m_1^{2}\leq A^{2}(x)-k^{2}
    \leq\m_2^{2}\}\Big)    
\end{align*}
which means that the asymptotic coefficient in (\ref{EV-count}) is
the area of a region in the phase space.
Consequently, relation (\ref{EV-count}) is 
the WKB analogue of Weyl's formula with a strong estimate on the remainder.
\end{remark}

Another straightforward application of Theorem \ref{BoSo} allows us
to express the norming constants of the Dirac operator 
$\mathfrak{D}_{\hbar}$. In particular we see that the asymptotics  
obtained agree with the known fact that (because of the symmetry of 
the potential, see Chapter 3 of \cite{kmm}) the corresponding norming 
constant is exactly $(-1)^n$ for some integer $n$. But of course, our 
method is easily extensible to the non-symmetric case, which we will 
in fact consider in a sequel to this paper. We have the following 
corollary.
\begin{corollary}
\label{norm-const}
Suppose that $\lam(\hb)$ is an EV of $\mathfrak{D}_{\hbar}$.
Then there is a non-negative integer $n$ (depending both on $\lam$ 
and $\hb$) such that the corresponding norming constant has 
asymptotics 
\be\nn
(-1)^n + \asympt(\hb^\frac{2}{3})
\quad\text{as}\quad
\hbar\downarrow0.
\ee
\end{corollary}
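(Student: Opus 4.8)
The plan is to identify the norming constant of an eigenvalue $\lam=i\m$ with the connection coefficient $\sigma_1^1$ of \S\ref{connection-formulas} — up to an elementary symmetry factor, a sign convention, and an $\asympt(\hb^{2/3})$ relative error — and then to feed the Bohr--Sommerfeld relation of Theorem \ref{BoSo} into the asymptotics (\ref{sigma11}) for $\sigma_1^1$.

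First I would recall what the norming constant is. At an EV $\lam=i\m$ with $\m=A(a)$, the system (\ref{z+sh}) has an $L^2(\R;\C^2)$ eigenfunction, unique up to a scalar; the norming constant $\gamma$ is the constant relating the Jost solution recessive at $x=-\infty$ to the one recessive at $x=+\infty$ (for the symmetric potential this is the quantity known to equal $(-1)^n$, cf. Chapter 3 of \cite{kmm}). I would transport the two Jost solutions through the change of variables (\ref{change-of-var-dirac-to-schrodi}) (lower-index choice, which does not alter the discrete spectrum) and the Liouville transform $Y=\dot x^{-1/2}y$, and then check, via the PCF asymptotics of appendix \ref{parabolic-cylinder-functions}, that among $Y_1,Y_2$ it is $Y_1$ that is recessive (decaying) as $\z\to+\infty$ and among $Y_3,Y_4$ it is $Y_3$ that is recessive as $\z\to-\infty$. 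Being an EV solution, $Y_1$ must be recessive at both ends, so the connection formula (\ref{sigma1}) forces $\sigma_1^2=0$ and gives $Y_1=\sigma_1^1Y_3$ on all of $\R$. The normalization factors relating the Jost solution $\phi_+$ to $Y_1$ near $+\infty$ and $\phi_-$ to $Y_3$ near $-\infty$ — products of powers of $\dot x$ and of $(A\mp\m)^{1/2}$ together with the universal PCF prefactors — coincide because $A$ is even and the Liouville map carries $x\leftrightarrow-x$ into $\z\leftrightarrow-\z$; hence they cancel in the ratio, leaving $\gamma=\pm(\sigma_1^1)^{-1}$, the sign being fixed by the normalization convention (and, if necessary, absorbed into the choice of $n$). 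Equivalently, the Wronskian identity (\ref{sigma}) already expresses this proportionality constant directly through Wronskians at $\z=0$.

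Granting this identification, the rest is a short trigonometric estimate. By Theorem \ref{BoSo}, an EV forces (\ref{alpha-squared}), i.e. $\hb^{-1}\al^2=(2n+1)+\asympt(\hb^{2/3})$ for some non-negative integer $n$, hence $\tfrac12\pi\hb^{-1}\al^2=\pi n+\tfrac{\pi}{2}+\asympt(\hb^{2/3})$. Substituting into (\ref{sigma11}) and expanding the sine about $\pi n+\tfrac\pi2$ (using $\cos(\pi n+\tfrac\pi2)=0$ to kill the first-order term),
\[
\sigma_1^1=\sin\!\big(\tfrac12\pi\hb^{-1}\al^2\big)+\asympt(\hb^{2/3})
=\sin\!\big(\pi n+\tfrac{\pi}{2}\big)+\asympt(\hb^{2/3})
=(-1)^n+\asympt(\hb^{2/3}).
\]
Since $|\sigma_1^1|\to1$, inverting and attaching the convention-dependent sign does not disturb this leading term, so $\gamma=(-1)^n+\asympt(\hb^{2/3})$ as $\hb\downarrow0$, which is the assertion.

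The main obstacle is the first step: the careful bookkeeping needed to pull the norming constant back through the two successive changes of variables and to match the PCF asymptotics at $\z=\pm\infty$ with the conventional normalization of the Dirac Jost solutions, verifying that every factor other than $\sigma_1^1$ is even in $x$ (or otherwise cancels in the defining ratio) and that the surviving error is no worse than $\asympt(\hb^{2/3})$. Once that is settled, the remaining computation is routine, the exact value $(-1)^n$ (and its reality) in the symmetric case drops out as a consistency check, and the very same scheme — now with $\sigma_1^1$ no longer forced to be real — will yield the uniform complex estimates for the norming constants in the non-symmetric setting announced in the sequel.
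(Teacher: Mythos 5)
Your proposal follows essentially the same route as the paper: at an eigenvalue one has $\sigma_1^2=0$, so $Y_1=\sigma_1^1\,Y_3$, and combining (\ref{alpha-squared}) from Theorem \ref{BoSo} with (\ref{sigma11}) gives $\sigma_1^1=\sin\big(\pi n+\tfrac{\pi}{2}\big)+\asympt(\hb^{2/3})=(-1)^n+\asympt(\hb^{2/3})$, which is exactly the paper's two-line argument. The only difference is that you spell out the bookkeeping identifying the norming constant with this connection coefficient (cancellation of the normalization factors through the Dirac-to-Schr\"odinger and Liouville changes of variables), a step the paper treats as immediate; this added care is harmless and does not change the argument.
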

\begin{proof}
If $\lam(\hb)=i\mu(\hb)$ then by Theorem \ref{BoSo}, (\ref{alpha-squared}) 
and (\ref{sigma11}) there is a non-negative integer $n$ such that
\be\nn
\sigma_1^1=(-1)^n+\asympt(\hb^\frac{2}{3})
\quad\text{as}\quad\hb\downarrow0
\ee
where $n$ is the same as in (\ref{alpha-squared}). Thus for the 
corresponding \say{$\al$-eigenvalue}, namely $\al(\hb)$, we have
\be\nn
Y_1(\z,\al(\hb),\hb) = \big[(-1)^n+\asympt(\hb^\frac{2}{3})\big]
Y_3(\z,\al(\hb),\hb)
\quad\text{as}\quad
\hb\downarrow0
\ee
and this proves the assertion.
\end{proof} 

\section{Eigenvalues Near Zero}
\label{near-zero-evs}

It is important for the applications to the semiclassical theory of 
the focusing NLS equation, to understand the behavior of the EVs near 
0. As done in  \cite{fujii+kamvi} we will examine here two specific 
-but quite inclusive- families of data $A$: the asymptotically rational 
case and the asymptotically exponential case. We do note however that 
the formula (\ref{psi-h}) we have for the function $\psi$, makes it very 
easy to check if the behavior of the EVs near zero is good enough
for any family of $A$, defined by explicit prescribed asymptotics at 
infinity
\footnote{
Indeed, apart from the  the asymptotically rational case and the 
asymptotically exponential case presented here, we have also done so 
for functions $A$ with asymptotics $x^n\exp(-|x|^d)$ and 
$(\log|x|)^n |x|^{-d-1} $
where $n\geq0$ is an integer and $d\in\R_+$.
}
.
 
To begin with, we choose $b>0$ independent of $\hb$ and set 
$\m(\hb)=\hb^b$. Consider the equation $\m(\hb)=A(a(\hb))$. 
Observe that $\m(\hb)\downarrow0$ as $\hb\downarrow0$ while 
$a(\hb)\uparrow+\infty$ as $\hb\downarrow0$. Furthermore,
(recalling the notation of section \S\ref{from-schrodi-to-dirac}) 
$0<a_{0}(\hb)=B(\hb^b)$ where $B$ is the inverse of 
$A\big|_{[0,+\infty)}$. Consequently, if we use the abbreviation
\be\nn
\al(\hb)\equiv\al(a(\hb))
\ee 
the above can be translated in the $\al$-space as 
$0<\al_{0}(\hb)\equiv\al(a_0(\hb))<+\infty$; observe that 
$B(\hb^b)\uparrow+\infty$ as $\hb\downarrow0$ and hence by 
the definition of $\al$ (see (\ref{alpha})) we obtain
$$\al(\hb)\uparrow\big(\tfrac{2}{\pi}\|A\|_{L^1(\R)}\big)^{\frac{1}{2}}
\quad\text{as}\quad\hb\downarrow0.$$ 

In this setting, $a$ depends on $\hb$ and using (\ref{v-new}), 
(\ref{f-new}) and (\ref{p}) we have
\be\label{f-h}
f\big(x,a(\hb)\big)=A^2\big(a(\hb)\big)-A^2(x)
\ee
\be\label{g-h}
g\big(x,a(\hb)\big)=
\frac{3}{4}\Big[\frac{A'(x)}{A(x)+A\big(a(\hb)\big)}\Big]^2-
\frac{1}{2}\frac{A''(x)}{A(x)+A\big(a(\hb)\big)}
\ee 
and
\be\nn
f(x,a(\hb))=[x^2-a^2(\hb)]p(x,a(\hb)).
\ee
It is easy to see that for each value of $\hb$ the functions $f$, 
$g$ and $p$ satisfy properties $(i)$ through $(iv)$ of 
Lemma \ref{junk-continuity} in \S\ref{continuity}. This implies 
-again with the help of Lemma I in \cite{olver1975}- that for each 
$\hb$ the function
\begin{multline}\label{psi-h}
\ps(\z,\al(\hb))=
\frac{1}{4}\frac{3\z^2 +2\al^2(\hb)}{[\z^2 -\al^2(\hb)]^2}
+\frac{1}{16}\frac{\z^2 -\al^2(\hb)}{f^3(x,a(\hb))}\\
\cdot\Big\{4f(x,a(\hb))f''(x,a(\hb))-5[f'(x,a(\hb))]^2\Big\}+
[\z^2 -\al^2(\hb)]\frac{g(x,a(\hb))}{f(x,a(\hb))}
\end{multline}
is continuous in the corresponding region of the $(\z,\al)$-plane. 

So in order to have a conclusion such as Theorem \ref{main-thm} 
-and eventually a result like Theorem \ref{BoSo}- we need to investigate 
the convergence of the integral in (\ref{variation-total}) but now
with an $\al$ that depends on $\hb$, i.e.
\be
\label{hbar-total-var}
\int_{0}^{+\infty}\frac{|\ps(t,\al(\hb))|}{\Om(t\sqrt{2\hb^{-1}})}dt.
\ee

Before considering  our two specific families of data, we would like to 
remind the reader of Lemma \ref{x-at-big-zeta} and especially the 
formula (\ref{x-asymptotics}); an asymptotic relation which now reads
\be\label{x-vs-z}
x=\frac{\z^2}{2A(a(\hb))}\Big[1+\asympt\Big(\tfrac{\log\z}{\z^2}\Big)\Big]
\quad
\text{as}
\hspace{7pt}
\z\uparrow+\infty.
\ee
It shall be used to allow us understand the nature of $\ps$ for big $\z$. 

\subsection{The Rational Case}

For the moment, assume for simplicity that  in addition to the 
assumptions of paragragh \S\ref{assume}, $A$ also  satisfies
\be\label{a-rational}
A(x)=\frac{1}{|x|^{d}}\quad\text{for}\quad|x|\geq1
\ee
where $d>1$.  In this case, using (\ref{x-vs-z}) we get 
\be
\label{x-vs-z-rational}
x=\frac{1}{2}a^d(\hb)\z^2\Big[1+\asympt\Big(\tfrac{\log\z}{\z^2}\Big)\Big]
\quad
\text{as}
\hspace{7pt}
\z\uparrow+\infty.
\ee

Using (\ref{f-h}), (\ref{g-h}), (\ref{psi-h}), (\ref{a-rational}) and 
(\ref{x-vs-z-rational}) we arrive at
\be
\ps(\z,\al(\hb))=
\ps_{1}(\z,\al(\hb))\Big[1+\asympt\Big(\tfrac{\log\z}{\z^2}\Big)\Big]
\quad
\text{as}
\hspace{7pt}
\z\uparrow+\infty
\ee
uniformly in $\al$ and consequently in $\hb$, where
\begin{align*}
\ps_{1}(\z,\al(\hb))&=
\frac{1}{4}\frac{3\z^2 +2\al^2(\hb)}{[\z^2 -\al^2(\hb)]^2}\\
& -
\frac{d}{4^d}a^{2d^2+2d}(\hb)\z^{4d-4}[\z^2 -\al^2(\hb)]
\frac{\frac{2(2d+1)}{4^d}a^{2d^2}(\hb)\z^{4d}+(d-2)a^{2d}(\hb)}
{[\frac{1}{4^d}a^{2d^2}(\hb)\z^{4d}-a^{2d}(\hb)]^3}\\
& +
\frac{d}{4^d}a^{2d^2+d}(\hb)\z^{4d-4}[\z^2 -\al^2(\hb)]
\frac{-\frac{d+1}{2^{d-1}}a^{d^2}(\hb)\z^{2d}+(d-2)a^{d}(\hb)}
{[\frac{a^{d^2}(\hb)}{2^d}\z^{2d}-a^{d}(\hb)]
[\frac{a^{d^2}(\hb)}{2^d}\z^{2d}+a^{d}(\hb)]^3}.
\end{align*}

Similar considerations to the ones just presented can be applied to a more
general $A$ and the result still remains the same.  We state the following assumption.
\begin{assumption}
\label{a-rational-gen}
Consider a function $A$ satisfying Assumption \ref{primary-assume} and such that
\be
A(x)=C\frac{1}{|x|^{d}}+r(x)\quad\text{for}\quad|x|\geq1
\ee
where $C$ is a positive constant, $d>1$ and $r$ is a function satisfying
Assumption \ref{primary-assume} along with the asymptotics
\begin{itemize}
\item
$r(x)=o(|x|^{-d})$ as $x\to\pm\infty$
\item
$r'(x)=o(|x|^{-d-1})$ as $x\to\pm\infty$
\item
$r''(x)=o(|x|^{-d-2})$ as $x\to\pm\infty$.
\end{itemize}
\end{assumption}
Observe that the potential in the example treated in 
paragraph \S\ref{ex} belongs to this case.  Hence from now on in this 
subsection, when we write $A$ we mean a special one from  satisfying
this Assumption \ref{a-rational-gen}.

The asymptotics above  imply that for each $\hb$, the integral in 
(\ref{hbar-total-var}) converges; furthermore, this convergence is 
uniform in $a(\hb)$. This means that a variation of Theorem 
\ref{main-thm} can be applied to guarantee the existence of
approximate functions in this case too. Indeed, Theorem 
\ref{thm-on-exist-int-eq} comes into play and guarantees that 
everything remains unchanged; for each value of $\hb$, equation 
(\ref{schrodi-liouville-final-form}), i.e.
\be
\label{h-dependent-equation}
\frac{d^2Y}{d\z^2}=
\big[\hb^{-2}\big(\z^2-\alpha^2(\hb)\big)+\ps(\z,\al(\hb))\big]Y
\ee
has in the region $[0,+\infty)\times[0,\al_{0}(\hb)]$ of the 
$(\z,\al)$-plane solutions $Y_+$ and $\ol Y_+$ 
(being extensions in $\al$ of $Y_1$ and $Y_2$ respectively, 
cf. (\ref{y1-approx}), (\ref{y2-approx})) which are continuous, 
have continuous first and second partial $\z$-derivatives, and are 
given by 
\bea\nn
Y_+(\z,\al(\hb),\hb)=
U(\z\sqrt{2\hb^{-1}},-\tfrac{1}{2}\hb^{-1}\al^2(\hb))+
\eps(\z,\al(\hb),\hb)\\
\nn
\ol{Y}_+(\z,\al(\hb),\hb)=
\ol U(\z\sqrt{2\hb^{-1}},-\tfrac{1}{2}\hb^{-1}\al^2(\hb))+
\ol{\eps}(\z,\al(\hb),\hb)
\eea
(cf. (\ref{y1-approx}), (\ref{y2-approx})) where for the remainders
we have the relations
\begin{multline}\nn
\frac{|\eps(\z,\al(\hb),\hb)|}
{\msf(\z\sqrt{2\hb^{-1}},-\tfrac{1}{2}\hb^{-1}\al^2(\hb))},
\frac{\Big|\frac{\partial \eps}
{\partial\z}(\z,\al(\hb),\hb)\Big|}{\sqrt{2\hb^{-1}}
\nsf(\z\sqrt{2\hb^{-1}},-\tfrac{1}{2}\hb^{-1}\al^2(\hb))}\\
\leq
\frac{1}{\esf(\z\sqrt{2\hb^{-1}},-\tfrac{1}{2}\hb^{-1}\al^2(\hb))}
\Big(
\exp\big\{
\tfrac{1}{2}(\pi\hb)^{\frac{1}{2}}l(-\tfrac{1}{2}\hb^{-1}\al^2(\hb))
\mathcal{V}_{\z,+\infty}[H](\al(\hb),\hb)
\big\}
-1
\Big)
\end{multline}
and
\begin{multline}\nn
\frac{|\ol{\eps}(\z,\al(\hb),\hb)|}
{\msf(\z\sqrt{2\hb^{-1}},-\tfrac{1}{2}\hb^{-1}\al^2(\hb))},
\frac{\Big|\frac{\partial\ol{\eps}}{\partial\z}(\z,\al(\hb),\hb)\Big|}
{\sqrt{2\hb^{-1}}
\nsf(\z\sqrt{2\hb^{-1}},-\tfrac{1}{2}\hb^{-1}\al^2(\hb))}\\
\leq
\esf(\z\sqrt{2\hb^{-1}},-\tfrac{1}{2}\hb^{-1}\al^2(\hb))
\Big(
\exp\big\{
\tfrac{1}{2}(\pi\hb)^{\frac{1}{2}}l(-\tfrac{1}{2}\hb^{-1}\al^2(\hb))
\mathcal{V}_{0,\z}[H](\al(\hb),\hb)
\big\}
-1
\Big)
\end{multline}
(analogous to (\ref{junk1}), (\ref{junk2})).

Additionally, $l$ and $\var_{0,+\infty}[H]$ satisfy the same 
asymptotics as before (cf. (\ref{lambda-asympt}), 
(\ref{v-operator-asympt})) and consequently one obtains the same 
asymptotic behavior of solutions as in \S\ref{asympt-behave-sols}; 
namely
\begin{align*}
\eps(\z,\al(\hb),\hb) & =
\frac{\msf(\z\sqrt{2\hb^{-1}},-\tfrac{1}{2}\hb^{-1}\al^2(\hb))}
{\esf(\z\sqrt{2\hb^{-1}},-\tfrac{1}{2}\hb^{-1}\al^2(\hb))}
\asympt(\hb^{\frac{2}{3}})
\\
\nn
\ol{\eps}(\z,\al(\hb),\hb) & =
\esf(\z\sqrt{2\hb^{-1}},-\tfrac{1}{2}\hb^{-1}\al^2(\hb))
\msf(\z\sqrt{2\hb^{-1}},-\tfrac{1}{2}\hb^{-1}\al^2(\hb))
\asympt(\hb^{\frac{2}{3}})
\\
\nn
\frac{\partial\eps}{\partial\z}(\z,\al(\hb),\hb) & =
\frac{\nsf(\z\sqrt{2\hb^{-1}},-\tfrac{1}{2}\hb^{-1}\al^2(\hb))}
{\esf(\z\sqrt{2\hb^{-1}},-\tfrac{1}{2}\hb^{-1}\al^2(\hb))}
\asympt(\hb^{\frac{1}{6}})
\\
\nn
\frac{\partial\ol{\eps}}{\partial\z}(\z,\al(\hb),\hb) & =
\esf(\z\sqrt{2\hb^{-1}},-\tfrac{1}{2}\hb^{-1}\al^2(\hb))
\nsf(\z\sqrt{2\hb^{-1}},-\tfrac{1}{2}\hb^{-1}\al^2(\hb))
\asympt(\hb^{\frac{1}{6}})
\end{align*}
as $\hb\downarrow0$ uniformly for $\z\geq0$ and 
$\al(\hb)\in[0,\al_0(\hb)]$.

Arguing as in \S\ref{connection-formulas}, we obtain two more 
solutions of (\ref{h-dependent-equation}), namely $Y_-$ and 
$\ol{Y}_-$ (the equivalent of $Y_3$ and $Y_4$ correspondingly), 
satisfying
\begin{align*}
Y_-(\z,\al(\hb),\hb) & =
U(-\z\sqrt{2\hb^{-1}},-\tfrac{1}{2}\hb^{-1}\al^2(\hb))+
\frac{\msf(-\z\sqrt{2\hb^{-1}},-\tfrac{1}{2}\hb^{-1}\al^2(\hb))}
{\esf(-\z\sqrt{2\hb^{-1}},-\tfrac{1}{2}\hb^{-1}\al^2(\hb))}
\asympt(\hb^{\frac{2}{3}})\\
\nn
\ol{Y}_-(\z,\al(\hb),\hb) & =
\ol U(-\z\sqrt{2\hb^{-1}},-\tfrac{1}{2}\hb^{-1}\al^2(\hb))+\\
&\hspace{2cm}\esf(-\z\sqrt{2\hb^{-1}},-\tfrac{1}{2}\hb^{-1}\al^2(\hb))
\msf(-\z\sqrt{2\hb^{-1}},-\tfrac{1}{2}\hb^{-1}\al^2(\hb))
\asympt(\hb^{\frac{2}{3}})
\end{align*}
as $\hb\downarrow0$ uniformly for $\z\leq0$ and 
$\al(\hb)\in[0,\al_0(\hb)]$.

Consequently we have the same connection formulae (all the results 
of \S\ref{connection-formulas} are not altered at all). Indeed, 
expressing $Y_+$, $\ol{Y}_+$ in terms of $Y_-$, $\ol{Y}_-$ and 
writing
\begin{align*}
Y_+(\z,\al(\hb),\hb) & =
\ta_1^1Y_-(\z,\al(\hb),\hb)+\ta_1^2\ol{Y}_-(\z,\al(\hb),\hb)\\
\ol{Y}_+(\z,\al(\hb),\hb) & =
\ta_2^1Y_-(\z,\al(\hb),\hb)+\ta_2^2\ol{Y}_-(\z,\al(\hb),\hb)
\end{align*}
(confer (\ref{sigma1}), (\ref{sigma2})) in the same way we find that
\begin{align*}
\ta_1^1&=\sin(\tfrac{1}{2}\pi\hb^{-1}\al^2(\hb))+
\asympt(\hb^{\frac{2}{3}})\\
\ta_1^2&=\cos(\tfrac{1}{2}\pi\hb^{-1}\al^2(\hb))+
\asympt(\hb^{\frac{2}{3}})\\
\ta_2^1&=\cos(\tfrac{1}{2}\pi\hb^{-1}\al^2(\hb))+
\asympt(\hb^{\frac{2}{3}})\\
\ta_2^2&=-\sin(\tfrac{1}{2}\pi\hb^{-1}\al^2(\hb))+
\asympt(\hb^{\frac{2}{3}})
\end{align*}
(like (\ref{sigma11}), (\ref{sigma12})) as $\hb\downarrow0$ uniformly 
for $\al(\hb)\in[0,\al_0(\hb)]$.

Eventually, this means that the results of \S\ref{quanta-evs} for 
the EVs remain the same (eg. Theorem \ref{BoSo} but this time for 
$\m(\hb)\in[\hb^b,A_{max}]$).
Hence, we arrive at the following theorem the proof of which has 
been already provided in the previous paragraph (cf. Theorem \ref{BoSo}).
\begin{theorem}
\label{boso-near-0}
Suppose that $\lam(\hb)=i\mu(\hb)$, where $\mu(\hb)\in[\hb^b,A_{max}]$ 
for an arbitrary $\hb$-independent positive constant $b$,
is an EV of the operator $\mathfrak{D}_{\hbar}$ (see (\ref{dirac})) with 
a potential $A$ satisfying Assumption \ref{a-rational-gen}. 
Consider $a(\hb)$ such that $\m(\hb)=A[a(\hb)]$. Then there exists a 
non-negative integer $n$ for which
\begin{equation}
\displaystyle\int_{-a(\hb)}^{a(\hb)}
\big[A^{2}(x)-\mu^2(\hb)\big]^{1/2}dx=\pi\Big(n+\tfrac{1}{2}\Big)\hbar+
\mathcal{O}(\hbar^{\frac{5}{3}})
\quad\text{as}\quad\hbar\downarrow0.
\end{equation}
Conversely, for every non-negative integer $n$ such that 
$\pi(n+\frac{1}{2})\hbar\in$
$[0,\tfrac{\pi}{2}\al_0^2(\hb)]$
there exists a unique eigenvalue $i\mu_{n}(\hbar)=iA[a_{n}(\hbar)]$ satisying
\begin{equation}\nn
\Bigg|
\displaystyle\int_{-a_{n}(\hbar)}^{a_{n}(\hbar)}
\big[A^{2}(x)-\mu_{n}(\hbar)^2\big]^{1/2}dx
-\pi\Big(n+\frac{1}{2}\Big)\hb\Bigg|\leq C\hbar^{\frac{5}{3}}
\end{equation}
with a constant $C$ depending neither on $n$ nor on $\hbar$.
\end{theorem}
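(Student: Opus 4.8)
The plan is to replay, almost verbatim, the proof of Theorem~\ref{BoSo}, substituting throughout the $\hb$-dependent objects assembled above: the solutions $Y_+,\ol Y_+,Y_-,\ol Y_-$ of (\ref{h-dependent-equation}) with their asymptotic expansions, and the connection coefficients $\ta_1^1,\ta_1^2,\ta_2^1,\ta_2^2$ with the asymptotics displayed just before the statement (the exact analogues of (\ref{sigma11})--(\ref{sigma12})). The only genuinely new ingredient is uniformity: every $\asympt$-constant in those relations must be chosen independently of the integer $n$, equivalently uniformly in $\mu(\hb)$ over all of $[\hb^b,A_{max}]$ --- i.e. uniformly in $a(\hb)\in[0,a_0(\hb)]$, or in $\al(\hb)\in[0,\al_0(\hb)]$. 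This is exactly what the rational-case analysis of the present section secures: for $A$ as in (\ref{a-rational-gen}) one has $\var_{0,+\infty}[H](\al(\hb),\hb)=\asympt(\hb^{1/6})$ and $l(-\tfrac{1}{2}\hb^{-1}\al^2(\hb))=\asympt(1)$ \emph{uniformly} in $\al(\hb)$, while $\al_0(\hb)\uparrow(\tfrac{2}{\pi}\|A\|_{L^1(\R)})^{1/2}$ stays bounded; hence the bracketed $\asympt(\hb^{2/3})$ and $\asympt(\hb^{1/6})$ terms appearing in $Y_\pm,\ol Y_\pm$ and in the $\ta_i^j$ hold with constants independent of both $\hb$ and $n$.

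For the direct implication I would argue as in the first half of the proof of Theorem~\ref{BoSo}. If $\lam(\hb)=i\mu(\hb)$ is an eigenvalue, then (\ref{h-dependent-equation}) carries a solution square-integrable at both $\z=+\infty$ and $\z=-\infty$; comparing the recessive form of $Y_+$ at $+\infty$ with the recessive solution $Y_-$ and the dominant solution $\ol Y_-$ at $-\infty$ forces $\ta_1^2=0$, whence $\cos(\tfrac{1}{2}\pi\hb^{-1}\al^2(\hb))=\asympt(\hb^{2/3})$ and therefore $\al^2(\hb)=(2n+1)\hb+\asympt(\hb^{5/3})$ for some non-negative integer $n$, the constant being uniform by the remark above. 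Feeding this into (\ref{alpha}) and the identity $\fisf(a)=\tfrac{\pi}{2}\al^2(a)=\int_{-a}^{a}[A^2(x)-\mu^2(\hb)]^{1/2}dx$ (cf. (\ref{fi-def})) produces the asserted Bohr--Sommerfeld relation.

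For the converse I would transcribe the intermediate-value-plus-Rolle argument of Theorem~\ref{BoSo}. By Leibniz's rule exactly as in (\ref{fi-prime}) one has $\fisf'(a)=-A(a)A'(a)\int_{-a}^{a}[A^2(t)-A^2(a)]^{-1/2}dt>0$ for $a\in(0,a_0(\hb)]$, so $\fisf$ is an increasing homeomorphism of $[0,a_0(\hb)]$ onto $[0,\tfrac{\pi}{2}\al_0^2(\hb)]$. Setting $\s(a,\hb)=\ta_1^2(\al(a),\hb)$, $\xsf=\fisf(a)$ and $\chi(\xsf,\hb)=\s(\fisf^{-1}(\xsf),\hb)-\cos(\hb^{-1}\xsf)$, the $\ta_1^2$-asymptotics give $|\chi(\xsf,\hb)|\le C\hb^{2/3}$ uniformly in $\xsf$; the eigenvalue equation $\s(a,\hb)=0$ becomes $\cos(\hb^{-1}\xsf)+\chi(\xsf,\hb)=0$, and the substitution $s=\hb^{-1}\xsf$ turns it into $\cos s+\chi(\hb s,\hb)=0$. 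Since $\chi(\hb s,\hb)=\asympt(\hb^{2/3})$, the intermediate value theorem furnishes, for every admissible $n$, a root $s_n(\hb)=\pi(n+\tfrac{1}{2})+\asympt(\hb^{2/3})$, hence a zero $\xsf_n(\hb)=\fisf(a_n(\hb))$ of the eigenvalue equation with $|\xsf_n(\hb)-\pi(n+\tfrac{1}{2})\hb|\le C\hb^{5/3}$; pulling back through $\fisf^{-1}$ yields the required $\mu_n(\hb)=A[a_n(\hb)]$ together with the stated estimate. Uniqueness is Rolle's theorem verbatim: two distinct roots $s_1<s_2$ of $\cos s+\chi(\hb s,\hb)=0$ both obeying that estimate would give $\ti s\in(s_1,s_2)$ with $\sin\ti s=\hb\,\partial_\xsf\chi(\hb\ti s,\hb)$, whose left side is $(-1)^n+o(1)$ while the right side is $\asympt(\hb)$ once $\partial_\xsf\chi=\asympt(1)$ is known --- a contradiction.

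The step I expect to be the real obstacle is establishing precisely that $\partial_\xsf\chi=\asympt(1)$ \emph{uniformly} down to $a=a_0(\hb)$, i.e. as the turning points $\pm a(\hb)$ run off to infinity. After the cancellation of the leading $\hb^{-1}$-terms (just as in the proof of Theorem~\ref{BoSo}), $\partial_\xsf\chi$ equals the $a$-derivative of the $\asympt(\hb^{2/3})$ remainder of $\ta_1^2$ divided by $\fisf'(a)$; but $\fisf(a)\uparrow\|A\|_{L^1(\R)}$ as $a\uparrow+\infty$, so $\fisf'(a_0(\hb))$ decays (polynomially in $\hb$, since $A(a_0(\hb))=\hb^b$), and one must verify that the $a$-derivative of that remainder decays at least as fast. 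This is exactly where the explicit formula (\ref{psi-h}) for $\ps$ --- together with the explicit $a$-dependence it induces in $\var_{0,+\infty}[H]$, in $l$, and hence in the $\ta_i^j$ --- is indispensable, and it is this matching of decay rates that the hypothesis $0<b<\tfrac{5}{3}$ is designed to guarantee. Granting it, the Rolle contradiction and the rest of the argument are the routine transcription already indicated in \S\ref{quanta-evs}.
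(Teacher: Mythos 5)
Your proposal matches the paper's proof: the paper establishes this theorem simply by observing that, once the uniform-in-$\al(\hb)$ estimates for $\var_{0,+\infty}[H]$, $l(-\tfrac{1}{2}\hb^{-1}\al^2(\hb))$ and the connection coefficients $\ta_i^j$ are secured for the family (\ref{a-rational-gen}), the argument of Theorem \ref{BoSo} (vanishing of $\ta_1^2$ for the direct implication; the intermediate-value and Rolle arguments for existence and uniqueness in the converse) carries over verbatim to $\mu(\hb)\in[\hb^b,A_{max}]$, which is exactly your plan. The step you flag about $\partial_\xsf\chi=\asympt(1)$ holding uniformly as $a\uparrow a_0(\hb)$ receives no additional treatment in the paper either, so your transcription is at least as detailed as the published proof.
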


Let us here state a useful definition.
\begin{definition}
We define $a_n^{WKB}(\hb)$  such that
\be\label{wkb-a-ev}
\displaystyle\int_{-a_n^{WKB}(\hb)}^{a_n^{WKB}(\hb)}
\big[A^{2}(x)-A^2(a_n^{WKB}(\hb))\big]^{1/2}dx=
\pi\Big(n+\tfrac{1}{2}\Big)\hbar
\ee
and set
\be\label{wkb-mu-ev}
\m_{n}^{WKB}(\hb):=A(a_n^{WKB}(\hb))
\ee
to be the \textbf{WKB-approximant} to an actual 
\say{$\m$-eigenvalue} $\m_n(\hb)$ satisfying 
$$\m_{n}(\hb)=A(a_n(\hb)).$$
\end{definition}
With this in mind,  we have the following corollary.
\begin{corollary}
Let $b>0$ (independent of $\hb$) and consider a function $A$
satisfying Assumption \ref{a-rational-gen}.
Then for every non-negative integer $n$ such that 
$\pi(n+\frac{1}{2})\hbar$ belongs to $(0,\tfrac{\pi}{2}\al_0^2(\hb))$,
there exists a unique eigenvalue $i\m_{n}(\hbar)$ satisfying
\begin{equation}\nn
|\m_n(\hb)-\m_n^{WKB}(\hb)|=
\asympt(\hbar^{\frac{5}{3}+\frac{b}{d}})
\quad\text{as}\quad\hb\downarrow0
\end{equation}
uniformly for $\mu_n(\hb)$ in $[\hb^b,A_{max}]$.
\end{corollary}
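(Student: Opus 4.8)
The plan is to translate the already-established quantization estimate from Theorem \ref{boso-near-0} — which controls the \emph{action integral} $\fisf(a_n(\hb))=\int_{-a_n(\hb)}^{a_n(\hb)}[A^2(x)-\mu_n^2(\hb)]^{1/2}dx$ up to $\asympt(\hb^{5/3})$ — into a statement about the \emph{energy level} $\mu_n(\hb)=A(a_n(\hb))$ itself. The two are related through the map $\fisf$, and by \eqref{fi-prime} we have an exact formula for $\fisf'$, namely
\begin{equation*}
\frac{d\fisf}{da}(a)=-A(a)A'(a)\int_{-a}^{a}\big[A^2(t)-A^2(a)\big]^{-1/2}dt.
\end{equation*}
Since both $a_n(\hb)$ and $a_n^{WKB}(\hb)$ are pre-images under $\fisf$ of points distant $\asympt(\hb^{5/3})$ apart (the former of $\pi(n+\tfrac12)\hb+\asympt(\hb^{5/3})$, the latter of $\pi(n+\tfrac12)\hb$ exactly), the mean value theorem gives
\begin{equation*}
|a_n(\hb)-a_n^{WKB}(\hb)|\leq \frac{C\hb^{5/3}}{\inf|\fisf'|}
\end{equation*}
over the relevant $a$-range, and then a second mean value step through $A$, together with $|\mu_n-\mu_n^{WKB}|\leq |A'(\xi)|\,|a_n-a_n^{WKB}|$, converts this into the claimed bound on $|\mu_n(\hb)-\mu_n^{WKB}(\hb)|$. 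The whole content of the corollary is therefore the bookkeeping of how $\fisf'$ and $A'$ degenerate as $a\to+\infty$ (equivalently $\mu\to 0$), i.e. as we approach the bottom of the spectrum where $\mu_n(\hb)\sim\hb^b$.

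The main step is to estimate $\fisf'(a)$ from below for large $a$ under the hypothesis \eqref{a-rational-gen}, $A(x)=C|x|^{-d}+r(x)$. For large $a$, one has $A(a)\sim C a^{-d}$ and $A'(a)\sim -Cd\,a^{-d-1}$, so the prefactor $-A(a)A'(a)\sim C^2 d\,a^{-2d-1}$. The integral $\int_{-a}^{a}[A^2(t)-A^2(a)]^{-1/2}dt$ must be analyzed by splitting into a bounded central portion and the two endpoint regions near $t=\pm a$, where $A^2(t)-A^2(a)$ vanishes simply; the square-root singularity is integrable and contributes a factor comparable to $a/A(a)\sim C^{-1}a^{d+1}$ (this is exactly the kind of computation done explicitly for the model potential in \S\ref{ex}, and in Lemma \ref{x-at-big-zeta} implicitly). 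Multiplying, $\fisf'(a)\asymp a^{-d}$ for large $a$, i.e. $\fisf'(a)\asymp A(a)\asymp\mu$. Hence over the range $\mu_n(\hb)\in[\hb^b,A_{max}]$ we get $\inf|\fisf'|\asymp\hb^b$, so $|a_n-a_n^{WKB}|=\asympt(\hb^{5/3-b})$ — but this is in the $a$-variable, which blows up, so we must push back through $A$. Since $|A'(a)|\asymp a^{-d-1}\asymp A(a)^{(d+1)/d}=\mu^{(d+1)/d}$, we obtain
\begin{equation*}
|\mu_n(\hb)-\mu_n^{WKB}(\hb)|\asymp \mu^{(d+1)/d}\cdot\hb^{5/3}\cdot\mu^{-1}=\mu^{1/d}\hb^{5/3}=\asympt(\hb^{5/3+b/d}),
\end{equation*}
using $\mu\asymp\hb^{b}$, which is precisely the assertion. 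One should also check that $\mu_n^{WKB}(\hb)$ and $\mu_n(\hb)$ lie in overlapping ranges so that the mean-value points $\xi$ stay in the regime where these asymptotics are valid; this follows because the two actions differ only by $\asympt(\hb^{5/3})$, which is negligible against the spacing scale.

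The hard part will be making the endpoint analysis of $\int_{-a}^{a}[A^2(t)-A^2(a)]^{-1/2}dt$ uniform in $a$ all the way to $a=+\infty$, so that the comparability constants in $\fisf'(a)\asymp\mu$ do not themselves degenerate; this is where the precise decay rates $A'(x)=\asympt(|x|^{-2-\tau})$, $A''(x)=\asympt(|x|^{-3-\tau})$ from \S\ref{assume} (and the refinements $r'(x)=o(|x|^{-d-1})$, $r''(x)=o(|x|^{-d-2})$ in \eqref{a-rational-gen}) are needed, to control the correction $r$ relative to the leading $C|x|^{-d}$ term both in $A$ and in its derivatives. Everything else is two applications of the mean value theorem and substitution of $\mu\asymp\hb^b$. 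The uniformity in $n$ claimed in the statement is automatic: the constant $C$ in Theorem \ref{boso-near-0} is already $n$-independent, and the $\fisf'$, $A'$ estimates depend only on the value of $\mu$, not on which integer $n$ produced it.
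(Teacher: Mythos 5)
Your proposal is correct and follows essentially the same route as the paper's proof: a mean-value argument transferring the $\asympt(\hb^{5/3})$ action estimate of Theorem \ref{boso-near-0} through $\fisf^{-1}$ and $A$, using (\ref{fi-prime}) to get $\fisf'(a)\asymp a^{-d}$ and $|A'(a)|\asymp a^{-d-1}$, and then $a^{-1}\sim\hb^{b/d}$ to conclude $|A'(a)|[\fisf'(a)]^{-1}\hb^{5/3}\leq C\hb^{5/3+b/d}$. The only difference is that you sketch the justification of the lower bound $\fisf'(a)\gtrsim a^{-d}$ via the endpoint analysis of $\int_{-a}^{a}[A^2(t)-A^2(a)]^{-1/2}dt\asymp a^{d+1}$, a step the paper leaves implicit.
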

\begin{proof}
Using the result of the previous Theorem, (\ref{a-rational-gen}) and 
(\ref{fi-prime}) we have
\begin{align*}
|\m_n(\hb)-\m_n^{WKB}(\hb)|
& =
|A(a_n(\hb))-A(a_n^{WKB}(\hb))|\\
& \leq
C|A'(a)|[\fisf'(a)]^{-1}\hb^{\frac{5}{3}}
\quad(\text{for some $a$ between}\hspace{2pt}a_n(\hb),a_n^{WKB}(\hb))\\
& \leq
C a^{-d-1}a^d\hb^{\frac{5}{3}}\\
& \leq
C\hbar^{\frac{5}{3}+\frac{b}{d}}
\quad(\text{since}\hspace{4pt} a^{-1}\sim\hb^{\frac{b}{d}})
\end{align*}
where as usual $C$ denotes a generic constant.
\end{proof} 

\subsection{The Exponential Case}

In this subsection we  start with a function $A$ that satisfies the 
assumptions of \S\ref{assume} and furthermore
\be\label{a-expo}
A(x)=e^{-|x|^{\delta}},\quad\text{for}\quad|x|\geq1
\ee
where $\delta>0$. Now using (\ref{x-vs-z}) we get 
\be
\label{x-vs-z-exponentional}
x=\frac{1}{2}\z^2\exp(a^\delta(\hb))
\Big[1+\asympt\Big(\tfrac{\log\z}{\z^2}\Big)\Big]
\quad
\text{as}
\hspace{7pt}
\z\uparrow+\infty.
\ee 
To  simplify  notation, we set
\be\label{simply}
\Lambda(\hb)\equiv\Lambda(a(\hb)):=\frac{1}{2}\exp(a^\delta(\hb))
\ee

Using (\ref{f-h}), (\ref{g-h}), (\ref{psi-h}), (\ref{a-expo}),
(\ref{x-vs-z-exponentional}) and (\ref{simply}) we obtain
\be
\ps(\z,\al(\hb))=
\ps_{2}(\z,\al(\hb))\Big[1+\asympt\Big(\tfrac{\log\z}{\z^2}\Big)\Big]
\quad
\text{as}
\hspace{7pt}
\z\uparrow+\infty
\ee
uniformly in $\al$ and consequently in $\hb$, where
\begin{align*}
\ps_2(\z,\al(\hb))
& =
\frac{1}{4}\frac{3\z^2 +2\al^2(\hb)}{[\z^2 -\al^2(\hb)]^2}\\
& +
\frac{\de}{4}\z^{2\de-4}[\z^2 -\al^2(\hb)]
\frac{\La^{\de-2}(\hb)\exp\{-2\La^\de(\hb)\z^{2\de}\}}
{[\La^{-2}(\hb)-\exp\{-2\La^{\de}(\hb)\z^{2\de}\}]^3}\cdot\\
&
\hspace{1cm}
[
-\de\La^{\de}(\hb)\z^{2\de}\exp\{-2\La^{\de}(\hb)\z^{2\de}\}
-2(\de-1)\exp\{-2\La^{\de}(\hb)\z^{2\de}\}\\
&
\hspace{5.5cm}
-4\de\La^{\de-2}(\hb)\z^{2\de}
+2(\de-1)\La^{-2}(\hb)
]\\
& +
\frac{\de}{4}\z^{2\de-4}[\z^2 -\al^2(\hb)]\cdot\\
&
\hspace{1cm} 
\frac{\La^{\de-2}(\hb)\exp\{-\La^\de(\hb)\z^{2\de}\}}
{[\La^{-1}(\hb)-\exp\{-\La^{\de}(\hb)\z^{2\de}\}]
[\La^{-1}(\hb)+\exp\{-\La^{\de}(\hb)\z^{2\de}\}]^3}\cdot\\
&
\hspace{2cm}
[
\de\La^{\de}(\hb)\z^{2\de}\exp\{-\La^{\de}(\hb)\z^{2\de}\}
+2(\de-1)\exp\{-\La^{\de}(\hb)\z^{2\de}\}\\
&
\hspace{5.5cm}
-2\de\La^{\de-1}(\hb)\z^{2\de}
+2(\de-1)\La^{-1}(\hb)
].
\end{align*}

Similar arguments  can be applied to a more general  function $A$ satisfying 
the following
\begin{assumption}
\label{a-expo-gen}
Consider a function $A$ satisfying Assumption \ref{primary-assume} and such that
\be
A(x)=Ce^{-|x|^{\delta}}+r(x)\quad\text{for}\quad|x|\geq1
\ee
where $C$ is a positive constant and $r$ is a function satisfying
Assumption \ref{primary-assume} along with the asymptotics as
\begin{itemize}
\item
$r(x)=o(e^{-|x|^{\delta}})$ as $x\to\pm\infty$
\item
$r'(x)=o(|x|^{\delta-1}e^{-|x|^{\delta}})$ as $x\to\pm\infty$
\item
$r''(x)=o(|x|^{2\delta-2}e^{-|x|^{\delta}})$ as $x\to\pm\infty$.
\end{itemize}
\end{assumption}

This result above implies that for each $\hb$, the integral in 
(\ref{hbar-total-var}) converges; furthermore the convergence 
is uniform in $a(\hb)$. As in the rational case, this means that 
a variation of Theorem \ref{main-thm} can be applied to guarantee 
the existence of approximate functions in this case as well. 
The same analysis as in the previous subsection leads to a corollary 
about the EVs that lie close to $0$ (here we use once again the 
symbolism of (\ref{wkb-a-ev}), (\ref{wkb-mu-ev}) but with an $A$ 
satisfying Assumption \ref{a-expo-gen}.

\begin{corollary}
Let $b>0$ (independent of $\hb$) and consider a function $A$
satisfying Assumption \ref{a-expo-gen}. 
Then for every non-negative integer $n$ such that 
$\pi(n+\frac{1}{2})\hbar$ belongs to 
$(0,\tfrac{\pi}{2}\al_0^2(\hb))$
there exists a unique eigenvalue $i\m_{n}(\hbar)$ satisfying
\begin{equation}\nn
|\m_n(\hb)-\m_n^{WKB}(\hb)|=
\begin{cases}
\asympt\big(\frac{\hb^{5/3}}{\log\hb}\big),
\quad\text{if}\quad0<\delta<1\\
\asympt(\hbar^{5/3}),\quad\text{if}\quad\delta\geq1
\end{cases}
\quad\text{as}\quad\hb\downarrow0
\end{equation}
uniformly for $\mu_n(\hb)$ in $[\hb^b,A_{max}]$.
\end{corollary}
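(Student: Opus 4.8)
The plan is to reproduce, almost verbatim, the proof of the preceding (rational) corollary; all the infrastructure needed in the exponential case has already been assembled earlier in this subsection, namely that the potential (\ref{a-expo-gen}) satisfies hypotheses (i)--(iv) of Lemma \ref{junk-continuity}, that the variation integral (\ref{hbar-total-var}) converges uniformly in $a(\hb)$, and hence that the exponential analogue of Theorem \ref{boso-near-0} holds. So, fixing $n$ with $\pi(n+\tfrac12)\hb\in(0,\tfrac{\pi}{2}\al_0^2(\hb))$, I would invoke that analogue to obtain a unique eigenvalue $\m_n(\hb)=A(a_n(\hb))$ with $\big|\fisf(a_n(\hb))-\pi(n+\tfrac12)\hb\big|\le C\hb^{5/3}$, note that the WKB value satisfies $\fisf(a_n^{WKB}(\hb))=\pi(n+\tfrac12)\hb$ exactly by (\ref{wkb-a-ev}), and conclude $\big|\fisf(a_n(\hb))-\fisf(a_n^{WKB}(\hb))\big|\le C\hb^{5/3}$.

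Next I would apply the mean value theorem twice, exactly as in the rational case, getting points $a_\ast,a_{\ast\ast}$ between $a_n(\hb)$ and $a_n^{WKB}(\hb)$ with
\[
\big|\m_n(\hb)-\m_n^{WKB}(\hb)\big|=\big|A(a_n(\hb))-A(a_n^{WKB}(\hb))\big|\le C\,\frac{|A'(a_\ast)|}{\fisf'(a_{\ast\ast})}\,\hb^{5/3},
\]
using $\fisf'>0$ from (\ref{fi-prime}); equivalently, one may regard $\fisf$ as a function of $\m$ and use, through Leibniz' rule, that $d\fisf/d\m=-\m\int_{-a}^{a}[A^2(t)-\m^2]^{-1/2}dt$, the endpoint contributions vanishing at the turning points. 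The whole problem is thereby reduced to bounding, uniformly over the relevant range of $a$, the ratio $|A'(a)|/\fisf'(a)$, where $\fisf'(a)=-A(a)A'(a)\int_{-a}^{a}[A^2(t)-A^2(a)]^{-1/2}dt$ by (\ref{fi-prime}).

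The step I expect to be the main obstacle is the sharp asymptotics, as $a\uparrow+\infty$, of the degenerate integral $I(a):=\int_{-a}^{a}[A^2(t)-A^2(a)]^{-1/2}dt$ for $A$ of the form (\ref{a-expo-gen}). The contribution of $|t|\le1$ is harmless (there $A^2(t)-A^2(a)$ is bounded away from $0$, giving $\asympt(1)$); on $1\le|t|\le a$ one writes $A^2(t)-A^2(a)=A^2(t)\big(1-A^2(a)/A^2(t)\big)$ and performs a Watson/Laplace-type analysis near the turning point $t=a$, where $a^{\de}-t^{\de}\approx\de a^{\de-1}(a-t)$; after the change of variable $v=2\de a^{\de-1}(a-t)$ the leftover integral $\int_0^{\infty}(e^{v}-1)^{-1/2}dv$ converges, and one obtains $I(a)\asymp a^{1-\de}e^{a^{\de}}=a^{1-\de}/A(a)$, with implied constants depending only on $\de$ and on the constant $C$ in (\ref{a-expo-gen}). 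Substituting this into the formula for $\fisf'$ yields $\fisf'(a)\asymp A(a)$, hence $|A'(a)|/\fisf'(a)\asymp a^{\de-1}$. (One should also check separately that near the critical value $a=0$, where $A''(0)<0$, both $A'(a)$ and $\fisf'(a)$ vanish linearly in $a$, so the ratio stays bounded there; the large-$a$ regime --- the eigenvalues closest to $0$ --- is the only delicate one.)

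Finally I would translate this into a bound in $\hb$. Since $\m_n(\hb)\in[\hb^{b},A_{max}]$ corresponds to $a_n(\hb)\le a_0(\hb)$ with $a_0(\hb)^{\de}=b\log(1/\hb)+\asympt(1)$, i.e.\ $a_0(\hb)\asymp(\log(1/\hb))^{1/\de}$, the factor $a^{\de-1}$ is governed by $(\log(1/\hb))^{(\de-1)/\de}=(\log(1/\hb))^{1-1/\de}$: when $\de\ge1$ the exponent $1-1/\de$ is non-negative, so there is no decaying gain --- and none at all when $\de=1$ --- giving $\asympt(\hb^{5/3})$, whereas when $0<\de<1$ the exponent is negative, so $a^{\de-1}$ supplies a genuine logarithmic gain and the estimate improves to $\asympt(\hb^{5/3}/\log\hb)$. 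Tracking the constants exactly as in the rational case, so that they depend on neither $n$ nor $\hb$, then completes the argument.
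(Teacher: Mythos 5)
Your reduction is the same as the paper's: the exponential analogue of Theorem \ref{boso-near-0}, the mean value theorem and (\ref{fi-prime}) reduce everything to bounding $|A'(a)|/\fisf'(a)$ over the relevant range of $a$, and your Laplace-type analysis of the degenerate integral $I(a)=\int_{-a}^{a}[A^2(t)-A^2(a)]^{-1/2}dt$ is carried out correctly and sharply: $I(a)\asymp a^{1-\de}e^{a^{\de}}$, hence $\fisf'(a)\asymp A(a)$ and $|A'(a)|/\fisf'(a)\asymp a^{\de-1}$. The genuine gap is in your final paragraph, where this ratio is converted into the rates of the corollary. With $a\asymp(\log\hb^{-1})^{1/\de}$, your own estimate yields $|\m_n(\hb)-\m_n^{WKB}(\hb)|=\asympt\big(\hb^{5/3}(\log\hb^{-1})^{1-1/\de}\big)$. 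For $\de>1$ the factor $(\log\hb^{-1})^{1-1/\de}$ \emph{diverges}, so ``no decaying gain'' cannot simply be discarded to conclude $\asympt(\hb^{5/3})$; and for $\tfrac12<\de<1$ the gain $(\log\hb^{-1})^{1-1/\de}$ is strictly weaker than the claimed $1/\log\hb^{-1}$ (they coincide only at $\de=\tfrac12$). So, as written, your argument establishes the stated estimates only for $0<\de\le\tfrac12$ and for $\de=1$; for the remaining values of $\de$ the last step is a non sequitur relative to your own asymptotics for $I(a)$.

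For comparison, the paper does not evaluate $I(a)$ asymptotically; it inserts directly the bounds $[\fisf'(a)]^{-1}\le C\,a^{1-2\de}e^{a^{\de}}$ for $0<\de<1$ and $[\fisf'(a)]^{-1}\le C\,a^{1-\de}e^{a^{\de}}$ for $\de\ge1$ (equivalently $\fisf'(a)\gtrsim a^{2\de-1}A(a)$, resp. $a^{\de-1}A(a)$), which give the displayed ratios $a^{-\de}$, resp. $\asympt(1)$, and hence the stated rates. Note that for $\de>\tfrac12$ these lower bounds on $\fisf'$ are \emph{stronger} than what your sharp computation $\fisf'(a)\asymp A(a)$ permits, so you cannot reach the paper's displayed chain merely by ``tracking constants as in the rational case.'' To close your proof you must either justify such stronger lower bounds on $\fisf'$ independently (your two-sided evaluation of $I(a)$ indicates this is not possible for (\ref{a-expo-gen}) when $\de>\tfrac12$), or settle for — and state — the weaker rate $\asympt\big(\hb^{5/3}(\log\hb^{-1})^{1-1/\de}\big)$ and reconcile it with the corollary. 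As it stands, the proposal leaves this step open.
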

\begin{proof}
Using Theorem \ref{boso-near-0} (applied to our current case), 
(\ref{a-expo-gen}) and (\ref{fi-prime}) we get
\begin{align*}
|\m_n(\hb)-\m_n^{WKB}(\hb)|
& =
|A(a_n(\hb))-A(a_n^{WKB}(\hb))|\\
& \leq
C|A'(a)|[\fisf'(a)]^{-1}\hb^{\frac{5}{3}}
\quad(\text{for some $a$ between}\hspace{2pt}a_n(\hb),a_n^{WKB}(\hb))\\
& \leq
C\cdot
\begin{cases}
a^{\delta-1}e^{-a^\delta}a^{-2\delta+1}e^{a^\delta}\hb^{\frac{5}{3}},
\quad\text{if}\quad0<\delta<1\\
a^{\delta-1}e^{-a^\delta}a^{-\delta+1}e^{a^\delta}\hb^{\frac{5}{3}},
\quad\text{if}\quad\delta\geq1
\end{cases}\\
& \leq
C\cdot
\begin{cases}
\frac{\hb^{5/3}}{\log\hb^{-b}},
\quad\text{if}\quad0<\delta<1
\quad(\text{since}\hspace{4pt} a^{-\delta}\sim\frac{1}{\log\hb^{-b}})\\
\hbar^{5/3},
\quad\text{if}\quad\delta\geq1
\end{cases}
\end{align*}
where as usual $C$ denotes a generic constant. 
\end{proof} 

\begin{remark}
Having reached close enough to 0, at a distance $\hb^b$ with $b>1$, 
it is possible to show that even in the remaining interval 
$(0, i\hb^b]\subset\mathbb{C}$ the absolute difference 
$|\lam_n(\hb)-\lam_n^{WKB}(\hb)|$ is bounded by $\hb^b$, 
where $\lam_n(\hb)=i\m_n(\hb)$ and 
$\lam_n^{WKB}(\hb)=i\mu_n^{WKB}(\hb)$. 
The argument relies on the fact that there exists a very accurate 
semiclassical estimate of the total number of EVs due to Klaus 
\& Shaw (see e.g. section VI of \cite{fujii+kamvi}). Since neighboring 
Bohr-Sommerfeld approximations are at distance $O(\hb)$ from each 
other asymptotically, it follows that there is at most one such in 
the  interval $(0, i\hb^b]$. Because of the previous corollaries 
and the established 1-1 correspondence in $(\hb^b,A_{max})$, it is 
clear that there is also at most one EV in the  interval $(0, i\hb^b]$ 
and the absolute difference $|\lam_n(\hb)-\lam_n^{WKB}(\hb)|$ is indeed 
bounded by $\hb^b$.
\end{remark}

\begin{remark}
In \cite{hk},
we generalize the above to the case of several maxima and minima, and for somewhat more general asymtptotics 
at infinity. For our particular case, with bell-like potential, it follows that
the results of this section  also hold under the following assumption.
\begin{assumption}
\label{near-0-evs-potential-assume}
Suppose  there are real positive numbers $1<r^+ \leq s^+$,  so that
$$
\frac{C_1^+(x)}{ |x|^{s^+}} \leq A(x) \leq\frac{C_2^+(x)}{ |x|^{r^+}}\quad\text{for}\quad x>0
$$
where $C_1^+, C_2^+$ are bounded functions and $2 r^+ - s^+ > \frac{1}{3}$; and
there are real positive numbers $1<r^- \leq s^-$,  so that
$$
\frac{C_1^-(x)}{ |x|^{s^-}} \leq A(x) \leq\frac{C_2^-(x)}{ |x|^{r^-}}\quad\text{for}\quad x<0
$$
where $C_1^-, C_2^-$ are bounded functions and $2 r^- - s^- > \frac{1}{3}$.
Alternatively,  suppose there are real positive numbers $0<r\leq s$ so that
$$
C_1(x) e^{-|x|^s} \leq A(x) \leq C_2(x) e^{-|x|^r},\quad x\in\R
$$
where $C_1, C_2$ are bounded functions.
\end{assumption}

\end{remark}

\section{Scattering Coefficients}
\label{scattering}

In this section we will consider the scattering coefficients
for our Dirac operator (\ref{dirac})
\footnote{In this part, we work using as  guide ideas presented in
section IV of \cite{fujii+kamvi}.}
. 
As mentioned in \S \ref{from-schrodi-to-dirac},
the continuous spectrum of our Dirac operator is the whole real line.
So in this section we are considering $\lambda\in\R$. 

We begin with the case where this $\lambda$ is
\textit{idependent} of $\hb$. Under the change of variables
\be\nn
y_{\pm}=
\frac{u_{2}\pm u_{1}}{\sqrt{A\pm i\lambda}}
\ee
equation (\ref{ev-problem}) -with the help of (\ref{dirac})- is 
transformed to the following two independent equations
\be\nn
y_{\pm}''(x,\lam,\hb)=
\bigg\{
\hbar^{-2}[-A^2(x)-\lambda^2]+
\frac{3}{4}\Big[\frac{A'(x)}{A(x)\pm i\lam}\Big]^2-
\frac{1}{2}\frac{A''(x)}{A(x)\pm i\lam}
\bigg\}
y_{\pm}(x,\lam,\hb).
\ee

Again we only consider the lower index, so from now on we drop all 
the indices and work with the equation
\be\label{final-scattter-eq}
\frac{d^2y}{dx^2}=[-\hbar^{-2}\ti f(x,\lam)+\ti g(x,\lam)]y
\ee
where $\ti f$ and $\ti g$ satisfy
\be\label{f-tilde}
\ti f(x,\lam)=A^2(x)+\lam^2
\ee
and
\be\label{g-tilde}
\ti g(x,\lam)=
\frac{3}{4}\Big[\frac{A'(x)}{A(x)-i\lam}\Big]^2-
\frac{1}{2}\frac{A''(x)}{A(x)-i\lam}.
\ee

We have the following definitions.
\begin{definition}
We define the \textbf{error-control function} $\ti H(x,\lam)$ of equation 
(\ref{final-scattter-eq}) to be a primitive of
\be\label{error-control-scattering}
\frac{1}{\ti f^{\frac{1}{4}}(x,\lam)}
\frac{\partial^2}{\partial x^2}
\Big(\frac{1}{\ti f^{\frac{1}{4}}}\Big)(x,\lam)
-\frac{\ti g(x,\lam)}{\ti f^\frac{1}{2}(x,\lam)}.
\ee 
\end{definition}
\begin{definition}
For $x_1<x_2$, where $x_1\in[0,+\infty)$ and 
$x_2\in(0,+\infty)\cup\{+\infty\}$, we define
the \textbf{variation} of $\ti H$ in the interval $(x_1,x_2)$ to be
\be\label{var-control-scatt}
\mathfrak{V}_{x_1,x_2}[\ti H](\lam)=
\int_{x_1}^{x_2}
\bigg|\frac{1}{\ti f^{\frac{1}{4}}(t,\lam)}
\frac{\partial^2}{\partial t^2}
\Big(\frac{1}{\ti f^{\frac{1}{4}}}\Big)(t,\lam)
-\frac{\ti g(t,\lam)}{\ti f^\frac{1}{2}(t,\lam)}\bigg|dt.
\ee
\end{definition}

Observe that (\ref{f-tilde}) implies $\ti f>0$ in $\R$. Consequently, 
equation (\ref{final-scattter-eq}) has no turning points. Furthermore, 
notice that 
\begin{itemize}
\item
$\ti g$ is complex-valued
\item
$\ti f$ is twice continuously differentiable with respect to $x$ 
(a fact that comes from the properties of $A$ found in 
\S\ref{assume}) and
\item
$\ti g$ is continuous.
\end{itemize}
These properties allow one (cf. Theorem $2.2$ of \S $2.4$ 
from chapter 6 of \cite{olver1997}, along with the remarks from \S 5.1 
of the same chapter) to state that for $x$ in the (finite or infinite) 
interval $(x_1, x_2)\subseteq\R$ and $\kappa$ an arbitrary 
finite or infinite point in the closure of $(x_1, x_2)$, equation 
(\ref{final-scattter-eq}) has twice continuously differentiable solutions
\footnote{Since $\ti g$ is not real, we cannot expect these solutions to 
be complex conjugates.}
$w_\pm$ with
\be\nn
w_\pm(x,\hb)=
\ti f^{-\frac{1}{4}}(x,\lam)
\exp\Big\{\pm\frac{i}{\hb}\int\ti f^{\frac{1}{2}}(t,\lam)dt\Big\}
(1+\epsilon_\pm(x,\hb))
\ee
where
\be\label{error-term-scattering}
|\epsilon_\pm(x,\hb)|\hspace{5pt},\hspace{5pt}
\hb\ti f^{-\frac{1}{2}}(x,\hb)
\Big|\frac{\partial \epsilon_\pm}{\partial x}(x,\hb)\Big|
\leq
\exp\{\hb\mathfrak{V}_{\kappa,x}[\ti H](\lam)\}-1
\ee
provided that $\mathfrak{V}_{\kappa,x}(\ti H)<+\infty$ . As usual, 
the symbol $\int\ti f^{\frac{1}{2}}(t,\lam)dt$ denotes any primitive 
of $\ti f^{\frac{1}{2}}(t,\lam)$. It follows that
\begin{itemize}
\item
$\epsilon_\pm(x;\hb)\rightarrow0$ as $x\to\kappa$ and
\item
$\hb\ti f^{-\frac{1}{2}}(x,\hb)
\frac{\partial \epsilon_\pm}{\partial x}(x,\hb)\rightarrow0$ as 
$x\to\kappa$.
\end{itemize}

Notice from (\ref{error-control-scattering}) that $\ti H$ is independent
of $\hb$ whence the right-hand side of (\ref{error-term-scattering}) is 
$\asympt(\hb)$ as $\hb\downarrow0$ and fixed $x$. But 
$\mathfrak{V}_{x_1,x_2}[\ti H](\lam)<+\infty$ which implies that 
this $\asympt$-term is uniform with respect to $x$ since
$\mathfrak{V}_{\kappa,x}[\ti H](\lam)\leq
\mathfrak{V}_{x_1,x_2}[\ti H](\lam)$.
Hence 
\be\nn
w_\pm(x,\hb)\sim
\ti f^{-\frac{1}{4}}(x,\lam)
\exp\Big\{\pm\frac{i}{\hb}\int\ti f^{\frac{1}{2}}(t,\lam)dt\Big\}
\quad\text{as}\quad
\hb\downarrow0
\ee
uniformly in $(x_1, x_2)$.

Next we define the \textit{Jost solutions}. Equation 
(\ref{final-scattter-eq}) can be put in the form
\be\nn
-\frac{d^2y}{dx^2}+[-\hbar^{-2}A^2(x)+\ti g(x,\lam)]y=
\Big(\frac{\lambda}{\hb}\Big)^2y.
\ee
This is the Schr\"odinger equation with momentum $\frac{\lambda}{\hb}$, 
energy $(\frac{\lambda}{\hb})^2$ and a complex potential.
The Jost solutions are defined as the components of the 
bases $\{J_-^l, J_+^l\}$ and $\{J_-^r, J_+^r\}$ of the two-dimensional 
linear space of solutions of equation (\ref{final-scattter-eq}), 
which satisfy the asymptotic conditions
\begin{align*}
J_\pm^l(x,\lam) &\sim \exp\big\{\pm i\frac{\lambda}{\hb}x\big\}
\quad\text{as}\quad x\to-\infty\\
J_\pm^r(x,\lam) &\sim \exp\big\{\pm i\frac{\lambda}{\hb}x\big\}
\quad\text{as}\quad x\to+\infty.
\end{align*}

From scattering theory, we know that the \textit{reflection} 
$R(\lam,\hb)$ and \textit{transmition} $T(\lam,\hb)$ 
\textit{coefficients} for the waves incident on the potential from 
the right, can be expressed in terms of wronskians of the Jost 
solutions. More presicely, we have
\begin{align}
\label{r}
R(\lam,\hb) &= \frac{\W[J_-^l, J_-^r]}{\W[J_+^r, J_-^l]}\\
\label{t}
T(\lam,\hb) &= \frac{\W[J_+^r, J_-^r]}{\W[J_+^r, J_-^l]}.
\end{align}

The next step is to construct the Jost solutions as 
\textit{WKB solutions}. For this, we define the following four 
WKB solutions
\begin{align*}
\bar{w}_\pm^l(x,\hb)=
\ti f^{-\frac{1}{4}}(x,\lam)
\exp\Big\{\pm\frac{i}{\hb}\Big(\lam x+\int_{-\infty}^x 
[\ti f^{\frac{1}{2}}(t,\lam)-\lam]dt\Big)\Big\}
(1+\bar{\epsilon}_\pm^l(x,\hb))\\
\bar{w}_\pm^r(x,\hb)=
\ti f^{-\frac{1}{4}}(x,\lam)
\exp\Big\{\pm\frac{i}{\hb}\Big(\lam x+\int_{+\infty}^x 
[\ti f^{\frac{1}{2}}(t,\lam)-\lam]dt\Big)\Big\}
(1+\bar{\epsilon}_\pm^r(x,\hb))
\end{align*} 
which we are going to modify slightly in a while. If we take the 
limits as $x\to\pm\infty$ of the above, we instantly notice the
following relations between $\bar{w}_\pm^l, \bar{w}_\pm^l$ and 
the Jost solutions $J_\pm^l, J_\pm^r$; we have
\begin{align*}
J_\pm^l &=\lam^{\frac{1}{2}}\bar{w}_\pm^l\\
J_\pm^r &=\lam^{\frac{1}{2}}\bar{w}_\pm^r.
\end{align*}
Let now $w_\pm^l, w_\pm^r$ be four WKB solutions satisfying
\begin{align}
\label{wkb-left}
w_\pm^l(x,\hb)=
\ti f^{-\frac{1}{4}}(x,\lam)
\exp\Big\{\pm\frac{i}{\hb}\int_{0}^x 
\ti f^{\frac{1}{2}}(t,\lam)dt\Big\}
(1+\epsilon_\pm^l(x,\hb))
\\
\label{wkb-right}
w_\pm^r(x,\hb)=
\ti f^{-\frac{1}{4}}(x,\lam)
\exp\Big\{\pm\frac{i}{\hb}\int_{0}^x 
\ti f^{\frac{1}{2}}(t,\lam)dt\Big\}
(1+\epsilon_\pm^r(x,\hb)).
\end{align} 
Once again, the connnection between $w_\pm^l, w_\pm^r$ and 
$\bar{w}_\pm^l, \bar{w}_\pm^r$ is evident. It is
\begin{align*}
\bar{w}_\pm^l &=
\exp\Big\{\pm\frac{i}{\hb}\int_{-\infty}^0 
[\ti f^{\frac{1}{2}}(t,\lam)-\lam]dt\Big\}
w_\pm^l\\
\bar{w}_\pm^r &= 
\exp\Big\{\mp\frac{i}{\hb}\int_0^{+\infty} 
[\ti f^{\frac{1}{2}}(t,\lam)-\lam]dt\Big\}
w_\pm^r.
\end{align*}
Subsequently, for the Jost solutions we have
\begin{align}
\label{j-w-left}
J_\pm^l &=
\lam^{\frac{1}{2}}
\exp\Big\{\pm\frac{i}{\hb}\int_{-\infty}^0 
[\ti f^{\frac{1}{2}}(t,\lam)-\lam]dt\Big\}
w_\pm^l\\
\label{right-j-w}
J_\pm^r &=
\lam^{\frac{1}{2}}
\exp\Big\{\mp\frac{i}{\hb}\int_0^{+\infty} 
[\ti f^{\frac{1}{2}}(t,\lam)-\lam]dt\Big\}
w_\pm^r.
\end{align} 

Remember from \S\ref{assume} that the properties of $A$ show that the
function $t\mapsto\ti f^{\frac{1}{2}}(t,\lam)-\lam$ is in $L^1(\R)$.
Furthermore, we have
\be\label{scatter-norm}
\int_{-\infty}^0[\ti f^{\frac{1}{2}}(t,\lam)-\lam]dt=
\int_0^{+\infty}[\ti f^{\frac{1}{2}}(t,\lam)-\lam]dt=
\frac{1}{2}
\|\ti f^{\frac{1}{2}}(\cdot,\lam)-\lam\|_{L^1(\R)}
\ee
and we define
\be
\label{sigma-norm}
\sigma(\lambda):=\|\ti f^{\frac{1}{2}}(\cdot,\lam)-\lam\|_{L^1(\R)}
\ee
Substituting (\ref{j-w-left}), (\ref{right-j-w}), (\ref{scatter-norm}) 
and (\ref{sigma-norm}) in (\ref{r}), (\ref{t}) and using
\be\nn 
\W[J_+^r, J_-^r]=-2i\frac{\lam}{\hb}  
\ee
we have
\begin{align}
\label{refle}
R(\lam,\hb) &=
e^{i\frac{\sigma(\lambda)}{\hb}}
\frac{\W[w_-^l, w_-^r]}{\W[w_+^r, w_-^l]}
\\
\label{transmi}
T(\lam,\hb) &=
-\frac{2i}{\hb}
e^{i\frac{\sigma(\lambda)}{\hb}}
\frac{1}{\W[w_+^r, w_-^l]}.
\end{align}

Finally, using (\ref{wkb-left}), (\ref{wkb-right}) and 
(\ref{error-term-scattering}) we find that
\begin{itemize}
\item
$\W[w_-^l, w_-^r]=\asympt(1)$ as $\hb\downarrow0$ and
\item
$\W[w_+^r, w_-^l]=-\frac{2i}{\hb}[1+\asympt(\hb)]$ as $\hb\downarrow0$.
\end{itemize}
Substituting these last results in (\ref{refle}), (\ref{transmi})
we get that
\begin{align*}
R(\lam,\hb) &=
\frac{i\hb}{2}
e^{i\frac{\sigma(\lambda)}{\hb}}
\asympt(1)
\quad\text{as}\quad
\hb\downarrow0
\\
T(\lam,\hb) &=
e^{i\frac{\sigma(\lambda)}{\hb}}
[1+\asympt(\hb)]
\quad\text{as}\quad
\hb\downarrow0.
\end{align*}

Hence we have just proved that
\begin{theorem}
Let $A$ satisfy the assumptions of \S\ref{assume} and define $\sigma$ by 
(\ref{sigma-norm}). The scattering coefficients of equation 
(\ref{final-scattter-eq}) as defined by (\ref{r}) and 
(\ref{t}) respectively, satisfy
\begin{align}
R(\lam,\hb) &=
\asympt(\hb)
\quad\text{as}\quad
\hb\downarrow0
\\
| T(\lam,\hb) |&= 1+\asympt(\hb)
\quad\text{as}\quad
\hb\downarrow0
\end{align}
uniformly for $|\lambda|\geq\delta>0$.  
\end{theorem}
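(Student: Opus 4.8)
The plan is to follow the Liouville--Green (WKB) reduction already prepared for the turning-point-free equation (\ref{final-scattter-eq}) and to read off $R$ and $T$ from the Wronskian formulas (\ref{r}) and (\ref{t}). First I would record that, since $\ti f(x,\lam)=A^2(x)+\lam^2>0$ on $\R$ by (\ref{f-tilde}), the equation has no turning points, so Olver's result (Theorem $2.2$ of \S $2.4$ in chapter 6 of \cite{olver1997}, together with the remarks of \S 5.1 there) applies on each half-line $(-\infty,0]$ and $[0,+\infty)$ with base points $\kappa=-\infty$ and $\kappa=+\infty$ respectively. The hypothesis to verify is that $\mathfrak{V}_{-\infty,0}[\ti H](\lam)$ and $\mathfrak{V}_{0,+\infty}[\ti H](\lam)$ are finite; this follows from the decay assumptions on $A,A',A''$ in \S\ref{assume}, which make the integrand in (\ref{var-control-scatt}) integrable, and the resulting bounds are uniform for $|\lam|\geq\delta$ because $\ti f\geq\max(A^2,\delta^2)$ keeps all denominators away from zero. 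This produces the four WKB solutions $w_\pm^l$, $w_\pm^r$ of (\ref{wkb-left}), (\ref{wkb-right}) with error terms obeying (\ref{error-term-scattering}); and since $\ti H$ is $\hb$-independent, the right-hand side of (\ref{error-term-scattering}) is $\asympt(\hb)$ as $\hb\downarrow0$, uniformly in $x$ and in $|\lam|\geq\delta$.

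Next I would identify the Jost solutions with normalized WKB solutions. Matching the conditions $J_\pm^{l}\sim e^{\pm i\lam x/\hb}$ as $x\to-\infty$ and $J_\pm^{r}\sim e^{\pm i\lam x/\hb}$ as $x\to+\infty$ against the explicit exponents in $w_\pm^{l,r}$, using that $t\mapsto\ti f^{1/2}(t,\lam)-\lam\in L^1(\R)$, yields the identities (\ref{j-w-left}), (\ref{right-j-w}) with the phase governed by $\sigma(\lam)=\|\ti f^{1/2}(\cdot,\lam)-\lam\|_{L^1(\R)}$ from (\ref{sigma-norm}), after the symmetric split (\ref{scatter-norm}). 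Substituting (\ref{j-w-left}) and (\ref{right-j-w}) into (\ref{r}), (\ref{t}) and using the elementary value $\W[J_+^r,J_-^r]=-2i\lam/\hb$ reduces $R$ and $T$ to the Wronskian ratios (\ref{refle}), (\ref{transmi}) involving only $w_-^l$ and $w_\pm^r$.

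The final step is to evaluate those two Wronskians asymptotically. Plugging the expansions (\ref{wkb-left}), (\ref{wkb-right}) into $\W[w_-^l,w_-^r]$ and $\W[w_+^r,w_-^l]$, the leading contributions come from differentiating the exponential factors; the $\ti f^{-1/4}$ prefactors, their derivatives, and the $\epsilon$-corrections contribute only lower-order terms, all controlled by (\ref{error-term-scattering}). One obtains $\W[w_-^l,w_-^r]=\asympt(1)$ and $\W[w_+^r,w_-^l]=-\tfrac{2i}{\hb}[1+\asympt(\hb)]$ as $\hb\downarrow0$, both uniformly for $|\lam|\geq\delta$. Inserting these into (\ref{refle}) and (\ref{transmi}) gives exactly the stated asymptotics for $R(\lam,\hb)$ and $T(\lam,\hb)$.

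I expect the main obstacle to be the bookkeeping of uniformity in $\lam$ throughout: one must check that the variation bound, the norm $\sigma(\lam)$, and the constants hidden in the $\asympt$-symbols in the Wronskian estimates are all controlled uniformly on $\{|\lam|\geq\delta\}$. This is where the lower bound $\ti f\geq\max(A^2,\delta^2)$ and the fixed decay rates of $A,A',A''$ do the real work; the Wronskian algebra itself is routine once the WKB solutions are in hand.
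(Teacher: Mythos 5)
Your proposal is correct and follows essentially the same route as the paper: Olver's turning-point-free LG theorem giving the WKB solutions (\ref{wkb-left})--(\ref{wkb-right}) with $\asympt(\hb)$ errors, the identification of the Jost solutions via (\ref{j-w-left})--(\ref{right-j-w}) and the phase $\sigma(\lam)$, and the Wronskian evaluations $\W[w_-^l,w_-^r]=\asympt(1)$, $\W[w_+^r,w_-^l]=-\tfrac{2i}{\hb}[1+\asympt(\hb)]$ inserted into (\ref{refle}), (\ref{transmi}). Your added remarks on uniformity in $\lam$ via the lower bound on $\ti f$ are consistent with what the paper leaves implicit.
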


Now we turn to the case where $\lambda$ depends on $\hb$ 
($\lam=\lam(\hb)$) and particularly we let $\lambda$ approach $0$ 
like $\hb^b$ for an $\hb$-independent positive constant $b$. 
Using (\ref{f-tilde}) and (\ref{g-tilde}), we see that 
(\ref{error-control-scattering}) can be written as
\begin{align*}
\frac{1}{\ti f^{\frac{1}{4}}(x,\lam)}
\frac{\partial^2}{\partial x^2}
\Big(\frac{1}{\ti f^{\frac{1}{4}}}\Big)(x,\lam)
-\frac{\ti g(x,\lam)}{\ti f^\frac{1}{2}(x,\lam)}
&=
\frac{5}{4}[A^2(x)+\lam^2]^{-\frac{5}{2}}A(x)^2A'(x)^2\\
&\quad-
\frac{1}{2}[A^2(x)+\lam^2]^{-\frac{3}{2}}A'(x)^2\\
&\quad-
\frac{1}{2}[A^2(x)+\lam^2]^{-\frac{3}{2}}A(x)A''(x)\\
&\quad-
\frac{3}{4}[A^2(x)+\lam^2]^{-\frac{1}{2}}
\bigg[\frac{A'(x)}{A(x)-i\lam}\bigg]^2\\
&\quad+
\frac{1}{2}[A^2(x)+\lam^2]^{-\frac{1}{2}}
\frac{A''(x)}{A(x)-i\lam}.
\end{align*}
Now consider a number $q>1$. We have the following
\begin{align*}
A^2(x)+\lam^2
&=
[A^2(x)+\lam^2]^{\frac{1}{q}}[A^2(x)+\lam^2]^{1-\frac{1}{q}}\\
&>
A(x)^{\frac{2}{q}}[2A(x)\lam]^{1-\frac{1}{q}}\\
&=
2^{1-\frac{1}{q}}A(x)^{1+\frac{1}{q}}\lam^{\frac{q-1}{q}}
\end{align*}
The two estimates above show that the variation in 
(\ref{var-control-scatt}) behaves like
\be\nn 
\mathfrak{V}_{0,+\infty}[\ti H](\lam(\hb))=
\asympt\Big(\hb^{-\frac{5b(q-1)}{2}}\Big)
\quad\text{as}\quad\hb\downarrow0. 
\ee
Hence for $b>0$, $s>0$ and
$\lambda(\hb)\in[\hb^b,+\infty)$, in use of (\ref{wkb-left}) and 
(\ref{wkb-right}) we get
\begin{itemize}
\item
$\W[w_-^l, w_-^r]=\asympt\Big(\hb^{-sb}\Big)$ 
as $\hb\downarrow0$ and
\item
$\W[w_+^r, w_-^l]=
-\frac{2i}{\hb}\Big[1+\asympt\Big(\hb^{-sb}\Big)\Big]$ 
as $\hb\downarrow0$.
\end{itemize}
Substituting these last results in (\ref{refle}), (\ref{transmi})
we finally obtain that
\begin{align*}
R(\lam(\hb),\hb) &=
\frac{i\hb}{2}
e^{i\frac{\sigma(\lambda(\hb))}{\hb}}
\asympt\Big(\hb^{-sb}\Big)
\quad\text{as}\quad
\hb\downarrow0
\\
T(\lam(\hb),\hb) &=
e^{i\frac{\sigma(\lambda(\hb))}{\hb}}
\Big[1+\asympt\Big(\hb^{1-sb}\Big)\Big]
\quad\text{as}\quad
\hb\downarrow0.
\end{align*}

So, we have  showed the following
\begin{theorem}
\label{spectrum+scatter}
Let $A$ satisfy the assumptions of \S\ref{assume}. Consider $b,s>0$ 
(independent of $\hb$). Then the reflection coefficient and the transmission 
coefficient of equation (\ref{final-scattter-eq}) as defined by (\ref{r}) 
and (\ref{t}) respectively, satisfy
\begin{align}
\label{final-r}
R(\lam(\hb),\hb) &=
\asympt\Big(\hb^{1-sb}\Big)
\quad\text{as}\quad
\hb\downarrow0
\\
\label{final-t}
|T(\lam(\hb),\hb)| &=
1+\asympt\Big(\hb^{1-sb}\Big)
\quad\text{as}\quad
\hb\downarrow0
\end{align}
uniformly for $\lambda(\hb)$ in any closed interval of $[\hb^b,+\infty)$.  
\end{theorem}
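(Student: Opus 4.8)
The plan is to re-run the Liouville--Green / WKB argument that produced the fixed-$\lambda$ statement, keeping careful track of how the error-control variation degrades when $\lambda=\lambda(\hb)$ is allowed to tend to $0$ like $\hb^b$. The representations (\ref{refle}) and (\ref{transmi}) of $R$ and $T$ through the Wronskians $\W[w_-^l,w_-^r]$ and $\W[w_+^r,w_-^l]$ of the WKB solutions (\ref{wkb-left})--(\ref{wkb-right}), together with the error estimate (\ref{error-term-scattering}), remain valid verbatim for an $\hb$-dependent $\lambda$: the only hypothesis required is $\mathfrak{V}_{0,+\infty}[\ti H](\lambda)<+\infty$, and this holds for each fixed $\hb$ because the integrand in (\ref{var-control-scatt}) is assembled from $A,A',A''$ over $\ti f=A^2+\lambda^2\ge\lambda^2>0$ and decays integrably by the hypotheses of \S\ref{assume}. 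Thus the whole task reduces to controlling the \emph{rate of growth} of $\mathfrak{V}_{0,+\infty}[\ti H](\lambda(\hb))$ as $\hb\downarrow0$.

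The central estimate I would establish is
$$\mathfrak{V}_{0,+\infty}[\ti H](\lambda(\hb))=\asympt(\hb^{-5b})\qquad\text{uniformly for }\lambda(\hb)\in[\hb^b,+\infty).$$
Expanding the integrand of (\ref{var-control-scatt}) with $\ti f=A^2+\lambda^2$ (hence $\ti f'=2AA'$ and $\ti f''=2(A')^2+2AA''$) yields a sum of four terms whose numerators involve only $A^2(A')^2$, $(A')^2$, $A|A''|$ and $|A''|$ --- each integrable on $\R$ by the decay assumptions --- divided by $\ti f^{5/2}$, $\ti f^{3/2}$, $\ti f^{1/2}$ and by $|A-i\lambda|^{2}$ or $|A-i\lambda|$. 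Using the elementary bounds $\ti f^{1/2}\ge\lambda\ge\hb^b$ and $|A-i\lambda|\ge\lambda\ge\hb^b$ (here positivity of $\lambda$ matters), the worst term $\tfrac{5}{16}(\ti f')^2\ti f^{-5/2}=\tfrac{5}{4}A^2(A')^2(A^2+\lambda^2)^{-5/2}$ is bounded by $\tfrac{5}{4}\hb^{-5b}A^2(A')^2$ and so integrates to $\asympt(\hb^{-5b})$, while the remaining three terms contribute $\asympt(\hb^{-3b})$ and $\asympt(\hb^{-2b})$ and are absorbed. Restricting $\lambda(\hb)$ to a closed subinterval of $[\hb^b,+\infty)$ also keeps $\sigma(\lambda)$ and the factors $\lambda^{1/2}$ linking $w_\pm^{l,r}$ to the Jost solutions bounded, which gives the claimed uniformity. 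Consequently $\hb\,\mathfrak{V}_{0,+\infty}[\ti H](\lambda(\hb))=\asympt(\hb^{1-5b})\to0$ \emph{precisely} because $b<\tfrac{1}{5}$, and then (\ref{error-term-scattering}) yields $|\epsilon_\pm^{l,r}(x,\hb)|=\asympt(\hb^{1-5b})$ together with $|\partial_x\epsilon_\pm^{l,r}(x,\hb)|=\ti f^{1/2}(x,\lam)\,\asympt(\hb^{-5b})$, uniformly in $x$.

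With these bounds I would evaluate the two Wronskians exactly as in the fixed-$\lambda$ case. Since $w_+^r$ and $w_-^l$ carry opposite (purely imaginary) exponentials, the leading WKB part of their Wronskian is the constant $-2i/\hb$, now corrected by a relative error $\asympt(\hb^{1-5b})$, so $\W[w_+^r,w_-^l]=-\tfrac{2i}{\hb}\big[1+\asympt(\hb^{1-5b})\big]$. Since $w_-^l$ and $w_-^r$ share the \emph{same} leading WKB form, the principal parts cancel and one is left with $\ti f^{-1/2}$ times an expression linear in the $\epsilon_\pm^{l,r}$ and their $x$-derivatives; inserting $|\partial_x\epsilon_\pm^{l,r}|=\ti f^{1/2}\asympt(\hb^{-5b})$ gives $\W[w_-^l,w_-^r]=\asympt(\hb^{-5b})$, and since a Wronskian of two solutions is constant in $x$ this bound is genuine. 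Substituting both into (\ref{refle}) and (\ref{transmi}), and using $1/(1+\asympt(\hb^{1-5b}))=1+\asympt(\hb^{1-5b})$, produces exactly (\ref{final-r}) and (\ref{final-t}).

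The main obstacle is the variation estimate, and in particular pinning down where the exponent $5b$ originates: it comes entirely from the single term $(\ti f')^2\ti f^{-5/2}$, and the proof survives only if the prefactor $\hb$ in (\ref{error-term-scattering}) beats $\hb^{-5b}$ --- which is exactly the content of the hypothesis $b<\tfrac{1}{5}$. The remaining steps (validity of (\ref{refle})--(\ref{transmi}) for $\hb$-dependent $\lambda$, and the two Wronskian computations) are a verbatim repetition of the fixed-$\lambda$ argument with $\asympt(1)$ replaced by $\asympt(\hb^{-5b})$ and $\asympt(\hb)$ by $\asympt(\hb^{1-5b})$.
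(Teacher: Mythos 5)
Your proposal is correct and follows essentially the same route as the paper: bound $\mathfrak{V}_{0,+\infty}[\ti H](\lambda(\hb))$ by $\asympt(\hb^{-5b})$ using the decay of $A,A',A''$ and $\ti f\ge\lambda^2\ge\hb^{2b}$, feed this through (\ref{error-term-scattering}) to get $\W[w_-^l,w_-^r]=\asympt(\hb^{-5b})$ and $\W[w_+^r,w_-^l]=-\tfrac{2i}{\hb}[1+\asympt(\hb^{1-5b})]$, then substitute into (\ref{refle}) and (\ref{transmi}). Your explicit identification of the term $\tfrac{5}{16}(\ti f')^2\ti f^{-5/2}$ as the source of the exponent $5b$ is a detail the paper leaves implicit, but the argument is the same.
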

\begin{remark}
We can ensure that $b$ is as large  as we want by letting $s$ very small 
if we are happy with a weak error estimate $O(\hb^{\epsilon})$ for small 
positive $\epsilon$, as $\hb\downarrow0$. We can at best guarantee 
asymptotics of order $O(\hb^{1-\epsilon})$ for small positive 
$\epsilon$, if we are allowed to accept a small $b$.
\end{remark}
\begin{remark}
The results provided by Theorems 2.1 and
2.4 in \cite{fujii+kamvi}, where the potential is real-analytic,
actually imply exponential decay as $\hb\downarrow0$. Still our own 
estimate here is good enough for the applications to the theory of 
focusing NLS.
\end{remark}
\begin{remark}
We check that
\be\nn
|T(\lam(\hb),\hb)|^2-|R(\lam(\hb),\hb)|^2=
1+\asympt\Big(\hb^{1-sb}\Big)
\quad\text{as}\quad
\hb\downarrow0
\ee
as of course it should be the case.
\end{remark}

\section{Conclusion}
\label{conclusion}

The results in the last three paragraphs are stronger than those 
of \cite{fujii+kamvi} in the sense that they cover analytic 
bell-shaped potentials as well as non-analytic potentials with a 
certain smoothness. The reflection coefficient estimate is weaker 
but this does not affect the results and proofs pertaining to the 
applications to the semiclassical limit of the NLS equation. 
On the other hand, the more important Bohr-Sommerfeld estimate is 
stronger. We refer to \cite{fujii+kamvi} for the actual statements
of the precise results; the detailed proofs are now much more
straightforward. Indeed, the proof of Proposition 6.1 in 
\cite{fujii+kamvi} is now trivial in view of the uniform validity of 
the Bohr-Sommerfeld estimate near 0 and the remaining propositions 
in paragraph 6 (statements and proofs) are unchanged.

\appendix

\section{Airy Functions}
\label{airy_functions}

In this section, some basic properties of \textit{Airy functions} are 
presented. For further reading one may consult \cite{olver1997}.

\begin{figure}[H]
\includegraphics[scale=0.4]{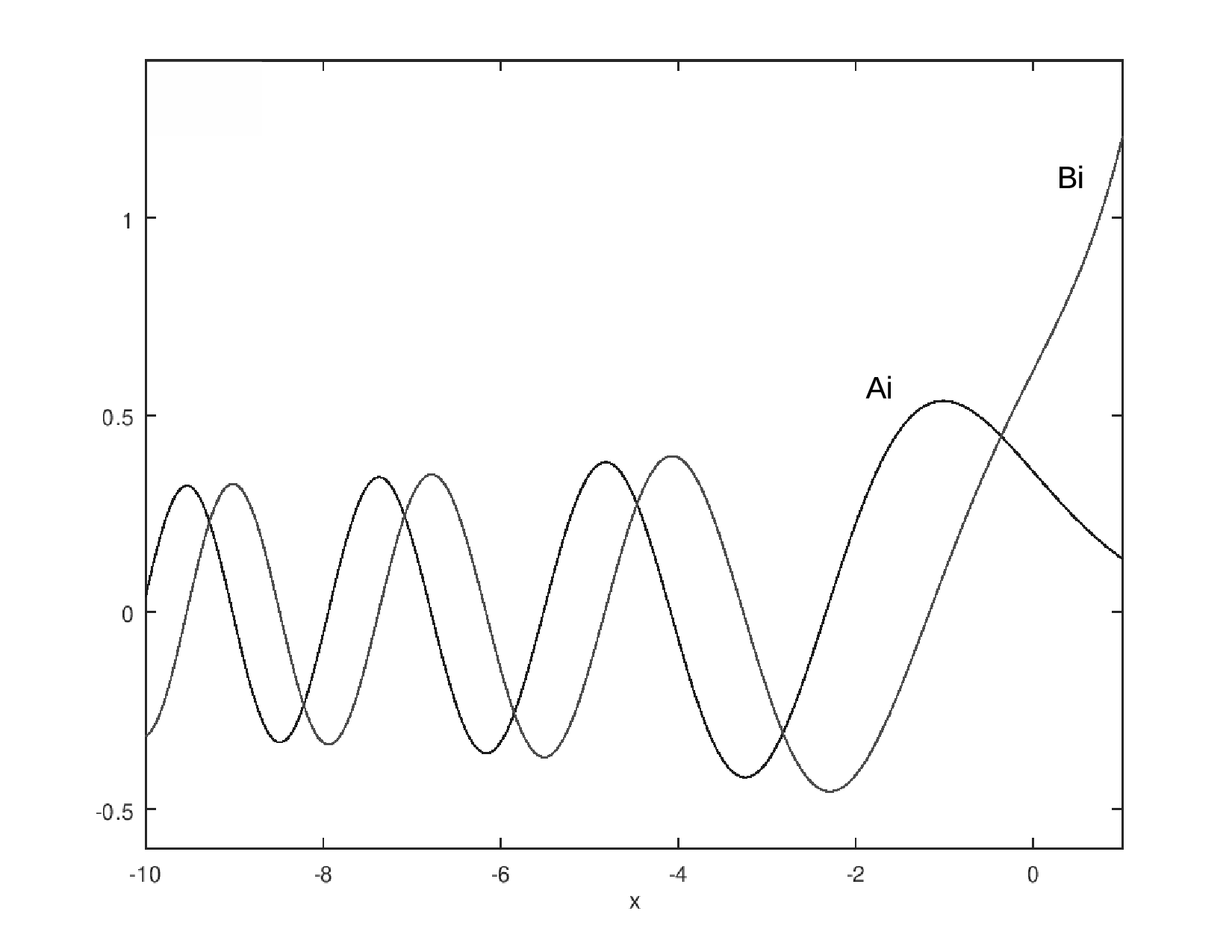}
\caption{The Airy functions $Ai$, $Bi$ on the real line.}
\end{figure}
\noindent
Consider the \textit{Airy equation}
\begin{equation*}
-\dfrac{d^{2}w}{dt^{2}}+tw=0,\quad t\in\mathbb{R}
\end{equation*}
We denote by $Ai$ and $Bi$ its two linearly independent solutions having 
the asymptotics
\begin{equation}\label{airy_1}
Ai(t)=
\frac{1}{2\sqrt{\pi}}
t^{-\frac{1}{4}}\exp\{-\tfrac{2}{3}t^\frac{3}{2}\}
\big[1+O\big(t^{-\frac{3}{2}}\big)\big]
\quad\text{as}\quad t\rightarrow+\infty
\end{equation}
and
\begin{equation}\label{airy_2}
Bi(t)=
-\frac{1}{\sqrt{\pi}}
|t|^{-\frac{1}{4}}\sin\Big(\tfrac{2}{3}|t|^\frac{3}{2}-
\tfrac{\pi}{4}\Big)+O\big(|t|^{-\frac{7}{4}}\big)
\quad\text{as}\quad t\rightarrow-\infty
\end{equation}
\noindent
Their behavior on the opposite side of the real line is known to be
\begin{equation}\label{airy_3}
Ai(t)=
\frac{1}{\sqrt{\pi}}
|t|^{-\frac{1}{4}}\sin\Big(\tfrac{2}{3}|t|^{\frac{3}{2}}+
\tfrac{\pi}{4}\Big)+O\big(|t|^{-\frac{7}{4}}\big)
\quad\text{as}\quad t\rightarrow-\infty
\end{equation}
and
\begin{equation*}
Bi(t)\leq C(1+t)^{-\frac{1}{4}}\exp\{\tfrac{2}{3}t^\frac{3}{2}\},
\quad t\geq0
\end{equation*}
where $C$ is a positive constant.
\noindent
Observe that as $t\to-\infty$, $Ai$ and $Bi$ only differ by a phase shift. 
Also $Ai(t),Bi(t)>0$ for all $t\geq0$. Note that all asymptotic relations 
(\ref{airy_1}), (\ref{airy_2}) and (\ref{airy_3}) can be differentiated 
in $t$; for example
\begin{equation*}
Ai'(t)=
-\frac{1}{\sqrt{\pi}}
|t|^\frac{1}{4}\cos\Big(\tfrac{2}{3}|t|^{\frac{3}{2}}+
\tfrac{\pi}{4}\Big)+
O\big(|t|^{-\frac{5}{4}}\big)
\quad\text{as}\quad t\rightarrow-\infty
\end{equation*}
and
\begin{equation*}
Ai'(t)=
-\frac{1}{2\sqrt{\pi}}
t^\frac{1}{4}\exp\{-\tfrac{2}{3}t^\frac{3}{2}\}
\big[1+O\big(t^{-\frac{3}{2}}\big)\big]
\quad\text{as}\quad t\rightarrow+\infty.
\end{equation*}

Another property says that
\begin{equation}\nn
|Ai(t)|\leq C(1+|t|)^{-\frac{1}{4}},\qquad t\in\mathbb{R}
\end{equation} 
where $C$ is a positive constant. The wronskian of $Ai$, $Bi$ satisfies
\begin{equation*}
\mathcal{W}\big[Ai,Bi\big](t):=Ai(t)Bi'(t)-Ai'(t)Bi(t)=
\frac{1}{\pi},
\quad t\in\mathbb{R}.
\end{equation*}

In order to have a convenient way of assessing the magnitudes of $Ai$ and 
$Bi$ we introduce a \textit{modulus function} $M$, a \textit{phase function} 
$\thv$ and a \textit{weight function} $E$ related by
\begin{equation*}
E(x)Ai(x)=M(x)\sin\thv(x),\quad \frac{1}{E(x)}Bi(x)=M(x)\cos\thv(x),
\quad x\in\mathbb{R}.
\end{equation*}
Actually, we choose $E$ as follows. Denote by $c_{*}$ the biggest 
negative root of the equation $Ai(x)=Bi(x)$ 
(numerical calculations show that $c_{*}=-0.36605$ correct up to five 
decimal places); then define
\begin{equation}\nn
E(x)=
\begin{cases}
1,\quad x\leq c_{*}\\
\sqrt{\frac{Bi(x)}{Ai(x)}},\quad x>c_{*}
\end{cases}
\end{equation}
With this choice in mind, $M$, $\theta$ become
\begin{equation}\nn
M(x)=
\begin{cases}
\sqrt{Ai^{2}(x)+Bi^{2}(x)},\quad x\leq c_{*}\\
\sqrt{2Ai(x)Bi(x)},\quad x>c_{*}
\end{cases}\text{and}\quad
\thv(x)=
\begin{cases}
\arctan\Big[\frac{Ai(x)}{Bi(x)}\Big],\quad x\leq c_{*}\\
\frac{\pi}{4},\quad x>c_{*}
\end{cases}
\end{equation}
where the branch of the inverse tangent is continuous and equal to
$\frac{\pi}{4}$ at $x=c_{*}$. For these functions the asymptotics for 
large $|x|$ read
\begin{align*}
E(x)\sim & 
\begin{cases}
1,\quad x\rightarrow-\infty\\
\sqrt{2}\exp\{\tfrac{2}{3}t^\frac{3}{2}\},
\quad x\rightarrow+\infty
\end{cases}\\
M(x)\sim & \frac{1}{\sqrt{\pi}}
|x|^{-\frac{1}{4}},
\quad |x|\rightarrow+\infty\\
\thv(x)= & 
\begin{cases}
\frac{2}{3}
|x|^\frac{3}{2}+
\frac{\pi}{4}+
\mathcal{O}\big(\frac{3}{2}|x|^{-\frac{3}{2}}\big),
\quad x\rightarrow-\infty\\
\frac{\pi}{4},\quad x\rightarrow+\infty
\end{cases}
\end{align*}

\section{Parabolic Cylinder Functions}
\label{parabolic-cylinder-functions}

The result of the main theorem found in section \S\ref{approximate-solutions}, 
involves parabolic cylinder functions (cf. \cite{abramo+stegun}). 
So in this section we state a few properties which will be in heavy use, 
especially about their asymptotic character, wronskians and zeros. We
prove none of them. For a rigorous exposition on parabolic
cylinder functions, one may consult \S 5 of \cite{olver1975} or \S 12 of
\cite{olver-et-al} and the references therein.

Consider \textit{Weber's equation}
\be\label{pcf}
\frac{d^2 w}{dx^2}=\Big(\tfrac{1}{4}x^2+b\Big)w.
\ee
The behavior of its solutions depends on the sign of $b$. When $b$ is 
negative then there exist two turning points $\pm 2\sqrt{-b}$. The
solutions are of oscillatory type in the interval between these points
but not in the exterior intervals. When $b>0$ there are no
real turning points and there are no oscillations at all. Since only
the case $b\leq0$ will be of interest to us, from now on we seldom mention 
properties having to do with the other case.

Standard solutions of (\ref{pcf}) are $U(\pm x,b)$ and $\ol U(\pm x,b)$ 
defined by
\begin{multline*}
U(\pm x,b)=
\frac{\pi^{\frac{1}{2}}2^{-\frac{1}{4}(2b+1)}}{\G(\frac{3}{4}+\frac{1}{2}b)}
e^{-\frac{1}{4}x^2}
\Hypergeometric{1}{1}{\tfrac{1}{4}+\tfrac{1}{2}b}{\tfrac{1}{2}}{\tfrac{1}{2}x^2}\\
\mp
\frac{\pi^{\frac{1}{2}}2^{-\frac{1}{4}(2b-1)}}{\G(\frac{1}{4}+\frac{1}{2}b)}
x
e^{-\frac{1}{4}x^2}
\Hypergeometric{1}{1}{\tfrac{3}{4}+\tfrac{1}{2}b}{\tfrac{3}{2}}{\tfrac{1}{2}x^2}
\end{multline*}
\begin{multline*}
\ol U(\pm x,b)=
\pi^{-\frac{1}{2}}2^{-\frac{1}{4}(2b+1)}\G(\tfrac{1}{4}-\tfrac{1}{2}b)
\sin(\tfrac{3}{4}\pi-\tfrac{1}{2}b\pi)
e^{-\frac{1}{4}x^2}
\Hypergeometric{1}{1}{\tfrac{1}{4}+\tfrac{1}{2}b}{\tfrac{1}{2}}{\tfrac{1}{2}x^2}\\
\mp
\pi^{-\frac{1}{2}}2^{-\frac{1}{4}(2b-1)}\G(\tfrac{3}{4}-\tfrac{1}{2}b)
\sin(\tfrac{5}{4}\pi-\tfrac{1}{2}b\pi)
x
e^{-\frac{1}{4}x^2}
\Hypergeometric{1}{1}{\tfrac{3}{4}+\tfrac{1}{2}b}{\tfrac{3}{2}}{\tfrac{1}{2}x^2}
\end{multline*}
where $_1 F_1$ denotes the confluent hypergeometric function 
(again cf. \cite{abramo+stegun}). The pair
$U(x,b), \ol U(x,b)$ is a numerically satisfactory set of solutions (in the 
sense of \cite{miller1950}) when $x\geq0$ and $b\leq0$; both are continuous 
in $x$ and $b$ in this region.

For $b\in\R$, their values at $x=0$ obey
\begin{align*}
U(0,b) &=
\pi^{-\frac{1}{2}}2^{-\frac{1}{4}(2b+1)}
\G(\tfrac{1}{4}-\tfrac{1}{2}b)\sin(\tfrac{\pi}{4}-\tfrac{1}{2}b\pi)\\
U'(0,b) &= -
\pi^{-\frac{1}{2}}2^{-\frac{1}{4}(2b-1)}
\G(\tfrac{3}{4}-\tfrac{1}{2}b)\sin(\tfrac{3\pi}{4}-\tfrac{1}{2}b\pi)\\
\ol U(0,b) &=
\pi^{-\frac{1}{2}}2^{-\frac{1}{4}(2b+1)}
\G(\tfrac{1}{4}-\tfrac{1}{2}b)\sin(\tfrac{3\pi}{4}-\tfrac{1}{2}b\pi)\\
\ol U'(0,b) &= -
\pi^{-\frac{1}{2}}2^{-\frac{1}{4}(2b-1)}
\G(\tfrac{3}{4}-\tfrac{1}{2}b)\sin(\tfrac{5\pi}{4}-\tfrac{1}{2}b\pi).
\end{align*} 

Those values of $b$ that make the Gamma functions in the definitions of
$U$ and $\ol U$ infinite (the Gamma function has simple poles at the 
non-positive integers), are called \textit{exceptional values}. 
For a fixed $b\in\R$ other than an exceptional value,
the behaviors of $U$ and $\ol U$ as $x\to+\infty$ satisfy
\begin{align}
\label{U-asympt}
U(x,b) & \sim
x^{-b-\frac{1}{2}}e^{-\frac{1}{4}x^2}\\
\nn
U'(x,b) & \sim
-\frac{1}{2}x^{-b+\frac{1}{2}}e^{-\frac{1}{4}x^2}\\
\nn
\ol U(x,b) & \sim
\sqrt{\frac{2}{\pi}}
\G(\tfrac{1}{2}-b)
x^{b-\frac{1}{2}}e^{\frac{1}{4}x^2}\\
\nn
\ol U'(x,b) & \sim
\frac{1}{\sqrt{2\pi}}
\G(\tfrac{1}{2}-b)
x^{b+\frac{1}{2}}e^{\frac{1}{4}x^2}.
\end{align}
These estimates are uniform in $b$ when $b$ takes values over a fixed
compact interval not containing exceptional values.

For the wronskian of $U(\cdot,b),\ol U(\cdot,b)$ we have
\be\label{wronskian-pcf}
\W[U(\cdot,b),\ol U(\cdot,b)](x)=
\sqrt{\frac{2}{\pi}}
\G\Big(\frac{1}{2}-b\Big),
\quad x\in\R.
\ee

When $b=0$ the standard solutions of equation (\ref{pcf}) are related to
the \textit{modified Bessel functions} $K_{\frac{1}{4}}$ and 
$I_{\frac{1}{4}}$ in the following way. For $x\geq0$ we have
\begin{align*}
U(x,0) & =
\frac{1}{\sqrt{2\pi}}
x^{\frac{1}{2}}K_{\frac{1}{4}}(\tfrac{1}{4}x^2)\\
\ol U(x,0) & =
\sqrt{\pi}x^{\frac{1}{2}}I_{\frac{1}{4}}(\tfrac{1}{4}x^2)+
\frac{1}{\sqrt{2\pi}}x^{\frac{1}{2}}K_{\frac{1}{4}}(\tfrac{1}{4}x^2).
\end{align*}

In order to express the character of these standard solutions
for large (in absolute value) negative $b$, we need some preparations 
first. Take $\nu\gg1$ to be a large positive number and set
$b=-\frac{1}{2}\nu^2$ and $x=\nu y\sqrt{2}$ where $y\geq0$. If we
consider the fuction $\eta$ to be
\be\label{eta-definition}
\eta(y)=
\begin{cases}
-[\frac{3}{2}\int_{y}^{1}\sqrt{1-s^2}ds]^{\frac{2}{3}},
\quad 0\leq y\leq1\\
[\frac{3}{2}\int_{1}^{y}\sqrt{s^2-1}ds]^{\frac{2}{3}},
\quad y\geq1
\end{cases}
\ee
then as $\nu\to+\infty$ we have
\begin{align}\label{u-asymptotics}
U(\nu y\sqrt{2},-\tfrac{1}{2}\nu^2) & =
\frac{2^{\frac{1}{2}}\pi^{\frac{1}{4}}
\G(\tfrac{1}{2}+\frac{1}{2}\nu^2)^{\frac{1}{2}}}
{\nu^{\frac{1}{6}}} 
\Big(\frac{\eta}{y^2-1}\Big)^{\frac{1}{4}}
\Big[
\Ai(\nu^{\frac{4}{3}}\eta)+
\frac{M(\nu^{\frac{4}{3}}\eta)}{E(\nu^{\frac{4}{3}}\eta)}
\asympt(\nu^{-2})
\Big]
\\
\label{ubar-asymptotics}
\ol U(\nu y\sqrt{2},-\tfrac{1}{2}\nu^2) & =
\frac{2^{\frac{1}{2}}\pi^{\frac{1}{4}}
\G(\tfrac{1}{2}+\frac{1}{2}\nu^2)^{\frac{1}{2}}\eta^{\frac{1}{4}}}
{\nu^{\frac{1}{6}}(y^2-1)^{\frac{1}{4}}} 
\Big[
\Bi(\nu^{\frac{4}{3}}\eta)+
M(\nu^{\frac{4}{3}}\eta)E(\nu^{\frac{4}{3}}\eta)
\asympt(\nu^{-2})
\Big]
\end{align}
where $\Ai$, $\Bi$, $E$ and $M$ are the standard Airy functions'
terminology (cf. section \ref{airy_functions} in the appendix).

For $b\leq0$, the number of zeros of $U(\cdot,b)$ in the interval 
$[0,+\infty)$ is $\floor{\tfrac{1}{4}-\frac{1}{2}b}$ while
$\ol U(\cdot,b)$ has $\floor{\tfrac{3}{4}-\frac{1}{2}b}$ zeros
in $[0,+\infty)$. Actually, the zeros of $U(\cdot,b)$ and 
$\ol U(\cdot,b)$ do not cross each other. They interlace, with the 
largest one belonging to $\ol U(\cdot,b)$. For sufficiently large 
$|b|$, all the real zeros of these two functions lie to the left 
of $2\sqrt{-b}$, the positive turning point of Weber's equation
\footnote{
For $U(\cdot,b)$, this result holds for all $b\leq0$.
}
. 

\begin{figure}[H]
\centering
\includegraphics[scale=1.2]{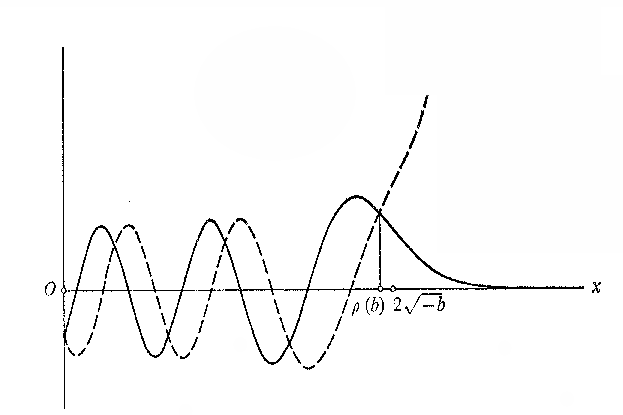}
\caption{An example of Parabolic Cylinder Functions $U(\cdot;b)$ (continuous) 
and $\ol U(\cdot;b)$ (dashed) for some $b<0$.}
\end{figure}

To express the errors for the approximations of our problem, we need
to define some auxiliary functions having to do with the nature of 
$U(\cdot,b)$ and $\ol U(\cdot,b)$ for negative $b$. In this case the 
character of each is partly oscillatory and partly exponential, so we 
introduce one weight function $\esf$, two modulus functions $\msf$ 
and $\nsf$, and finally two phase functions $\8$ and $\om$.

We denote by $\ro(b)$ the largest real root of the equation
\be\nn
U(x,b)=\ol U(x,b).
\ee
We know (cf. \S 13 of \cite{olver-et-al} and the references therein) 
that $\ro(0)=0$ and $\ro(b)>0$ for $b<0$. Also, $\ro$ is continuous 
when $b\in(-\infty,0]$. An asymptotic estimate for large negative 
$b$ is
\be\label{largest-root-pcf-asympt}
\ro(b)=2\sqrt{-b}+c_*(-b)^{-\frac{1}{6}}+
\asympt\big(b^{-\frac{5}{6}}\big)
\quad\text{as}\quad
b\to-\infty
\ee
where $c_*$ ($\approx-0.36605$) is the smallest in absolute value root 
of the equation $\Ai(x)=\Bi(x)$. 

For $b\leq0$ we define 
\begin{equation}\nn
\esf(x,b)=
\begin{cases}
1,\quad 0\leq x\leq\ro(b)\\
\Big[\frac{\ol U(x,b)}{U(x,b)}\Big]^{1/2},\quad x>\ro(b).
\end{cases}
\end{equation}
It is seen that $\esf$ is continuous in the region 
$[0,+\infty)\times(-\infty,0]$ of the $(x,b)$-plane and for $b\leq0$ 
fixed, $\esf(\cdot,b)$ is non-decreasing in the interval $[0,+\infty)$.
Again for $b\leq0$ and $x\geq0$ we set
\be\nn
U(x,b)=\frac{1}{\esf(x,b)}\msf(x,b)\sin\8(x,b),\quad 
\ol U(x,b)=\esf(x,b)\msf(x,b)\cos\8(x,b)
\ee
and
\be\nn
U'(x,b)=\frac{1}{\esf(x,b)}\nsf(x,b)\sin\om(x,b),\quad 
\ol U'(x,b)=\esf(x,b)\nsf(x,b)\cos\om(x,b).
\ee
Thus
\be\label{m-definition}
\msf(x,b)=
\begin{cases}
\big[U(x,b)^{2}+\ol{U}(x,b)^{2}\big]^{1/2},\quad 0\leq x\leq\ro(b)\\
\big[2U(x,b)\ol U(x,b)\big]^{1/2},\quad x>\ro(b)
\end{cases}
\ee
and
\be\nn
\theta(x,b)=
\begin{cases}
\arctan\Big[\frac{U(x,b)}{\ol U(x,b)}\Big],\quad 0\leq x\leq\ro(b)\\
\frac{\pi}{4},\quad x>\ro(b)
\end{cases}
\ee
where the branch of the inverse tangent is continuous and equal to
$\frac{\pi}{4}$ at $x=\ro(b)$. 

Similarly
\be\nn
\nsf(x,b)=
\begin{cases}
\Big[U'(x,b)^{2}+\ol{U}'(x,b)^{2}\Big]^{1/2},\quad 0\leq x\leq\ro(b)\\
 \\
\bigg[\frac{U'(x,b)^{2}\ol U(x,b)^2+
           \ol U'(x,b)^{2}U(x,b)^{2}}{U(x,b)\ol U(x,b)}\bigg]^{1/2},
           \quad x>\ro(b)
\end{cases}
\ee
and
\be\nn
\omega(x,b)=
\begin{cases}
\arctan\Big[\frac{U'(x,b)}{\ol{U}'(x,b)}\Big],\quad 0\leq x\leq\ro(b)\\
 \\
\arctan\Big[\frac{U'(x,b)\ol{U}(x,b)}{\ol{U}'(x,b)U(x,b)}\Big],
\quad \quad x>\ro(b)
\end{cases}
\ee
where the branches of the inverse tangents are chosen to be continuous and 
fixed by $\omega(x,b)\rightarrow-\frac{\pi}{4}$ as $x\rightarrow+\infty$.

For large $x$ we have
\be\nn
\esf(x,b)\sim\Big(\frac{2}{\pi}\Big)^{\frac{1}{4}}
\G(\tfrac{1}{2}-b)^{\frac{1}{2}}x^{b}e^{\frac{1}{4}x^2}
\ee
and
\be\label{M,N-asymptotics}
\msf(x,b)\sim\Big(\frac{8}{\pi}\Big)^{\frac{1}{4}}
\frac{\G(\tfrac{1}{2}-b)^{\frac{1}{2}}}{x^{\frac{1}{2}}},
\quad
\nsf(x,b)\sim
\frac{\G(\tfrac{1}{2}-b)^{\frac{1}{2}}}{(2\pi)^{\frac{1}{4}}}
x^{\frac{1}{2}}.
\ee
Both of these hold for fixed $b$ and are also uniform for $b$ ranging over
any compact interval in $(-\infty,0]$.

\section{A Theorem on Integral Equations}
\label{exist-proof}

The proofs of theorems about WKB approximation when there is an absence 
of turning points (like Theorems 2.1 and 2.2 in chapter 6 of
\cite{olver1997}), may be adapted to other types of 
approximate solutions of linear differential equations 
where turning points may be present. 
For second-order equations the basic steps consist of
\begin{itemize}
\item[(i)]
construction of a \textit{Volterra integral equation} for the error 
term -say $h$- of the solution, by the method of 
\textit{variation of parameters}
\item[(ii)]
construction of  the 
\textit{Liouville-Neumann expansion} (a uniformly convergent series) 
for the solution $h$
of the integral equation in (i) by \textit{Picard's method of successive 
approximations}
\item[(iii)]
confirmation that $h$ is twice differentiable by construction of 
similar series for $h'$ and $h''$
\item[(iv)]
production of bounds for $h$ and $h'$ by majoring the 
Liouville-Neumann expansion.
\end{itemize}

It would be tedious to carry out all these steps in every case. But
we have the following general theorem which automatically provides (ii), 
(iii) and (iv) in problems relevant to us.

\begin{theorem}\label{sing-int-eq}
\footnote{
This is Theorem 10.2 found in chapter 6 of \cite{olver1997}. It is a 
variant of Theorem 10.1 from the same reference.
}
Consider the equation
\begin{equation}\label{int_eq1}
h(\z)=\int_{\beta}^{\z}\mathsf{K}(\z,t)\phi(t)\{J(t)+h(t)\}dt
\end{equation}
for the function $h$ accompanied by the following assumptions
\begin{itemize}
\item
the \say{path} of integration consists of a segment $[\beta,\zeta]$ 
of the real axis, finite or infinite where $\beta\leq t\leq\z\leq\g$
\item
the real functions $J$ and $\phi$ are continuous in $(\beta,\g)$ 
except for a finite number of discontinuities and infinities
\item
the real kernel $\mathsf{K}$ and its first two partial derivatives 
with respect to $\z$ are continuous functions of both variables when 
$\z,t\in(\beta,\g)$
\item
$\mathsf{K}(\z,\z)=0,\quad\z\in(\beta,\g)$
\item
when $\z\in(\beta,\g)$ and $t\in(\beta,\z]$ we have
\begin{equation*}
|\mathsf{K}(\z,t)|\leq P_{0}(\z)Q(t),\qquad 
\Big|\frac{\partial\mathsf{K}(\z,t)}{\partial\z}\Big|\leq P_{1}(\z)Q(t),
\qquad\Big|\frac{\partial^{2}\mathsf{K}(\z,t)}{\partial\z^{2}}\Big|
\leq P_{2}(\z)Q(t)
\end{equation*}
where the $P_{j},j=0,1,2$ and $Q$ are continuous real functions, the 
$P_{j},j=0,1,2$ being positive.
\item
when $\z\in(\beta,\g)$, the integral
\begin{equation*}
\Phi(\z)=\int_{\beta}^{\z}|\phi(t)|dt
\end{equation*}
converges and the following suprema
\begin{equation*}
\kappa=\sup_{\z\in(\beta,\g)}\{Q(\z)|J(\z)|\},\qquad
\kappa_{0}=\sup_{\z\in(\beta,\g)}\{P_{0}(\z)Q(\z)\}
\end{equation*}
are finite.
\end{itemize}
Under these assumptions, equation (\ref{int_eq1}) has a unique solution 
$h$ which is continuously differentiable in $(\beta,\g)$
and satisfies
\begin{equation*}
\frac{h(\z)}{P_{0}(\z)}\to0\qquad\frac{h'(\z)}{P_{1}(\z)}
\to0\qquad\text{as}\quad\z\downarrow\beta.
\end{equation*}
Furthermore,
\begin{equation*}
\frac{|h(\z)|}{P_{0}(\z)},\frac{|h'(\z)|}{P_{1}(\z)}\leq
\frac{\kappa}{\kappa_{0}}[\exp\{\kappa_{0}\Phi(\z)\}-1]
\end{equation*}
and $h''$ is continuous except at the discontinuities -if any- of 
$\phi,J$.
\end{theorem}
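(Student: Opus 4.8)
Since the final statement is Olver's Theorem~10.2 from chapter~6 of \cite{olver1997}, a proof amounts to carrying out the classical Picard / Liouville--Neumann scheme sketched in items (i)--(iv) immediately above. The plan is as follows.

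\textbf{Successive approximations and the bound on $h$.} First I would set $h_0\equiv 0$ and define recursively
\[
h_{s+1}(\zeta)=\int_{\beta}^{\zeta}\mathsf{K}(\zeta,t)\,\phi(t)\,\{J(t)+h_s(t)\}\,dt ,\qquad s\geq 0 .
\]
Writing $\delta_s:=h_{s+1}-h_s$, one has $\delta_0(\zeta)=\int_{\beta}^{\zeta}\mathsf{K}(\zeta,t)\phi(t)J(t)\,dt$ and $\delta_{s+1}(\zeta)=\int_{\beta}^{\zeta}\mathsf{K}(\zeta,t)\phi(t)\delta_s(t)\,dt$. Using $|\mathsf{K}(\zeta,t)|\le P_0(\zeta)Q(t)$ together with $\kappa=\sup Q|J|$ and $\kappa_0=\sup P_0Q$, an induction in $s$ (integrating the factor $|\phi(t)|=\Phi'(t)$) gives
\[
\frac{|\delta_s(\zeta)|}{P_0(\zeta)}\le\frac{\kappa}{\kappa_0}\,\frac{\big(\kappa_0\Phi(\zeta)\big)^{s+1}}{(s+1)!},\qquad \Phi(\zeta)=\int_{\beta}^{\zeta}|\phi(t)|\,dt .
\]
Hence $\sum_s\delta_s$ converges, the convergence of $\sum_s\delta_s/P_0$ being uniform on compact subintervals of $(\beta,\g)$; so $h:=\lim_s h_s$ exists, is continuous, solves (\ref{int_eq1}), and summing the series yields $|h(\zeta)|/P_0(\zeta)\le(\kappa/\kappa_0)[\exp\{\kappa_0\Phi(\zeta)\}-1]$. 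The boundary behaviour $h(\zeta)/P_0(\zeta)\to 0$ as $\zeta\downarrow\beta$ is then immediate, since $\Phi(\zeta)\to 0$.

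\textbf{Differentiability.} Next I would differentiate the recursion under the integral sign. Because $\mathsf{K}(\zeta,\zeta)=0$ the Leibniz boundary term vanishes at the first differentiation, so $h_{s+1}'(\zeta)=\int_{\beta}^{\zeta}\partial_\zeta\mathsf{K}(\zeta,t)\phi(t)\{J(t)+h_s(t)\}\,dt$; the bound $|\partial_\zeta\mathsf{K}(\zeta,t)|\le P_1(\zeta)Q(t)$ yields the analogous estimate for $\delta_s'$ with $P_1$ in place of $P_0$. Thus $\sum_s\delta_s'$ converges locally uniformly and term-by-term differentiation is legitimate, so $h\in C^1(\beta,\g)$, $h'/P_1\to 0$ as $\zeta\downarrow\beta$, and the asserted bound on $h'$ holds. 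Differentiating once more with $|\partial_\zeta^2\mathsf{K}(\zeta,t)|\le P_2(\zeta)Q(t)$, this time the surviving boundary term $\partial_\zeta\mathsf{K}(\zeta,t)\big|_{t=\zeta}\,\phi(\zeta)\{J(\zeta)+h(\zeta)\}$ is continuous wherever $\phi$ and $J$ are, so $h''$ exists and is continuous except at the discontinuities of $\phi,J$.

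\textbf{Uniqueness and the main obstacle.} Finally, if $h_1,h_2$ both solve (\ref{int_eq1}), then $g=h_1-h_2$ satisfies the homogeneous equation $g(\zeta)=\int_{\beta}^{\zeta}\mathsf{K}(\zeta,t)\phi(t)g(t)\,dt$; feeding the same kernel estimate into this equation repeatedly shows $|g(\zeta)|/P_0(\zeta)\le\big(\sup_{[\beta,\zeta_0]}|g|/P_0\big)(\kappa_0\Phi(\zeta))^s/s!$ for all $s$ and $\zeta\le\zeta_0$, hence $g\equiv 0$ --- a Gronwall-type conclusion. The one point requiring genuine care throughout is the bookkeeping near the singularities of the data: $\phi$ and $J$ are only assumed piecewise continuous with \emph{infinities permitted}, so one must check at every step that pairing $|\mathsf{K}|$ (and its $\zeta$-derivatives) against the weight $Q(t)$, together with the finiteness of $\kappa$, $\kappa_0$ and the convergence of $\Phi(\zeta)$, genuinely absorbs these blow-ups, and that differentiation under the integral sign remains valid at a singularity of $\phi$ (it does, because $|\phi|$ is locally integrable). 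Everything else is the routine majorisation of the Liouville--Neumann series.
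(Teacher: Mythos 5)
Your proposal is correct and follows exactly the Liouville--Neumann/Picard successive-approximation scheme that the paper itself sketches in steps (i)--(iv) and then, for the proof proper, delegates to Olver (Theorem 10.1 of chapter 6 of \cite{olver1997}, of which this is a ``slight variation''); so it is essentially the same approach, merely written out. The only point to keep explicit is that the Gronwall-type uniqueness argument works within the class of solutions for which $|g|/P_{0}$ stays bounded as $\z\downarrow\beta$, which is the class in which the theorem is meant.
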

\begin{proof}
The proof is a slight variation of that for Theorem 10.1 of chapter 6 
in  \cite{olver1997}.
\end{proof}

We are going to use this theorem to prove the existence and behavior of
approximate solutions of the equation
\be\label{eq-append}
\frac{d^2\mathcal{Y}}{d\z^2}=
\big[\hb^{-2}(\z^2-\alpha^2)+\ps(\z,\hb,\al)\big]\mathcal{Y}.
\ee
We have the following
\btheo\label{thm-on-exist-int-eq}
For each value of $\hb$, assume that the function $\ps(\z,\hb,\al)$ is 
continuous in the region $[0,Z)\times[0,\delta]$ of the $(\z,\al)$-plane
\footnote{
Here $Z$ is always positive and may depend continuously on $\al$, or
be infinite. Also, $\delta$ is a positive finite constant.
}, take $\Omega$ as in (\ref{omega-asymptotics})
and consider that 
\be\nn
\var_{0,Z}[H](\al,\hb)=
\int_{0}^{Z}\frac{|\ps(t,\al)|}{\Om(t\sqrt{2\hb^{-1}})}dt
\ee
converges uniformly with 
respect to $\al$. Then in this region, equation 
(\ref{eq-append}) has 
solutions $\mathcal{Y}_1$ and $\mathcal{Y}_2$ which are continuous, have 
continuous first and 
second partial $\z$-derivatives and are given by
\bea\nn
\mathcal{Y}_1(\z,\al,\hb)=
U(\z\sqrt{2\hb^{-1}},-\tfrac{1}{2}\hb^{-1}\al^2)+
\epsilon_1 (\z,\al,\hb)\\
\nn
\mathcal{Y}_2(\z,\al,\hb)=
\ol U(\z\sqrt{2\hb^{-1}},-\tfrac{1}{2}\hb^{-1}\al^2)+
\epsilon_2 (\z,\al,\hb)
\eea
where
\begin{multline}\label{bound1}
\frac{|\epsilon_1 (\z,\al,\hb)|}
{\msf(\z\sqrt{2\hb^{-1}},-\tfrac{1}{2}\hb^{-1}\al^2)},
\frac{\Big|\frac{\partial \epsilon_1 }
{\partial\z}(\z,\al,\hb)\Big|}{\sqrt{2\hb^{-1}}
\nsf(\z\sqrt{2\hb^{-1}},-\tfrac{1}{2}\hb^{-1}\al^2)}\\
\leq
\frac{1}{\esf(\z\sqrt{2\hb^{-1}},-\tfrac{1}{2}\hb^{-1}\al^2)}
\Big(\exp\big\{\tfrac{1}{2}(\pi\hb)^{\frac{1}{2}}l(-\tfrac{1}{2}\hb^{-1}\al^2)
\mathcal{V}_{\z,Z}[H](\al,\hb)\big\}-1\Big)
\end{multline}
and
\begin{multline}\label{bound2}
\frac{|\epsilon_2 (\z,\al,\hb)|}
{\msf(\z\sqrt{2\hb^{-1}},-\tfrac{1}{2}\hb^{-1}\al^2)},
\frac{\Big|\frac{\partial \epsilon_2 }
{\partial\z}(\z,\al,\hb)\Big|}{\sqrt{2\hb^{-1}}
\nsf(\z\sqrt{2\hb^{-1}},-\tfrac{1}{2}\hb^{-1}\al^2)}\\
\leq
\esf(\z\sqrt{2\hb^{-1}},-\tfrac{1}{2}\hb^{-1}\al^2)
\Big(\exp\big\{\tfrac{1}{2}(\pi\hb)^{\frac{1}{2}}l(-\tfrac{1}{2}\hb^{-1}\al^2)
\mathcal{V}_{0,\z}[H](\al,\hb)\big\}-1\Big).
\end{multline}
\etheo
\begin{proof}
We will prove the theorem only for the first solution since the proof for
the second follows mutatis mutandis. Observe that the approximating function 
$U(\z\sqrt{2\hb^{-1}},-\tfrac{1}{2}\hb^{-1}\al^2)$ satisfies
$\frac{d^2U}{d\z^2}=\hb^{-2}(\z^2-\alpha^2)U$.  If we subtract this from
(\ref{eq-append}) we obtain the following differential equation for the 
error term
\be\nn
\frac{d^2\epsilon_1}{d\z^2}-\hb^{-2}(\z^2-\alpha^2)\epsilon_1=
\ps(\z,\al,\hb)\big[\epsilon_1+
U(\z\sqrt{2\hb^{-1}},-\tfrac{1}{2}\hb^{-1}\al^2)\big].
\ee
By use of the method of variation of parameters
and also  (\ref{wronskian-pcf}) one arrives at the integral equation
\be\nn
\epsilon_1 (\z,\al,\hb)=
\frac{1}{2}
\frac{(\pi\hb)^{\frac{1}{2}}}{\G\big(\frac{1}{2}+\frac{1}{2}\hb^{-1}\al^2\big)}
\int_{\z}^{Z}
\mathcal{K}(\z,t)
\ps(t,\al,\hb)\big[\epsilon_1(t,\al,\hb)+
U(t\sqrt{2\hb^{-1}},-\tfrac{1}{2}\hb^{-1}\al^2)\big]
dt
\ee
in which
\begin{multline*}
\mathcal{K}(\z,t)=
U(\z\sqrt{2\hb^{-1}},-\tfrac{1}{2}\hb^{-1}\al^2)
\ol U(t\sqrt{2\hb^{-1}},-\tfrac{1}{2}\hb^{-1}\al^2)\\
-U(t\sqrt{2\hb^{-1}},-\tfrac{1}{2}\hb^{-1}\al^2)
\ol U(\z\sqrt{2\hb^{-1}},-\tfrac{1}{2}\hb^{-1}\al^2).
\end{multline*}

Bounds for the \textit{kernel} $\mathcal{K}$ and its first two partial
derivatives (with respect to $\z$) are expressible in terms of the 
auxiliary functions $\esf, \msf$ and $\nsf$. We have
\begin{align*}
|\mathcal{K}(\z,t)| & \leq
\frac{\esf(t\sqrt{2\hb^{-1}},-\tfrac{1}{2}\hb^{-1}\al^2)}
{\esf(\z\sqrt{2\hb^{-1}},-\tfrac{1}{2}\hb^{-1}\al^2)}
\msf(\z\sqrt{2\hb^{-1}},-\tfrac{1}{2}\hb^{-1}\al^2)
\msf(t\sqrt{2\hb^{-1}},-\tfrac{1}{2}\hb^{-1}\al^2)\\
\bigg|\frac{\partial\mathcal{K}}{\partial\z}(\z,t)\bigg| & \leq
\sqrt{2\hb^{-1}}
\frac{\esf(t\sqrt{2\hb^{-1}},-\tfrac{1}{2}\hb^{-1}\al^2)}
{\esf(\z\sqrt{2\hb^{-1}},-\tfrac{1}{2}\hb^{-1}\al^2)}
\nsf(\z\sqrt{2\hb^{-1}},-\tfrac{1}{2}\hb^{-1}\al^2)
\msf(t\sqrt{2\hb^{-1}},-\tfrac{1}{2}\hb^{-1}\al^2)
\end{align*}
and similarly
\be\nn
\frac{\partial^{2}\mathcal{K}}{\partial\z^{2}}(\z,t)=
(2\hb^{-1})^{\frac{3}{2}}\z\mathsf{K}(\z,t).
\ee
All these estimates allow us to solve the equation (\ref{eq-append}) 
by applying Theorem \ref{int_eq1}. Using the notation of that theorem
we have
\begin{align*}
\phi(t) & =\frac{\ps(\z,\al,\hb)}{\Om(\z\sqrt{2\hb^{-1}})}\\
\psi_{1}(t) & =0\\
J(t) & =U(t\sqrt{2\hb^{-1}},-\tfrac{1}{2}\hb^{-1}\al^2)\\
\mathsf{K}(\z,t) & =-\frac{1}{2}
\frac{(\pi\hb)^{\frac{1}{2}}}{\G\big(\frac{1}{2}+\frac{1}{2}\hb^{-1}\al^2\big)}
\Omega(t\sqrt{2\hb^{-1}})\mathcal{K}(\z,t)\\
Q(t) & =
\frac{1}{2}
\frac{(\pi\hb)^{\frac{1}{2}}}{\G\big(\frac{1}{2}+\frac{1}{2}\hb^{-1}\al^2\big)}
\Omega(t\sqrt{2\hb^{-1}})
\esf(t\sqrt{2\hb^{-1}},-\tfrac{1}{2}\hb^{-1}\al^2)
\msf(t\sqrt{2\hb^{-1}},-\tfrac{1}{2}\hb^{-1}\al^2)\\
P_0(\z) & =
\frac
{\msf(\z\sqrt{2\hb^{-1}},-\tfrac{1}{2}\hb^{-1}\al^2)}
{\esf(\z\sqrt{2\hb^{-1}},-\tfrac{1}{2}\hb^{-1}\al^2)}\\
P_1(\z) & =\sqrt{2\hb^{-1}}
\frac
{\nsf(\z\sqrt{2\hb^{-1}},-\tfrac{1}{2}\hb^{-1}\al^2)}
{\esf(\z\sqrt{2\hb^{-1}},-\tfrac{1}{2}\hb^{-1}\al^2)}\\
\Phi(\z) & =\var_{0,\z}[H](\al,\hb)\\
\kappa_0 &\leq 
\frac{1}{2}(\pi\hb)^{\frac{1}{2}}l(-\tfrac{1}{2}\hb^{-1}\al^2)
\end{align*}
where the role of $\beta$ is played here by $Z$ and $\kappa$ is replaced 
for simplicity by the upper bound $\kappa_0$. Then the bounds 
(\ref{bound1}) and (\ref{bound2}) follow from Theorem \ref{int_eq1}. 

Finally, observe that all the integrals which
occur in the analysis above, converge uniformly when $\al\in[0,\delta]$
and $\z$ lies in any compact interval of $[0,Z)$; allowing us to state 
that $\epsilon_1$ and its first two partial $\z$-derivatives are 
continuous in $\al$ and $\z$. Consequently, the same stands 
for $\mathcal{Y}_1$ which signifies the end of the proof.
\end{proof}

\section*{Data Availability}

Data sharing is not applicable to this article as no new data were
created or analyzed in this study.

\section*{Acknowledgements} 

We are grateful to a referee for insisting  on the clarification of the 
results and proofs of section \S\ref{near-zero-evs}. The first author 
acknowledges the support of the Institute of Applied and Computational 
Mathematics of the Foundation of Research and Technology - Hellas 
(\href{https://www.forth.gr/}{FORTH}), via grant 
MIS 5002358. Also, the first author expresses his sincere gratitude to 
the Independent Power Transmission Operator 
(\href{https://www.admie.gr/en}{IPTO}) for a scholarship through the
\href{http://www.sse.uoc.gr/en/}{School of Sciences and Engineering} 
of the University of Crete.

\end{document}